\documentclass[journal]{IEEEtran}

\usepackage{amsmath}
\usepackage{amsfonts}
\usepackage{amssymb}
\usepackage{dsfont}
\usepackage{bm}
\usepackage{amsthm}
\usepackage{newlfont}
\usepackage{float}
\usepackage{hyperref}
\usepackage{enumerate}
\usepackage{chngcntr}
\usepackage{mathtools}
\usepackage{breqn}
\usepackage{stmaryrd}
\usepackage{cite}
\usepackage{flushend}

\usepackage{algorithm}
\usepackage{algorithmic}
\hypersetup{
    colorlinks=true, 
    linkcolor= blue, 
    citecolor=blue 
}
\allowdisplaybreaks

 \begin{document}

\newtheorem{thm}{Theorem} 
\newtheorem{lem}{Lemma}
\newtheorem{prop}{Proposition}
\newtheorem{cor}{Corollary}
\newtheorem{defn}{Definition}
\newtheorem{rem}{Remark}
\newtheorem{ex}{Example}
\newenvironment{example}[1][Example]{\begin{trivlist}
\item[\hskip \labelsep {\bfseries #1}]}{\end{trivlist}}
\renewcommand{\qedsymbol}{ \begin{tiny}$\blacksquare$ \end{tiny} }
\renewcommand{\leq}{\leqslant}
\renewcommand{\geq}{\geqslant}

\title {Empirical and Strong Coordination\\via Soft Covering with Polar Codes}

%
\author{R\'{e}mi A. Chou,~\IEEEmembership{Member,~IEEE}, Matthieu R. Bloch,~\IEEEmembership{Senior Member,~IEEE},\\ and J\"{o}rg Kliewer,~\IEEEmembership{Senior Member,~IEEE}
\thanks{
R. A. Chou is with the Department of Electrical Engineering and Computer Science, Wichita State University, Wichita, KS. M. R. Bloch is with the School~of~Electrical~and~Computer~Engineering,~Georgia~Institute~of~Technology, Atlanta,~GA and with GT-CNRS UMI 2958, Metz, France. J. Kliewer is with the Department of Electrical and Computer Engineering, New Jersey Institute of Technology, Newark, NJ. This work was supported in part by NSF grants CCF-1320304 and CCF-1440014.
} \thanks{
E-mail : remi.chou@wichita.edu; matthieu.bloch@ece.gatech.edu; jkliewer@njit.edu. 
Part of the results were presented at the 2015 IEEE International Symposium on Information Theory~\cite{Chou15}.}}
\maketitle
\begin{abstract}
We design polar codes for empirical coordination and strong coordination in two-node networks. \textcolor{black}{ Our  constructions hinge on the fact that polar codes enable explicit low-complexity schemes for soft covering. We leverage this property to propose \emph{explicit}  and low-complexity coding schemes that achieve the capacity regions of both empirical coordination and strong coordination for sequences of actions taking value in an alphabet of prime cardinality. Our results improve previously known polar coding schemes, which (i)~were restricted to uniform distributions and to actions obtained via binary symmetric channels for strong coordination, (ii)~required a non-negligible amount of common randomness for empirical coordination, and (iii) assumed that the simulation of discrete memoryless channels could be perfectly implemented. As a by-product of our results, we obtain a polar coding scheme that achieves channel resolvability for an arbitrary discrete memoryless channel whose input alphabet has prime cardinality.}
\end{abstract}

\begin{IEEEkeywords}
 Strong Coordination, Empirical Coordination, Resolvability,  Channel Resolvability, Soft Covering, Polar codes
\end{IEEEkeywords}

\section{Introduction}
\label{sec:introduction}

The characterization of the information-theoretic limits of coordination in networks has recently been investigated, for instance, in~\cite{soljanin2002compressing,Cuff10,cuff2013distributed,yassaee2015channel}. The coordinated actions of nodes in a network are modeled by joint probability distributions, and the level of coordination is measured in terms of how well these joint distributions approximate a target joint distribution with respect to the variational distance. Two types of coordination have been introduced: {empirical} coordination, which requires the one-dimensional empirical distribution of a sequence of $n$ actions to approach a target distribution, and strong coordination, which requires  the $n$-dimensional distribution of a sequence of $n$ actions to approach a target distribution. The concept of coordination sheds light into the fundamental limits of several problems, such as distributed control or task assignment in a network.
\textcolor{black}{Several extensions and applications have built upon the results of~\cite{Cuff10}, including channel simulation~\cite{cuff2013distributed,yassaee2015channel}, 
multiterminal settings for empirical coordination~\cite{bereyhi2013empirical} or strong coordination \cite{haddadpour2012coordination,bloch2014strong,vellambi2016strong}, empirical coordination for joint source-channel coding~\cite{treust2014empirical}, and coordination for power control~\cite{larrousse2015coordination}.} \textcolor{black}{Strong coordination also finds its origin in quantum information theory  with ``visible compression of mixed states''~\cite{soljanin2002compressing, dur2001visible, bennett1999entanglement}, \cite[Section 10.6]{hayashi2006quantum}.}


The design of practical and efficient coordination schemes approaching the fundamental limits predicted by information theory has, however, attracted little attention to date. One of the hurdles faced for code design is that the metric to optimize is not a probability of error but a variational distance between distributions. \textcolor{black}{Notable exceptions are~\cite{blasco12,Bloch12s}, which have proposed coding schemes based on polar codes~\cite{Arikan09,Arikan10} for a small subset of all  two-node network coordination problems. 
}

\textcolor{black}{
In this paper, we demonstrate how to solve the issue of coding for channel resolvability with polar codes. Building upon this result, we extend the constructions in\cite{blasco12,Bloch12s} to provide an explicit and low-complexity alternative to the information-theoretic proof in~\cite{Cuff10} for two-node networks. More specifically, the contributions of this paper are as follows:
\begin{itemize}
\item We propose an explicit polar coding scheme to achieve the channel resolvability of an arbitrary memoryless channel whose input alphabet has prime cardinality; low-complexity coding schemes have previously been proposed with polar codes in~\cite{Bloch12s}, invertible extractors in~\cite{chou2014res}, and injective group homomorphisms in \cite{hayashi2012secure} but are all restricted to symmetric channels. \textcolor{black}{Although \cite{amjad2015channel} has proposed  low-complexity linear coding schemes for arbitrary memoryless channels, the construction therein is non-explicit in the sense that only existence results are proved.}
\item We propose an explicit polar coding scheme that achieves the empirical coordination capacity region for actions from an alphabet of prime cardinality, when common randomness, whose rate vanishes to zero as the blocklength grows, is available at the nodes. This construction extends~\cite{blasco12}, which only deals with uniform distributions and requires a non negligible rate of common randomness available at the nodes.
\item We propose an explicit polar coding scheme that achieves the strong coordination capacity region for actions from an alphabet with prime cardinality. This generalizes~\cite{Bloch12s}, which only considers uniform distributions of actions obtained via a binary symmetric channel, and assumes that the simulation of discrete memoryless channels can be perfectly implemented.
\end{itemize}
}
Our proposed constructions are \emph{explicit} and handle  asymmetric settings through block-Markov encoding instead of relying, as in \cite{Honda13} and related works, on the existence of some maps or a non-negligible amount of shared randomness; we provide further discussion contrasting the present work with~\cite{Honda13} in Remark~\ref{remq}.

The coding mechanism underlying our coding schemes is ``soft covering," which refers to the approximation of output statistics using codebooks. In particular, soft covering with random codebooks has been used to study problems in information theory such as Wyner's common information~\cite{Wyner75c}, the resolvability of a channel \cite{Han93}, secrecy over wiretap channels~\cite{Hayashi06,hayashi2012secure,Csiszar1996,bloch2013strong,goldfeld2015semantic}, \textcolor{black} {secrecy over quantum wiretap channels~\cite{hayashi2015quantum}}, strong coordination~\cite{Cuff10}, channel synthesis~\cite{cuff2013distributed,yassaee2015channel}, covert and stealth communication \cite{hou2014effective,bloch2015covert}. 
In our coding schemes, soft covering with polar codes is obtained via the special case of soft covering over noiseless channels, together with an appropriate block-Markov encoding to ``recycle" common randomness.

\begin{rem} \label{remq}
\cite[Theorem 3]{Honda13} provides a polar coding scheme for asymmetric channels for which reliability holds on average over a random choice of the sequence of ``frozen bits." This proves the existence of a specific sequence of ``frozen bits" that ensures reliability. \cite[Section~III-A]{Honda13} also provides an \emph{explicit} construction, which, however, requires that encoder and decoder share a non-negligible amount of randomness as the blocklength grows.
To circumvent these issues, we use instead the technique of block-Markov encoding, which has been successfully applied to universal channel coding in~\cite{hassani2014universal}, to channel coding in \cite{Mondelli14}, and to Wyner-Ziv coding in~\cite{Chou14rev},~\cite{Chou15f}.
\end{rem}

\textcolor{black} {The idea of randomness recycling is closely related to recursive constructions of seeded extractors in the computer science literature, see for instance \cite{vadhan2012pseudorandomness}. Block-Markov encoding in polar coding schemes has first been used in \cite{hassani2014universal,Mondelli14}, for problems involving reliability constraints and requiring the  reconstruction of some random variables.} Unlike problems that only involve reliability constraints, an additional difficulty of block-Markov encoding for coordination is to ensure approximation of  the target distribution jointly over all encoding blocks, despite potential inter-block dependencies.

The remainder of the paper is organized as follows. Section~\ref{sec:notations} provides the notation, and Section~\ref{sec:ps} reviews the notion of resolvability and coordination. Section~\ref{sec:resolvability} demonstrates the ability of polar codes to achieve channel resolvability. For a two-node network, Sections~\ref{sec:EmpCord} and~\ref{sec:StrongCord} provide polar coding schemes that achieve the empirical coordination capacity region and the strong coordination capacity region, respectively. Finally, Section~\ref{secc} provides concluding remarks. 

\section{Notation} \label{sec:notations}
We let $\llbracket a,b \rrbracket$ be the set of integers between $\lfloor a \rfloor$ and~$\lceil b \rceil$. We denote the set of strictly positive natural numbers by~$\mathbb{N}^*$. For $n \in \mathbb{N}$, we let $G_n \triangleq  \left[ \begin{smallmatrix}
       1 & 0            \\[0.3em]
       1 & 1 
     \end{smallmatrix} \right]^{\otimes n} $ be the source polarization transform defined in~\cite{Arikan10}. The components of a vector $X^{1:N}$ of size $N$ are denoted with superscripts, i.e., $X^{1:N} \triangleq (X^1 , X^2, \ldots, X^{N})$. For any set $\mathcal{A} \subset \llbracket 1,N \rrbracket$, we let $X^{1:N}[\mathcal{A}]$ be the components of $X^{1:N}$ whose indices are in $\mathcal{A}$. 
For two probability distributions $p$ and $q$ defined over the same alphabet $\mathcal{X}$, \textcolor{black} {we define the variational distance between $p$ and $q$ as 
$$\mathbb{V}(p_X,q_X) \triangleq \sum_{x\in \mathcal{X}} |p(x) - q(x)| .$$ }
We let $\mathbb{D}(\cdot || \cdot)$ denote the Kullback-Leibler divergence between two distributions. If $p$, $q$ are two distributions over the finite alphabet $\mathcal{X}$, similar to \cite{kramerbook}, we use the convention $D(p||q) = + \infty$ if there exists $x\in\mathcal{X}$ such that $q(x) =0$ and $p(x)>0$. For joint probability distributions $p_{XY}$ and $q_{XY}$ defined over $\mathcal{X}\times \mathcal{Y}$, we write the conditional Kullback-Leibler divergence~as
\begin{align*}
\mathbb{E}_{p_X} \left[ \mathbb{D}(p_{Y|X}||q_{Y|X}) \right]  \triangleq \sum_{x\in \mathcal{X}} p_X(x)\mathbb{D}(p_{Y|X=x}||q_{Y|X=x}).
\end{align*} All $\log$s are taken with respect to base $2$. 
Unless specified otherwise, random variables are written in upper case letters, and particular realizations of a random variable are written in corresponding lower case letters. Finally, the indicator function is denoted by $\mathds{1}\{ \omega \}$, which is equal to $1$ if the predicate $\omega$ is true and $0$ otherwise.

\section{Review of resolvability and coordination} \label{sec:ps}
\textcolor{black}{We first review the notion of channel resolvability~\cite{Han93}, which plays a key role in the analysis of strong coordination. We then review the related notion of resolvability, which forms  a building block of our proposed coding schemes. Finally, we review the definition of empirical and strong coordination in two-node networks as introduced in \cite{Cuff10}. A common characteristic of these three problems is that they do not require a reconstruction algorithm to ensure a reliability condition, as, for instance, for source or channel coding, but require, instead a ``good" approximation of given probability distributions.
}

\subsection{Soft Covering and Channel Resolvability} \label{sec:psresolv}
\begin{figure}
\centering
  \includegraphics[width=5cm]{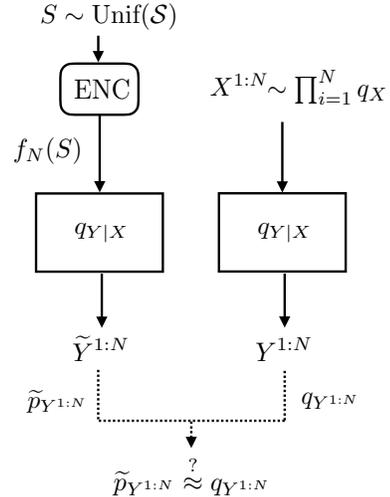}
  \caption{Description of the channel resolvability problem. $\textup{Unif}(\mathcal{S})$ denotes the uniform distribution over $\mathcal{S}$.}
  \label{fig:modelcres}
\end{figure}
\textcolor{black}{ 
 Soft covering is concerned with the approximation theory of output statistics \cite{Han93} and  first appeared in the analysis of the common information between random variables~\cite{Wyner75c}. 
 Consider a discrete memoryless channel $(\mathcal{X}, q_{Y|X}, \mathcal{Y})$ and a memoryless source $(\mathcal{X},q_X)$ with $\mathcal{X}$ and  $\mathcal{Y}$ representing finite alphabets. Define the target distribution $q_Y$ as the output of the channel when the input distribution is $q_X$ as
\begin{align}
\forall y \in \mathcal{Y}, q_Y (y) \triangleq \sum_{x \in \mathcal{X}} q_{Y|X}(y|x) q_X(x). \label{eqcresolvability}
\end{align}
As depicted in Figure \ref{fig:modelcres}, the encoder wishes to form a sequence with the least amount of randomness such that the sequence sent over the channel $(\mathcal{X}, q_{Y|X}, \mathcal{Y})$ produces an output $\widetilde{Y}^{1:N}$, whose distribution is close to $q_{Y^{1:N}} \triangleq \prod_{i=1}^Nq_{Y}$.
A formal definition is given as follows.
\begin{defn}
Consider a discrete memoryless channel $(\mathcal{X}, q_{Y|X}, \mathcal{Y})$. A $(2^{NR},N)$ soft covering code $\mathcal{C}_N$ consists of
\begin{itemize}
\item a randomization sequence $S$ uniformly distributed over $\mathcal{S} \triangleq \llbracket 1, 2^{NR} \rrbracket$;
\item an encoding function $f_N : \mathcal{S} \to \mathcal{X}^N$;
\end{itemize}
and operates as follows:
\begin{itemize}
\item the encoder forms $f_N(S)$;
\item the encoder transmits $f_N(S)$ over the channel $q_{Y^{1:N}|X^{1:N}} = \prod_{i=1}^N q_{Y|X}$.
\end{itemize}
\end{defn}
}
\begin{defn}
$R$ is an achievable soft covering rate for an input distribution $q_X$ if there exists a sequence of $(2^{NR},N)$ soft covering codes, $\{ \mathcal{C}_N \}_{N \geq 1}$, such that
$$
\lim_{N \to \infty } \mathbb{D} (\widetilde{p}_{Y^{1:N}}||q_{Y^{1:N}}) = 0,
$$
where $q_{Y^{1:N}} = \prod_{i=1}^N q_Y$ with $q_Y$ defined in \eqref{eqcresolvability} for the input distribution $q_X$ and $\forall y^{1:N} \in \mathcal{Y}^N,$
\begin{align*}
 \widetilde{p}_{Y^{1:N}} (y^{1:N}) 
 \triangleq \sum_{s \in \mathcal{S}} q_{Y^{1:N}|X^{1:N}}\left(y^{1:N}|f_N(s) \right) \frac{1}{|\mathcal{S}|}.
\end{align*}
\end{defn}
As summarized in Theorem \ref{th_Han}, the infimum of achievable rates for any input distribution $q_X$ is called channel resolvability and has been characterized in \cite{Han93,hayashi2012secure,cuff2013distributed,hou2014effective} for various metrics. 
\begin{thm}[Channel Resolvability] \label{th_Han}
Consider a discrete memoryless channel $W \triangleq (\mathcal{X}, q_{Y|X}, \mathcal{Y})$. The channel resolvability of $W$ is $\displaystyle\max_{q_X}I(X;Y)$, where $X$ and $Y$ have joint distribution $q_{Y|X}q_X$.
\end{thm}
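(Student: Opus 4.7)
The result is classical, and I would prove it in two parts: achievability via random coding (the soft-covering lemma), and converse via a standard entropy argument, combined with optimization over $q_X$.

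For achievability, I would fix $q_X$ and $R > I(X;Y)$, and draw a random codebook $\mathcal{C}_N = \{X^{1:N}(s)\}_{s\in \mathcal{S}}$ whose codewords are i.i.d.\ from $\prod_{i=1}^N q_X$, setting $f_N(s) \triangleq X^{1:N}(s)$. The central estimate is
\begin{align*}
\mathbb{E}_{\mathcal{C}_N}\bigl[\mathbb{D}(\widetilde{p}_{Y^{1:N}|\mathcal{C}_N}\|q_{Y^{1:N}})\bigr] \xrightarrow[N\to\infty]{} 0,
\end{align*}
proved via a Chernoff/log-MGF bound on the information density $\log\!\bigl(q_{Y|X}(Y|X)/q_Y(Y)\bigr)$ in the style of \cite{Han93,hayashi2012secure,cuff2013distributed,hou2014effective}; this gives exponential decay whenever $R > I(X;Y)$. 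The probabilistic method then produces a deterministic codebook achieving the bound, and taking the maximum over $q_X$ yields $\max_{q_X} I(X;Y)$ as an upper bound on the channel resolvability.

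For the converse, consider any sequence of codes with $\mathbb{D}(\widetilde{p}_{Y^{1:N}}\|q_{Y^{1:N}}) \to 0$. Since $X^{1:N}$ is a deterministic function of $S$,
\begin{align*}
NR \geq H(S) \geq I(X^{1:N};Y^{1:N}) = H(\widetilde{Y}^{1:N}) - \sum_{i=1}^N H(Y_i|X_i),
\end{align*}
using memorylessness of the channel. Pinsker's inequality together with continuity of entropy yields $H(\widetilde{Y}^{1:N}) \geq NH(Y) - o(N)$, and single-letterizing the conditional-entropy sum through the time-averaged input marginal $\overline{p}_X \triangleq \tfrac{1}{N}\sum_i \widetilde{p}_{X_i}$ gives $R \geq I(X;Y)$ in the limit. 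Maximizing over $q_X$ matches the achievability bound.

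The main obstacle I anticipate is the single-letterization step in the converse, since the per-coordinate marginals $\widetilde{p}_{X_i}$ of the optimal code need not equal $q_X$; one must first argue that $\overline{p}_X$ produces a $Y$-marginal close to $q_Y$, then exploit concavity and continuity of mutual information to pass back to $I(X;Y)$ at the chosen input. Achievability, by contrast, is routine once the cumulant/MGF bound is in place.
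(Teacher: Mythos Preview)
The paper does not actually prove Theorem~\ref{th_Han}: it is presented in Section~\ref{sec:psresolv} purely as background, with the proof deferred to \cite{Han93,hayashi2012secure,cuff2013distributed,hou2014effective}. What the paper contributes toward this theorem is a new \emph{constructive} achievability argument via polar codes in Section~\ref{sec:resolvability} (culminating in Theorem~\ref{ThChr}), and that argument is structurally quite different from yours. Rather than drawing a random i.i.d.\ codebook and bounding the moment generating function of the information density, the paper builds the channel input by random number generation with a polar transform (Lemma~\ref{lem0}, which costs rate $H(X)$ per block) and then recycles across $k$ blocks the portion of the randomness lying in $\mathcal{V}_{X|Y}$, which is nearly independent of the channel output (Lemmas~\ref{lem3}--\ref{lemAllb}); the net rate thereby drops to $I(X;Y)+H(X|Y)/k$. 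Your route gives the result in one shot with exponential decay; the paper's route yields an explicit $O(kN\log N)$ scheme. The paper offers no converse argument of its own.

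Your achievability sketch is fine. The converse, however, has a genuine gap at exactly the point you flag, and the fix you propose does not close it. From your chain one obtains $R \geq H(q_Y) - H_{\overline{p}_X}(Y|X) - o(1)$; since the coordinatewise output marginals are close to $q_Y$, this rearranges to $R \geq I_{\overline{p}_X}(X;Y) - o(1)$. But $\overline{p}_X$ is only constrained to produce an output near $q_Y$, and nothing prevents $I_{\overline{p}_X}(X;Y) < I_{q_X}(X;Y)$: for instance, if some input symbol $x_0$ already satisfies $q_{Y|X}(\cdot|x_0)=q_Y$, the encoder can send $x_0$ constantly and achieve rate zero regardless of the target $q_X$. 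Concavity and continuity of $p_X \mapsto I_{p_X}(X;Y)$ cannot recover this, because the relevant inequality points the wrong way. A correct converse for $\max_{q_X} I(X;Y)$ requires either the Han--Verd\'u connection between resolvability and channel (identification) capacity, or a careful argument using the saddle-point structure of the capacity-achieving output distribution; the bare single-letterization you outline does not suffice.
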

\subsection{Resolvability and Conditional Resolvability} \label{sec:sr}
\textcolor{black}{Resolvability corresponds to channel resolvability over a noiseless channel and thus characterizes the minimum rate of a uniformly distributed sequence required to simulate a source with given statistics. 
We review here the slightly more general notion of \emph{conditional  resolvability}~\cite{steinberg1996simulation}.
}


Consider a discrete memoryless source $(\mathcal{X}\times \mathcal{Y},q_{XY})$ with $\mathcal{X}$ and  $\mathcal{Y}$ finite alphabets. As depicted in Figure \ref{fig:modelsres}, given $N$~realizations of the memoryless source $(\mathcal{Y},q_{Y})$, the encoder wishes to form $\widetilde{X}^{1:N}$ with the minimal amount of randomness such that the joint distribution of $(\widetilde{X}^{1:N},{Y}^{1:N})$, denoted by $\widetilde{p}_{X^{1:N}Y^{1:N}}$, is close to $q_{X^{1:N}Y^{1:N}} \triangleq \prod_{i=1}^Nq_{XY}$. We refer to this setting as the conditional resolvability problem. Note that the traditional definition of resolvability~\cite{Han93} corresponds to $\mathcal{Y} = \emptyset$. A formal definition is as follows.
\begin{figure}
\centering
  \includegraphics[width=8cm]{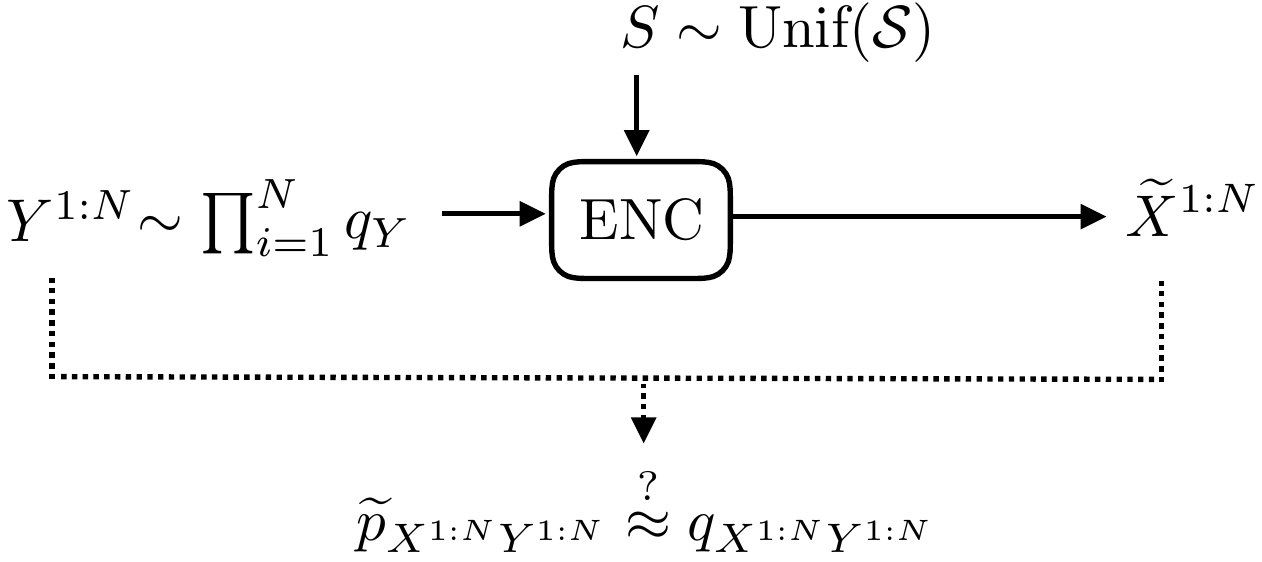}
  \caption{Illustration of the conditional resolvability problem. $\textup{Unif}(\mathcal{S})$ denotes the uniform distribution over $\mathcal{S}$.}
  \label{fig:modelsres}
\end{figure}
\begin{defn}  \label{def2}
A $(2^{NR},N)$ code $\mathcal{C}_N$ for a discrete memoryless source $(\mathcal{X}\times \mathcal{Y},q_{XY})$ consists of
\begin{itemize}
\item a randomization sequence $S$ uniformly distributed over $\mathcal{S}  \triangleq \llbracket 1, 2^{NR} \rrbracket$;
\item an encoding function $f_N : \mathcal{S} \times \mathcal{Y}^N\to \mathcal{X}^N$;
\end{itemize}
and operates as follows:
\begin{itemize}
\item the encoder observes $N$ realizations $Y^{1:N}$ of  the memoryless source $(\mathcal{Y},q_{Y})$;
\item the encoder forms $ \widetilde{X}^{1:N} \triangleq f_N(S,Y^{1:N})$.
\end{itemize}
The joint distribution of $(\widetilde{X}^{1:N},Y^{1:N})$ is denoted by $\widetilde{p}_{X^{1:N}Y^{1:N}}$.
\end{defn}
\begin{defn} \label{def1}
$R$ is an achievable conditional resolution rate for a discrete memoryless source $(\mathcal{X}\times \mathcal{Y},q_{XY})$ if there exists a sequence of $(2^{NR},N)$ codes, $\{ \mathcal{C}_N \}_{N \geq 1}$ such that
$$
\lim_{N \to \infty } \mathbb{D} (\widetilde{p}_{X^{1:N}Y^{1:N}}||q_{X^{1:N}Y^{1:N}}) = 0,
$$
 with $\widetilde{p}_{X^{1:N}Y^{1:N}}$ the joint  probability distribution of the encoder output $\widetilde{X}^{1:N}$, and where $Y^{1:N}$ is available at the encoder.
\end{defn}
%
%
The infimum of such achievable rates is called the conditional resolvability and is characterized as follows~\cite{steinberg1996simulation,cuff2013distributed}.
\begin{thm}[Conditional Resolvability] \label{thsr}
The conditional resolvability of a discrete memoryless source $(\mathcal{X}\times \mathcal{Y},q_{XY})$ is $H(X|Y)$.
\end{thm}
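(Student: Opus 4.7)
The plan is to prove both directions of the theorem: achievability of every $R > H(X|Y)$ via a random-coding / soft-covering argument, and the converse via an entropy bound combined with a cross-entropy continuity argument. I expect the main subtlety to be the non-stationarity of the effective target source once $Y^{1:N}$ is conditioned on, which forces a split of $\mathcal{Y}^N$ into typical and atypical sequences.

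For achievability, fix $R > H(X|Y)$. For each $(s, y^{1:N}) \in \mathcal{S} \times \mathcal{Y}^N$, independently draw a codeword $X^{1:N}(s, y^{1:N})$ according to $\prod_{i=1}^N q_{X|Y}(\cdot|y_i)$. The random encoder observes $Y^{1:N}$, samples $S$ uniformly on $\mathcal{S}$ independently of $Y^{1:N}$, and outputs $\widetilde{X}^{1:N} \triangleq X^{1:N}(S, Y^{1:N})$. Since $Y^{1:N}$ is exogenous with $\widetilde{p}_{Y^{1:N}} = q_{Y^{1:N}}$, the chain rule for Kullback--Leibler divergence gives
\begin{align*}
\mathbb{D}(\widetilde{p}_{X^{1:N}Y^{1:N}}\|q_{X^{1:N}Y^{1:N}}) = \mathbb{E}_{q_{Y^{1:N}}}\!\bigl[\mathbb{D}(\widetilde{p}_{X^{1:N}|Y^{1:N}}\|q_{X^{1:N}|Y^{1:N}})\bigr].
\end{align*}
For each fixed $y^{1:N}$ the inner term is exactly the soft-covering error of the product (non-stationary) source $\prod_i q_{X|Y}(\cdot|y_i)$, whose per-letter entropy $\frac{1}{N}\sum_i H(X|Y=y_i)$ concentrates on $H(X|Y)$ by the law of large numbers. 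A soft-covering bound (adapting the proof of Theorem~\ref{th_Han} to non-stationary product sources) shows that the expected inner divergence vanishes exponentially in $N$ for typical $y^{1:N}$; splitting the outer expectation into typical and atypical contributions, and invoking the existence of a deterministic codebook attaining the ensemble average, yields $\mathbb{D}(\widetilde{p}_{X^{1:N}Y^{1:N}}\|q_{X^{1:N}Y^{1:N}}) \to 0$.

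For the converse, suppose $R$ is achievable and set $\epsilon_N \triangleq \mathbb{D}(\widetilde{p}_{X^{1:N}Y^{1:N}}\|q_{X^{1:N}Y^{1:N}}) \to 0$. Because $\widetilde{X}^{1:N} = f_N(S, Y^{1:N})$ and $S$ is independent of $Y^{1:N}$, $NR \geq H(S) \geq H_{\widetilde{p}}(\widetilde{X}^{1:N}|Y^{1:N})$. Rearranging the expansion of the conditional divergence yields
\begin{align*}
H_{\widetilde{p}}(\widetilde{X}^{1:N}|Y^{1:N}) = -\mathbb{E}_{\widetilde{p}}\!\bigl[\log q_{X^{1:N}|Y^{1:N}}(\widetilde{X}^{1:N}|Y^{1:N})\bigr] - \epsilon_N.
\end{align*}
Finiteness of $\epsilon_N$ forces $\widetilde{p}$ onto the support of $q$, on which $|\log q_{X^{1:N}|Y^{1:N}}| \leq N \log(1/q_{\min})$ with $q_{\min}$ the smallest positive value of $q_{X|Y}$. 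Hence the cross-entropy differs from $\mathbb{E}_q[-\log q_{X^{1:N}|Y^{1:N}}] = NH(X|Y)$ by at most $\mathbb{V}(\widetilde{p},q)\cdot N\log(1/q_{\min})$, and Pinsker's inequality gives $\mathbb{V}(\widetilde{p},q) \leq \sqrt{2\epsilon_N} \to 0$. Dividing through by $N$ and letting $N \to \infty$ yields $R \geq H(X|Y)$, completing the converse. The most delicate point, as flagged, is not the converse but the uniformity of the soft-covering estimate across $y^{1:N}$ in achievability, handled by the typical/atypical partition together with a crude linear-in-$N$ bound on $|\log q|$ over the joint support for the atypical part.
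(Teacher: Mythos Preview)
The paper does not prove Theorem~\ref{thsr}; it is quoted as a known result with citations to \cite{steinberg1996simulation,cuff2013distributed}, so there is no ``paper's own proof'' to compare against.

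Your sketch is a sound self-contained route. The converse is correct as written: $NR\geq H(S)\geq H_{\widetilde p}(\widetilde X^{1:N}\mid Y^{1:N})$ uses only that $\widetilde X^{1:N}=f_N(S,Y^{1:N})$, and the cross-entropy identity together with the bound $|\log q_{X^{1:N}|Y^{1:N}}|\leq N\log(1/q_{\min})$ on the support of $q$ plus Pinsker's inequality gives $H_{\widetilde p}(\widetilde X^{1:N}\mid Y^{1:N})\geq NH(X|Y)-o(N)$ as you claim. The achievability is also correct in outline: for fixed $y^{1:N}$ the problem is resolvability of the non-stationary product source $\prod_i q_{X|Y}(\cdot\mid y_i)$ of entropy $\sum_i H(X|Y=y_i)$, and the typical/atypical split---with the crude bound $\mathbb{D}(\widetilde p_{X|Y=y}\|q_{X|Y=y})\leq N\log(1/q_{\min})$ on the atypical set of exponentially small $q_{Y^{1:N}}$-probability---is exactly the device needed to turn the per-$y^{1:N}$ soft-covering estimate into vanishing of the averaged divergence, after which derandomization selects a fixed $f_N$.

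As a side remark, the paper's own polar constructions do furnish a \emph{constructive} achievability of conditional resolvability (for prime input alphabet): the per-block encoder of Section~\ref{sec:EmpCord} uses uniform randomness on $\mathcal V_{Y|X}$ of rate $|\mathcal V_{Y|X}|/N\to H(Y|X)$ and, by the argument of Lemma~\ref{lemstrongcoord} (cf.\ the remark following it), achieves $\mathbb D(q\|\widetilde p)\leq N\delta_N$. That is a different, explicit route to the achievability direction, but Theorem~\ref{thsr} itself is simply cited rather than proved in the paper.
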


\begin{rem}
In \cite{steinberg1996simulation}, the conditional  resolvability is described as the minimum randomness required to approximate a target conditional distribution representing a channel given a fixed input process. We prefer to approach conditional resolvability as an extension of resolvability since the corresponding interpretation in terms of random number generation \cite{Han93} and the special case $\mathcal{Y}=\emptyset$ seem more natural in the context of our proofs. 
\end{rem}

The operation described in Definitions \ref{def2} and \ref{def1} may be viewed as performing soft covering over noiseless channels. Codes achieving conditional resolvability will be the main building block of our coding schemes to emulate soft covering over noisy channels. 

\subsection{Coordination} \label{secdefcoord}
\begin{figure}
\centering
  \includegraphics[width=7.8cm]{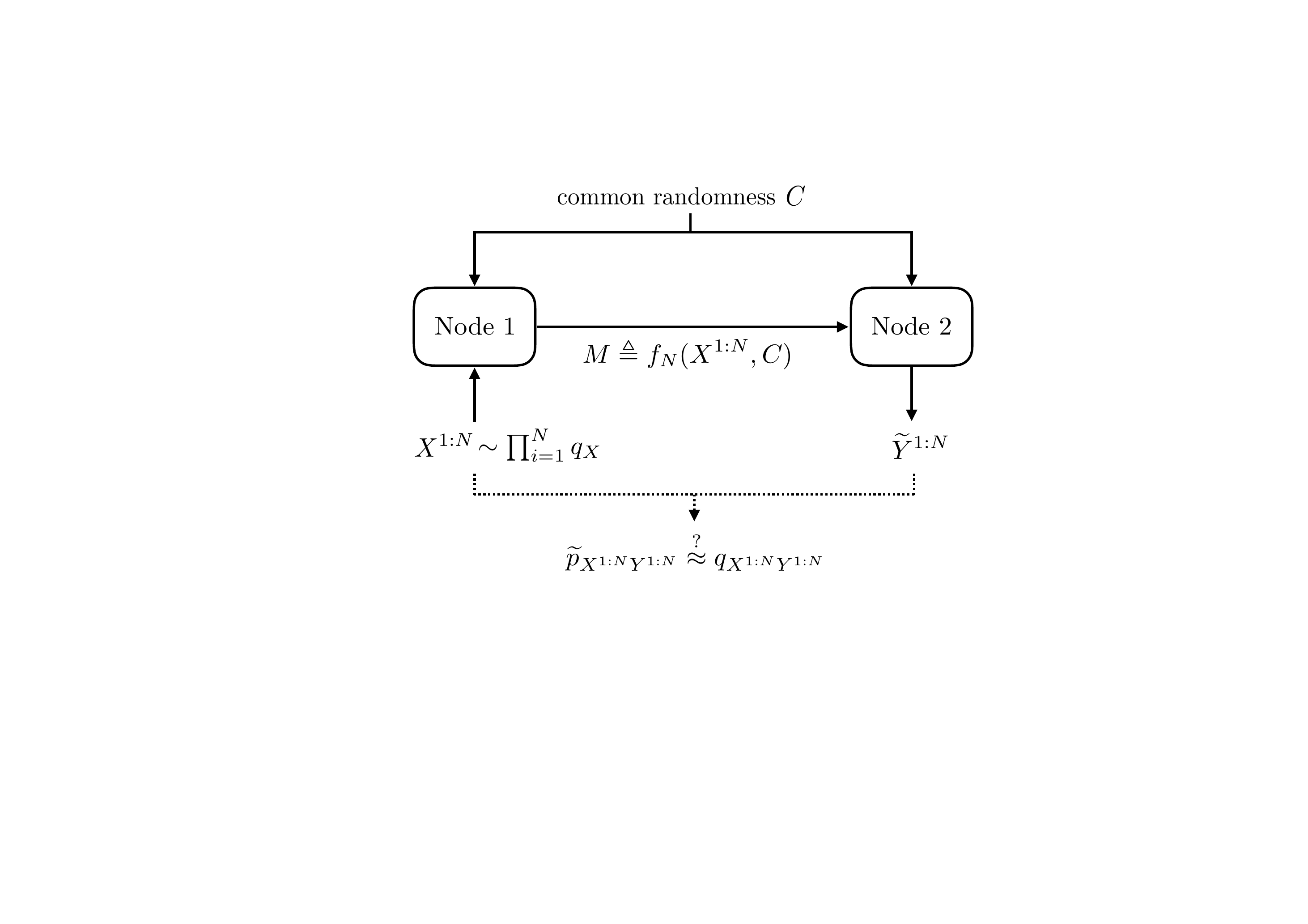}
  \caption{Coordination setup for a two-node network.}
  \label{fig:model}
\end{figure}


\textcolor{black}{Consider a memoryless source $(\mathcal{X} \mathcal{Y}, q_{XY})$  with $\mathcal{X}$ and  $\mathcal{Y}$ finite alphabets, and two nodes, Node~$1$ and Node $2$. \textcolor{black}{As depicted in Figure \ref{fig:model}, Node 1 observes a sequence of actions $X^{1:N}$ and sends a message $M$ over a noiseless channel to Node 2. $M$ must be constructed such that from $M$ and some randomness $C$, pre-shared with Node~1, Node 2 can produce $\widetilde{Y}^{1:N}$ such that the joint distribution of $({X}^{1:N},\widetilde{Y}^{1:N})$, denoted by $\widetilde{p}_{X^{1:N}Y^{1:N}}$, is close to $q_{Y^{1:N}} \triangleq \prod_{i=1}^Nq_{XY}$. A formal definition is as follows.}
 \begin{defn}
A $(2^{NR},2^{NR_0},N)$ coordination code $\mathcal{C}_N$ for a fixed joint distribution $q_{XY}$ consists of
\begin{itemize}
\item common randomness $C$ with rate $R_0$ shared by Node 1 and Node 2;
\item an encoding function $f_N : \mathcal{X}^N \times \llbracket 1 , 2^{NR_0}\rrbracket \to \llbracket 1 , 2^{NR} \rrbracket$ at Node 1;
\item a decoding function $g_N :  \llbracket 1 , 2^{NR} \rrbracket \times \llbracket 1 , 2^{NR_0}\rrbracket \to \mathcal{Y}^N$ at Node 2,
\end{itemize}
and operates as follows:
\begin{itemize}
\item Node $1$ observes $X^{1:N}$, $N$ independent realizations of $(\mathcal{X}, q_{X})$;
\item Node $1$ transmits $f_N(X^{1:N},C)$ to Node $2$;
\item Node $2$ forms $\widetilde{Y}^{1:N} \triangleq g_N (f_N(X^{1:N},C),C)$, whose joint distribution with $X^{1:N}$ is denoted by $\widetilde{p}_{X^{1:N}Y^{1:N}}$.
\end{itemize}
\end{defn}
The notion of empirical and strong coordination are then defined as follows.
}

\begin{defn}
A rate pair $(R,R_0)$ for a fixed joint distribution $q_{XY}$ is achievable for empirical coordination if there exists a sequence of $(2^{NR},2^{NR_0},N)$ coordination codes $\{ \mathcal{C}_N \}_{N \geq 1}$ such that  for $\epsilon >0$
$$
 \lim_{N \to \infty} \mathbb{P} [ \mathbb{V} \left( q_{XY}  , T_{{X}^{1:N}\widetilde{Y}^{1:N}} \right) > \epsilon ] =0,
$$
where for a sequence $({x}^{1:N},\widetilde{y}^{1:N})$ generated at Nodes~1, 2, and for $(x,y) \in \mathcal{X} \times \mathcal{Y} $, 
$$T_{{x}^{1:N}\widetilde{y}^{1:N}}(x,y) \triangleq  \frac{1}{N} \sum_{i=1}^N \mathds{1} \{( {x}^i,\widetilde{y}^i) = (x,y) \},$$  is the joint histogram of $({x}^{1:N},\widetilde{y}^{1:N})$. The closure of the set of achievable rates is called the empirical coordination capacity region.
\end{defn}

\begin{defn}
A rate pair $(R,R_0)$ for a fixed joint distribution $q_{XY}$ is achievable for strong coordination if there exists a sequence of $(2^{NR},2^{NR_0},N)$ coordination codes, $\{ \mathcal{C}_N \}_{N \geq 1}$ such that 
$$
\lim_{N \to \infty}\mathbb{V}(\widetilde{p}_{X^{1:N}Y^{1:N}}, q_{X^{1:N}Y^{1:N}}) = 0.
$$
The closure of the set of achievable rate pairs is called the strong coordination capacity region.
\end{defn}

The capacity regions for empirical coordination and strong coordination have been fully characterized in~\cite{Cuff10}.
\begin{thm}[\!\!\cite{Cuff10}] \label{Th1}
The empirical coordination capacity region is
\vspace*{-1em}
\begin{equation*}
\mathcal{R}_{\textup{EC}}(q_{XY}) \triangleq \{ (R,R_0) : R \geq I(X;Y), \text{ }R_0\geq 0 \}.
\end{equation*}
\end{thm}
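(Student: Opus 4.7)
The statement decomposes into achievability---every $(R,R_0)$ with $R \geq I(X;Y)$ and $R_0 \geq 0$ is achievable---and the matching converse $R \geq I(X;Y)$. Since the region imposes no lower bound on $R_0$, I would prove achievability with $R_0 = 0$ by a standard random-coding/covering-lemma argument: generate a codebook $\{y^{1:N}(m)\}_{m=1}^{2^{NR}}$ with each codeword drawn i.i.d.\ from $q_Y$, let the encoder send an index $m$ such that $(X^{1:N}, y^{1:N}(m))$ is jointly $\epsilon$-typical under $q_{XY}$ (and $m=1$ otherwise), and let the decoder output $\widetilde{Y}^{1:N} \triangleq y^{1:N}(m)$. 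The covering lemma ensures that the probability of the no-jointly-typical-index event vanishes whenever $R > I(X;Y) + \delta(\epsilon)$. On the complementary event, the joint type $T_{X^{1:N}\widetilde{Y}^{1:N}}$ lies within $\epsilon$ of $q_{XY}$ in variational distance, so after averaging over the codebook one gets $\mathbb{P}[\mathbb{V}(q_{XY}, T_{X^{1:N}\widetilde{Y}^{1:N}}) > \epsilon] \to 0$. Letting $R \downarrow I(X;Y)$ then closes the achievable region.

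For the converse, suppose $(R,R_0)$ is achievable with codes $\{\mathcal{C}_N\}_{N \geq 1}$. The plan is to lower bound $NR$ by a quantity that tends to $N I(X;Y)$. First, I would write
\begin{align*}
NR &\geq H(M) \geq H(M \mid C) \geq I(M; X^{1:N} \mid C) \\
   &= I(M, C; X^{1:N}) \geq I(\widetilde{Y}^{1:N}; X^{1:N}),
\end{align*}
where the equality uses the independence of $C$ and $X^{1:N}$, and the last inequality uses that $\widetilde{Y}^{1:N}$ is a deterministic function of $(M,C)$. Because $X^{1:N}$ is i.i.d.\ under $q_X$,
\begin{align*}
I(\widetilde{Y}^{1:N}; X^{1:N})
= \sum_{i=1}^N H(X_i) - H(X^{1:N} \mid \widetilde{Y}^{1:N})
\geq \sum_{i=1}^N I(X_i; \widetilde{Y}_i).
\end{align*}
Introducing an independent time-sharing index $J$ uniform over $\llbracket 1, N \rrbracket$, the average of the right-hand side equals $I(X_J; \widetilde{Y}_J \mid J) \geq I(X_J; \widetilde{Y}_J)$, where the inequality uses $H(X_J \mid J) = H(X_J) = H(X)$ together with the fact that conditioning reduces entropy.

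The remaining step---which I expect to be the main obstacle---is to show $I(X_J; \widetilde{Y}_J) \to I(X;Y)$ computed under $q_{XY}$. The empirical coordination hypothesis gives $T_{X^{1:N}\widetilde{Y}^{1:N}} \to q_{XY}$ in probability in variational distance; since $\mathbb{V}$ is uniformly bounded and $\mathbb{E}[T_{X^{1:N}\widetilde{Y}^{1:N}}]$ coincides with the law of $(X_J, \widetilde{Y}_J)$, this upgrades to convergence of the joint law of $(X_J, \widetilde{Y}_J)$ to $q_{XY}$ in $\mathbb{V}$. On the finite alphabet $\mathcal{X} \times \mathcal{Y}$, the mutual information functional is uniformly continuous with respect to $\mathbb{V}$ (via a Csisz\'ar-type inequality relating $|H(p)-H(q)|$ to $\mathbb{V}(p,q)\log|\mathcal{X}|$), so $I(X_J;\widetilde{Y}_J) \to I(X;Y)$, whence $R \geq I(X;Y)$. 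The bound $R_0 \geq 0$ is immediate from the definition. The delicate point is the interplay between empirical convergence of a random histogram, the deterministic single-letter bounds that rely only on the $X_i$ marginals, and the continuity of the mutual information functional, which together allow one to push the empirical guarantee through to a statement about expected mutual informations.
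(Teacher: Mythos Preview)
Your argument is correct: the covering-lemma achievability with $R_0=0$ and the single-letterization converse via a time-sharing variable $J$ are exactly the classical proof from \cite{Cuff10}, and your handling of the continuity step---passing from convergence in probability of the empirical type to convergence of the law of $(X_J,\widetilde Y_J)$ via $\mathbb{E}[T_{X^{1:N}\widetilde Y^{1:N}}]$ and then invoking uniform continuity of mutual information on a finite alphabet---is sound.

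Note, however, that the paper does \emph{not} give its own proof of this theorem; it is simply quoted from \cite{Cuff10} as background. What the paper contributes is a \emph{different} achievability argument (Theorem~\ref{ThEC}) based on an explicit polar construction: rather than random coding and joint typicality, Node~1 performs a conditional-resolvability random number generation to produce $\widetilde U_i^{1:N}$ whose joint law with $X_i^{1:N}$ is close to $q_{X^{1:N}Y^{1:N}}$ in Kullback--Leibler divergence (Lemma~\ref{lemstrongcoord}), and common randomness of rate $H(Y|X)/k$ is recycled across $k$ blocks via block-Markov encoding so that its rate vanishes. The paper's approach buys an explicit, $O(kN\log N)$-complexity scheme, but it requires a vanishing (rather than zero) common-randomness rate and does not supply a converse. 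Your approach, by contrast, is non-constructive but handles both directions and needs no common randomness at all.
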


\begin{thm}[\!\!\cite{Cuff10}] \label{th_c}
The strong coordination capacity region is
\begin{align*}
&\mathcal{R}_{\textup{SC}}(q_{XY}) \\ 
& \triangleq \!\!\!\!\!\! \bigcup_{ \substack{X \to \ V \to Y \\ |\mathcal{V}|\leq |\mathcal{X}||\mathcal{Y}| +1}} \!\!\! \{ (R,R_0) : R +R_0 \geq I(XY;V), \!\text{ } R \geq I(X;V) \}.
\end{align*}
\end{thm}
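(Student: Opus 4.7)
The plan is to prove achievability and converse separately, following the style of \cite{Cuff10}. For achievability, I would fix a conditional distribution $q_{V|X}$ realizing the Markov chain $X\to V\to Y$ and the joint law $q_{XVY}=q_X q_{V|X} q_{Y|V}$, then use a random coding argument driven by soft covering (Theorem~\ref{th_Han}). Generate a codebook $\{V^{1:N}(m,c): m\in\llbracket 1,2^{NR}\rrbracket,\, c\in\llbracket 1,2^{NR_0}\rrbracket\}$ with each codeword drawn i.i.d.\ from $\prod_{i=1}^N q_V$. Node~1, upon observing $X^{1:N}$ and sharing $C$, chooses $M$ via a likelihood encoder that selects $m$ with probability proportional to $\prod_{i=1}^N q_{V|X}(V^i(m,C)\,|\,X^i)$. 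Node~2 reconstructs $V^{1:N}(M,C)$ and passes it through the memoryless channel $q_{Y|V}$ to form $\widetilde Y^{1:N}$.

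I would then bound the variational distance between the induced joint distribution on $(X^{1:N},V^{1:N},Y^{1:N})$ and the target $\prod q_{XVY}$ by invoking two soft-covering estimates. First, the codebook viewed from Node~2's position (without knowledge of $M$ or $X^{1:N}$) must produce $V^{1:N}$ with marginal close to $\prod q_V$ jointly with the pair $(X^{1:N},Y^{1:N})$, which holds whenever $R+R_0\geq I(V;XY)$. Second, conditioned on $C$, the likelihood encoder yields an output statistic close to $\prod q_{XV}$ whenever $R\geq I(X;V)$, which is the familiar channel-resolvability condition applied to the ``test channel'' $q_{X|V}$ of the joint distribution. Composing these two bounds with the triangle inequality and pushing $V^{1:N}$ through $q_{Y|V}$ yields $\mathbb{V}(\widetilde p_{X^{1:N}Y^{1:N}},q_{X^{1:N}Y^{1:N}})\to 0$, and standard derandomization produces a deterministic code.

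For the converse, I would start from any sequence of codes satisfying $\mathbb{V}(\widetilde p_{X^{1:N}Y^{1:N}},q_{X^{1:N}Y^{1:N}})\to 0$, introduce a time-sharing variable $T$ uniform on $\llbracket 1,N\rrbracket$ independent of everything, and define the single-letter auxiliary $V\triangleq(M,C,X^{1:T-1},Y^{T+1:N},T)$. The Markov chain $X_T\to V\to Y_T$ is immediate from the code structure. Starting from $N(R+R_0)\geq H(M,C)\geq I(X^{1:N}Y^{1:N};MC)$ and applying the chain rule together with continuity of entropy/mutual information under small variational distance (Csisz\'ar--K\"orner's lemma or Pinsker combined with the fact that $\widetilde p_{X^{1:N}Y^{1:N}}$ is close to a product distribution) gives $R+R_0\geq I(XY;V)-o(1)$; an analogous manipulation of $NR\geq H(M)\geq I(X^{1:N};M|C)$ yields $R\geq I(X;V)-o(1)$. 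The cardinality bound $|\mathcal{V}|\leq|\mathcal{X}||\mathcal{Y}|+1$ follows by applying the Carath\'eodory--Fenchel--Eggleston theorem to the convex hull of admissible $(q_{XY|V=v},I(X;V),I(XY;V))$ tuples.

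The main obstacle, in my view, is the continuity step in the converse: one must translate a vanishing variational distance between $N$-letter distributions into vanishing additive gaps in \emph{single-letter} information quantities, uniformly in $N$. This requires careful use of the continuity of entropy on finite alphabets together with the fact that the true target $q_{X^{1:N}Y^{1:N}}$ is i.i.d., so that one can replace $\widetilde p$ by $q$ inside entropy expressions at negligible cost. A secondary subtlety on the achievability side is that the two soft-covering bounds must be controlled \emph{on the same} random codebook, which is handled by a union bound in expectation over the codebook but deserves explicit verification.
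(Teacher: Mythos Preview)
The paper does not give its own proof of Theorem~\ref{th_c}; the theorem is quoted directly from~\cite{Cuff10} as background and no argument is supplied beyond the citation. The paper's own contribution is the \emph{constructive} achievability via polar codes in Section~\ref{sec:StrongCord} (culminating in Theorem~\ref{ThSC}), which matches the region of Theorem~\ref{th_c} but is not presented as a proof of that theorem. So there is no in-paper proof to compare your proposal against.

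On its own merits, your sketch follows the standard route in~\cite{Cuff10,cuff2013distributed}: likelihood encoding plus two soft-covering conditions for achievability, and single-letterization with a time-sharing auxiliary for the converse. Two remarks. First, your choice $V=(M,C,X^{1:T-1},Y^{T+1:N},T)$ does yield the Markov chain $X_T\to V\to Y_T$ (it follows from $X^{1:N}\to(M,C)\to Y^{1:N}$ and the fact that conditioning a pair of conditionally independent blocks on sub-blocks from each side preserves conditional independence), but the chain-rule manipulations you cite for $R+R_0\geq I(XY;V)$ and $R\geq I(X;V)$ do not line up directly with this particular auxiliary; Cuff's original converse uses the simpler identification $V=(M,C,T)$, for which the single-letterization is more direct since the Csisz\'ar-sum trick is not actually needed here. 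Second, your ``continuity step'' concern is the right one to flag: the converse in~\cite{Cuff10} handles it by working under $\widetilde p$ throughout and invoking the near-i.i.d.\ structure only at the end via a uniform-continuity-of-entropy argument, exactly as you outline.
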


%

\section{Polar Coding for Channel Resolvability} \label{sec:resolvability}
\textcolor{black}{We first develop an explicit and low-complexity coding scheme to achieve channel resolvability. The key ideas that will be reused in our coding scheme for empirical and strong coordination are (i)~resolvability-achieving random number generation and (ii) randomness recycling through block-Markov  encoding.} %

Informally, our coding scheme operates over $k \in \mathbb{N}^*$ encoding blocks of length $N \triangleq 2^n$, $n \in \mathbb{N}^*$ as follows. In the first block, using a rate $H(X)$ of randomness, we generate a random variable whose distribution is close to $q_{X^{1:N}}$. When the produced random variable is sent over the channel $q_{Y|X}$, the channel output distribution is close to $q_{Y^{1:N}}$. The amount of randomness used is non-optimal, since we are approximately ``wasting" a fraction $H(X|Y)$ of randomness by Theorem~\ref{th_Han}. For the next encoding blocks, we proceed as in the first block except that part of the randomness is now recycled from the previous block. More specifically, we recycle the bits of randomness used at the input of the channel in the previous block that are almost independent from the channel output. The rate of those bits can be shown to approach $H(X|Y)$. The main difficulty is to ensure that the target distribution at the output of the channel is jointly approximated over all blocks despite the randomness reuse from one block to another.

\textcolor{black}{We provide a formal description of the coding scheme in Section~\ref{sec:CS}, and present its analysis in Section \ref{sec:anresolv}. Part of the analysis for channel resolvability will be directly reused for the problem of strong coordination in Section~\ref{sec:StrongCord}. 
}
\subsection{Coding Scheme} \label{sec:CS}

Fix a joint probability distribution $q_{XY}$ over  $\mathcal{X} \times \mathcal{Y}$, where $|\mathcal{X}|$ is a prime number. Define $U^{1:N} \triangleq X^{1:N} G_n$, where $G_n$ is defined in Section \ref{sec:notations}, and define for $\beta < 1/2$, $\delta_N \triangleq 2^{-N^{\beta}}$ and the sets
\begin{align*}
\mathcal{V}_X &\triangleq \left\{ i \in \llbracket 1, N \rrbracket : H( U^i | U^{1:i-1}) >  \log  |\mathcal{X}| - \delta_N \right\},\\
\mathcal{V}_{X|Y} &\triangleq \left\{ i \in \llbracket 1, N \rrbracket : H( U^i | U^{1:i-1} Y^N) > \log |\mathcal{X}| - \delta_N \right\}.
\end{align*}
Note that the sets $\mathcal{V}_X$ and $\mathcal{V}_{X|Y}$ are defined with respect to $q_{XY}$.
Intuitively, $U^{1:N}[\mathcal{V}_{X|Y}]$ corresponds to the components of $U^{1:N}$ that are almost independent from $Y$  (see \cite{Chou14c} for an interpretation of $\mathcal{V}_X$ and $\mathcal{V}_{X|Y}$ in terms of randomness extraction). Note also that 
\begin{align*}
\lim_{N\to \infty} |\mathcal{V}_{X}| / N = H(X), \\ \lim_{N\to \infty} |\mathcal{V}_{X|Y}| / N = H(X|Y)
\end{align*} by \cite[Lemma 7]{Chou14c}, which relies on a proof technique used in~\cite{Honda13} and a result from \cite{karzand2010polar}.

We use the subscript $i \in \llbracket 1,k \rrbracket$ to denote random variables associated with the encoding of Block~$i$, and we use the notation $X_{i:j} \triangleq (X_l)_{l \in \llbracket i, j \rrbracket}$, when $i<j$. The encoding process is described in Algorithm \ref{alg:encoding_cr}. The functional dependence graph of the coding scheme is depicted in Figure \ref{figFGD} for the reader's convenience.
\begin{figure}
\centering
  \includegraphics[width=8.1cm]{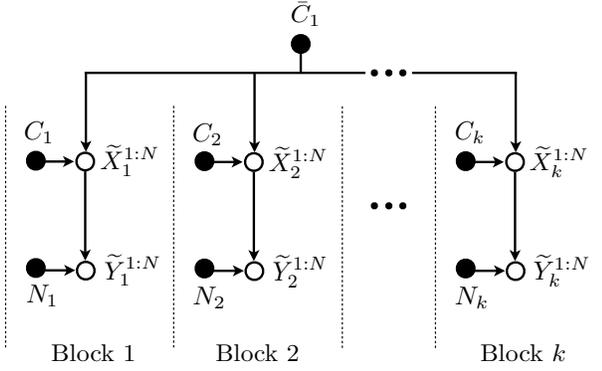}
  \caption{Functional dependence graph of the block encoding scheme for channel resolvability. $N_i$, $i \in \llbracket 1 ,k \rrbracket$, is the channel noise corresponding to the transmission over Block $i$.  For Block $i$, $(C_i, \bar{C}_{i-1}$) is the randomness used at the encoder to form $\widetilde{X}_i$, where $\forall i \in \llbracket 2, k \rrbracket$, $\bar{C}_i = \bar{C}_{i-1}$ and $C_i$ is only used in Block $i$.}
  \label{figFGD}
\end{figure}
\begin{algorithm}[]
  \caption{Encoding algorithm for channel resolvability}
  \label{alg:encoding_cr}
  \begin{algorithmic}   [1] 
    \REQUIRE A vector $\bar{C}_1$ of $|\mathcal{V}_{X|Y}|$ uniformly distributed symbols shared by the encoder and decoder, and $k$ vectors $C_{1:k}$ of $|\mathcal{V}_{X} \backslash \mathcal{V}_{X|Y}|$ uniformly distributed symbols.
    \FOR{Block $i=1$ to $k$}
    \STATE $\bar{C}_i \leftarrow \bar{C}_1$
    \STATE $\widetilde{U}_i^{1:N}[\mathcal{V}_{X|Y}]\leftarrow \bar{C}_i$
    \STATE $\widetilde{U}_i^{1:N}[\mathcal{V}_{X} \backslash \mathcal{V}_{X|Y}] \leftarrow C_i$
    \STATE Successively draw the remaining components $\widetilde{U}_i^{1:N}[\mathcal{V}_{X}^c]$, according to 
      \begin{align}  
&\widetilde{p}_{U_i^j|U_i^{1:j-1}} (u_i^j|\widetilde{U}_i^{1:j-1}) \nonumber \\
&\triangleq 
        {q}_{U^j|U^{1:j-1}} (u_i^j|\widetilde{U}_i^{1:j-1}) \text{ if }  j \in  \mathcal{V}_{X}^c. \label{true_distribcr}
\end{align}
    \STATE Transmit $\widetilde{X}_i^{1:N} \triangleq \widetilde{U}_i^{1:N} G_n$ over the channel $q_{Y|X}$. We denote $\widetilde{Y}_i^{1:N}$ the corresponding channel output.
    \ENDFOR
  \end{algorithmic}
\end{algorithm}
In essence, the protocol described in Algorithm \ref{alg:encoding_cr} performs a resolvability-achieving random number generation~\cite[Definition 2.2.2]{HanBook} for each encoding block, and recycles randomness $\bar{C}_1$ over all blocks.

\begin{rem}
The randomizations described in \eqref{true_distribcr} could be replaced by deterministic decisions for $j \in \mathcal{H}_X^c$, i.e., randomized decisions are only needed for $j \in \mathcal{V}_X^c\backslash \mathcal{H}_X^c$, as shown in \cite{Chou15f}.
\end{rem}

\subsection{Scheme Analysis} \label{sec:anresolv}
Our analysis of Algorithm \ref{alg:encoding_cr} exploits simple relations satisfied by the Kullback-Leibler divergence presented in Appendix~\ref{appdiv}. \textcolor{black}{We denote the distribution induced by the coding scheme, i.e., the joint distribution of $\widetilde{X}_i^{1:N}$ and $\widetilde{Y}_i^{1:N}$, by $\widetilde{p}_{X_i^{1:N}Y_i^{1:N}} = {q}_{Y^{1:N}|X^{1:N}} \widetilde{p}_{X_i^{1:N}}$, $i\in \llbracket 1, k \rrbracket$.} We start with Lemma \ref{lem0} that helps us show in Remark~\ref{remB} that each block individually performs soft covering. 
\begin{lem} \label{lem0}
For block $i\in \llbracket 1, k \rrbracket$, we have 
$$
 \mathbb{D} ( q_{X^{1:N}Y^{1:N}} || \widetilde{p}_{X_i^{1:N}Y_i^{1:N}}) \leq  \delta_N^{(1)},
 $$  
 where $\delta_N^{(1)} \triangleq N \delta_N$.   
\end{lem}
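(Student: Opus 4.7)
The plan is to reduce the joint divergence to a divergence over $U^{1:N}$ and then decompose it using the chain rule for $\mathbb{D}$. First I would observe that both $q_{X^{1:N}Y^{1:N}}$ and $\widetilde{p}_{X_i^{1:N}Y_i^{1:N}}$ share the same memoryless channel factor $q_{Y^{1:N}|X^{1:N}}$, so by a standard ``common conditioning'' identity (presumably recorded in the divergence lemmas of Appendix~\ref{appdiv}),
\begin{equation*}
 \mathbb{D}(q_{X^{1:N}Y^{1:N}} \| \widetilde{p}_{X_i^{1:N}Y_i^{1:N}}) = \mathbb{D}(q_{X^{1:N}} \| \widetilde{p}_{X_i^{1:N}}).
\end{equation*}
Since $U^{1:N} = X^{1:N} G_n^{-1}$ is a bijection ($G_n$ is invertible over the prime field, so $|\mathcal{X}|$ prime is what I invoke here), the divergence is preserved: the right-hand side equals $\mathbb{D}(q_{U^{1:N}} \| \widetilde{p}_{U_i^{1:N}})$.

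Next I would apply the chain rule,
\begin{equation*}
\mathbb{D}(q_{U^{1:N}} \| \widetilde{p}_{U_i^{1:N}}) = \sum_{j=1}^{N} \mathbb{E}_{q_{U^{1:j-1}}} \bigl[ \mathbb{D}\bigl( q_{U^j|U^{1:j-1}} \,\|\, \widetilde{p}_{U_i^j|U_i^{1:j-1}} \bigr) \bigr],
\end{equation*}
and examine the terms according to the encoder rule. For $j \in \mathcal{V}_X^c$, the encoder draws $\widetilde{U}_i^j$ exactly from $q_{U^j|U^{1:j-1}}$ (by \eqref{true_distribcr}), so the summand vanishes. For $j \in \mathcal{V}_X$ (which includes $\mathcal{V}_{X|Y}$), the encoder assigns $\widetilde{U}_i^j$ uniformly on $\mathcal{X}$ from either $\bar{C}_i$ or $C_i$; hence $\widetilde{p}_{U_i^j|U_i^{1:j-1}}(\cdot|u^{1:j-1}) = 1/|\mathcal{X}|$ for every history, and
\begin{equation*}
\mathbb{E}_{q_{U^{1:j-1}}} \bigl[ \mathbb{D}\bigl( q_{U^j|U^{1:j-1}} \,\|\, 1/|\mathcal{X}| \bigr) \bigr] = \log|\mathcal{X}| - H(U^j|U^{1:j-1}) < \delta_N,
\end{equation*}
where the strict inequality is the defining property of $\mathcal{V}_X$.

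Summing these bounds over the at most $|\mathcal{V}_X| \leq N$ non-trivial indices yields
\begin{equation*}
\mathbb{D}(q_{U^{1:N}} \| \widetilde{p}_{U_i^{1:N}}) \leq |\mathcal{V}_X|\,\delta_N \leq N\delta_N = \delta_N^{(1)},
\end{equation*}
which gives the claim. I do not anticipate a genuine obstacle here; the only subtlety is justifying the ``channel-cancellation'' step in the very first display (which is why the authors cite Appendix~\ref{appdiv}) and being careful that the recycled randomness $\bar{C}_i$ is indeed fully uniform within a single block, so that uniformity of $\widetilde{U}_i^j$ on $\mathcal{V}_X$ holds for this intra-block computation. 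Joint approximation across blocks is not needed for this lemma — it concerns a single block — and will be the harder issue only in the subsequent step of the analysis.
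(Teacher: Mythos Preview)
Your proposal is correct and follows essentially the same argument as the paper: reduce to $\mathbb{D}(q_{X^{1:N}}\|\widetilde{p}_{X_i^{1:N}})$ by canceling the common channel conditional, pass to $U^{1:N}$ via invertibility of $G_n$, apply the chain rule, and bound the surviving $\mathcal{V}_X$ terms by $\delta_N$ each. Two minor remarks: the paper does not invoke Appendix~\ref{appdiv} for the channel-cancellation step (it just writes out the chain rule in place), and $G_n$ is invertible regardless of whether $|\mathcal{X}|$ is prime, so primality is not needed at that point.
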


\begin{proof}
For $i \in \llbracket 1 , k \rrbracket$, we have
\begin{align*}
&\mathbb{D} ( q_{X^{1:N}Y^{1:N}} || \widetilde{p}_{X_i^{1:N}Y_i^{1:N}}) \\
& = \mathbb{E}_{q_{X^{1:N}}} \left[\mathbb{D} ( q_{Y^{1:N}|X^{1:N}} || \widetilde{p}_{Y_i^{1:N}|X_i^{1:N}}) \right]  +  \mathbb{D} ( q_{X^{1:N}} || \widetilde{p}_{X_i^{1:N}} ) \\
& \stackrel{(a)}{=} \mathbb{D} ( q_{X^{1:N}} || \widetilde{p}_{X_i^{1:N}}) \\
& \stackrel{(b)}{=}  \mathbb{D} ( q_{U^{1:N}}||\widetilde{p}_{U_i^{1:N}} )  \\
& \stackrel{(c)}{=} \sum_{j =1}^N   \mathbb{E}_{q_{U^{1:j-1}}} \left[ \mathbb{D} ( q_{U^{j}|U^{1:j-1}} ||\widetilde{p}_{U_i^{j}|U_i^{1:j-1}} ) \right]  \\
& \stackrel{(d)}{=} \sum_{j \in  \mathcal{V}_{X} }   \mathbb{E}_{q_{U^{1:j-1}}} \left[\mathbb{D} ( q_{U^{j}|U^{1:j-1}} ||\widetilde{p}_{U_i^{j}|U_i^{1:j-1}}) \right]  \\
& \stackrel{(e)}{=} \sum_{j \in \mathcal{V}_{X} }   (\log |\mathcal{X}|- H(U^{j}|U^{1:j-1}))  \\
& \stackrel{(f)}{\leq} |\mathcal{V}_{X}|   \delta_N  \\
& \leq N \delta_N,
\end{align*}
where $(a)$ holds by definition of $\widetilde{p}_{X_i^{1:N}Y_i^{1:N}}$, $(b)$ holds by invertibility of $G_n$, $(c)$ holds by the chain rule for divergence~\cite{Cover91}, $(d)$ holds by \eqref{true_distribcr}, $(e)$ holds by uniformity of the symbols in positions $\mathcal{V}_{X}$, $(f)$ holds by definition of $\mathcal{V}_{X}$.
\end{proof}
\begin{rem} \label{remB}
The encoding algorithm performs a resolvability-achieving random number generation in each block. A formal proof can be found in Appendix \ref{AppA} and relies on Lemma \ref{lem0} and an asymptotic symmetry result for the Kullback-Leibler divergence. 
\end{rem}
We now establish the asymptotic independence of consecutive blocks with the following two lemmas. 
\begin{lem} \label{lem3}
For $i\in \llbracket 2, k \rrbracket$, the outputs of two consecutive blocks are asymptotically independent; specifically, $$
\mathbb{D} \left( \widetilde{p}_{Y_{i-1:i}^{1:N}\bar{C}_1} || \widetilde{p}_{Y_{i-1}^{1:N}\bar{C}_1} \widetilde{p}_{Y_{i}^{1:N}} \right)  \leq \delta_N^{(2)},
$$
where $\delta_N^{(2)} = O\left(N^{2} \delta_N^{1/2} \right)$. 
\end{lem}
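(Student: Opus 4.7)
The plan is to exploit that, conditional on $\bar{C}_1$, the outputs of Blocks $i-1$ and $i$ are \emph{exactly} independent under $\widetilde{p}$, and then separately bound the residual dependence between $\bar{C}_1$ itself and $\widetilde{Y}_i^{1:N}$.

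First I would read off from the functional dependence graph of Figure~\ref{figFGD} that the only quantity shared by Blocks $i-1$ and $i$ is $\bar{C}_1$: the fresh symbols $C_{i-1},C_i$, the random coin flips used to draw the components in $\mathcal{V}_X^c$, and the channel noises $N_{i-1},N_i$ are mutually independent. Hence $\widetilde{Y}_{i-1}^{1:N}$ and $\widetilde{Y}_i^{1:N}$ are deterministic functions of $\bar{C}_1$ and two disjoint, independent packets of randomness, which yields the \emph{exact} factorization $\widetilde{p}_{Y_{i-1:i}^{1:N}\bar{C}_1} = \widetilde{p}_{\bar{C}_1}\,\widetilde{p}_{Y_{i-1}^{1:N}|\bar{C}_1}\,\widetilde{p}_{Y_i^{1:N}|\bar{C}_1}$. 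Substituting into the divergence in the statement collapses it to $I_{\widetilde{p}}(\bar{C}_1;\widetilde{Y}_i^{1:N})$.

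Next, since $\bar{C}_1 = \widetilde{U}_i^{1:N}[\mathcal{V}_{X|Y}]$ by Algorithm~\ref{alg:encoding_cr}, the remaining task is to upper bound $I_{\widetilde{p}}(\widetilde{U}_i^{1:N}[\mathcal{V}_{X|Y}];\widetilde{Y}_i^{1:N})$. Under the target distribution $q$ this is straightforward: by the chain rule and the defining inequality of $\mathcal{V}_{X|Y}$, $H_q(U^{1:N}[\mathcal{V}_{X|Y}]\,|\,Y^{1:N}) \geq \sum_{j\in\mathcal{V}_{X|Y}} H_q(U^j\,|\,U^{1:j-1}Y^{1:N}) \geq |\mathcal{V}_{X|Y}|(\log|\mathcal{X}|-\delta_N)$, while $H_q(U^{1:N}[\mathcal{V}_{X|Y}]) \leq |\mathcal{V}_{X|Y}|\log|\mathcal{X}|$, so $I_q(U^{1:N}[\mathcal{V}_{X|Y}];Y^{1:N}) \leq N\delta_N$. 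I then transfer this to $\widetilde{p}$: since $\widetilde{U}_i^{1:N}[\mathcal{V}_{X|Y}]$ is uniform by construction its marginal entropy matches exactly, so only the conditional entropy $H_{\widetilde{p}}(\widetilde{U}_i^{1:N}[\mathcal{V}_{X|Y}]\,|\,\widetilde{Y}_i^{1:N})$ needs to be compared with its $q$-counterpart. Pinsker's inequality applied to Lemma~\ref{lem0} gives $\mathbb{V}(q_{X^{1:N}Y^{1:N}},\widetilde{p}_{X_i^{1:N}Y_i^{1:N}}) = O(\sqrt{N\delta_N})$, which by the data processing inequality descends to the marginal on $(U^{1:N}[\mathcal{V}_{X|Y}],Y^{1:N})$. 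A Fannes/Csisz\'ar-type uniform continuity bound on an alphabet of log-size $O(N)$ then yields $|H_q - H_{\widetilde{p}}| = O(N\sqrt{N\delta_N})$, and collecting everything gives the announced $O(N^2\delta_N^{1/2})$ estimate.

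The delicate step is the last one: the Pinsker bound $O(\sqrt{N\delta_N})$ is inflated by the log-alphabet factor $O(N)$ coming from Fannes' inequality, which is precisely what prevents one from obtaining the $O(N\delta_N)$ bound that holds under $q$ alone and forces the looser $N^2\delta_N^{1/2}$ rate. The binary-entropy cross term $h_b(\cdot)$ appearing in Fannes is easily absorbed because $\delta_N = 2^{-N^\beta}$ decays super-polynomially.
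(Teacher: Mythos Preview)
Your proposal is correct and follows essentially the same route as the paper: reduce the divergence to $I_{\widetilde p}(\bar C_1;\widetilde Y_i^{1:N})$ via the Markov chain $\widetilde Y_{i-1}^{1:N}-\bar C_1-\widetilde Y_i^{1:N}$, bound the analogous quantity under $q$ by $N\delta_N$ using the defining property of $\mathcal{V}_{X|Y}$, and transfer the bound to $\widetilde p$ with Pinsker plus a Fannes-type continuity estimate (the paper packages the latter as Lemma~\ref{corent}) applied to the joint and marginal entropies, which introduces the extra $N\sqrt{N\delta_N}$-type factor.
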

\begin{proof}
Let $i \in \llbracket 1 , k \rrbracket$. We have 
\begin{align}
& H({U}^{1:N}[ \mathcal{V}_{X|Y}] | {Y}^{1:N}) - H(\widetilde{U}_{i}^{1:N}[ \mathcal{V}_{X|Y}] | \widetilde{Y}_{i}^{1:N}) \nonumber \\ \nonumber
& =H({U}^{1:N}[ \mathcal{V}_{X|Y}]  {Y}^{1:N})- H(\widetilde{U}_{i}^{1:N}[ \mathcal{V}_{X|Y}]  \widetilde{Y}_{i}^{1:N}) \\ \nonumber
& \phantom{--} + H( \widetilde{Y}_{i}^{1:N}) - H({Y}^{1:N})  \\ \nonumber
& \stackrel{(a)}{\leq} N D_1 \log (|\mathcal{Y}|/D_1) + N D_2 \log (|\mathcal{X}||\mathcal{Y}|/D_2) \\ \nonumber
& \stackrel{(b)}{\leq} N D_1 \log (|\mathcal{Y}|/D_1) + N D_3 \log (|\mathcal{X}||\mathcal{Y}|/D_3) \\ \nonumber
& \stackrel{(c)}{\leq}   2 N \sqrt{2 \ln 2} \sqrt{\delta_N^{(1)}} \log \left(|\mathcal{X}||\mathcal{Y}|/ \left[ \sqrt{2 \ln 2} \sqrt{\delta_N^{(1)}}\right]\right)      \\ 
& \triangleq \delta_N^{(UY)} , \label{eqaaaa}
\end{align}
where $(a)$ holds for $N$ large enough by Lemma \ref{corent} in Appendix~\ref{appdiv} with $D_1 \triangleq \sqrt{2 \ln 2} \sqrt{\mathbb{D}({q}_{U^{1:N} [ \mathcal{V}_{X|Y}]  {Y}^{1:N}} || \widetilde{p}_{U_{i}^{1:N} [ \mathcal{V}_{X|Y}]  {Y}_{i}^{1:N}})}$ and $D_2\triangleq \sqrt{2 \ln 2} \sqrt{ \mathbb{D}({q}_{  {Y}^{1:N}} || \widetilde{p}_{{Y}_{i}^{1:N}})}$, $(b)$ holds for $N$ large enough because $D_2 \leq D_3$ by the chain rule for relative Kullback-Leibler divergence and invertibility of $G_n$ with $D_3 \triangleq \sqrt{2 \ln 2} \sqrt{\mathbb{D}(q_{X^{1:N}Y^{1:N}} ||\widetilde{p}_{X_i^{1:N}Y_i^{1:N}})}$, $(c)$ holds for $N$ large enough by Lemma \ref{lem0} and because $D_1 \leq D_3$.   
Hence, for $i \in \llbracket 2 , k\rrbracket$,
\begin{align*}
&\mathbb{D} \left( \widetilde{p}_{Y_{i-1:i}^{1:N}\bar{C}_1} || \widetilde{p}_{Y_{i-1}^{1:N}\bar{C}_1} \widetilde{p}_{Y_{i}^{1:N}} \right) \\
& = I(\widetilde{Y}_{i-1}^{1:N} \bar{C}_1;  \widetilde{Y}_{i}^{1:N})\\ 
& = I(\widetilde{Y}_{i}^{1:N} ; \bar{C}_1 ) + I(\widetilde{Y}_{i-1}^{1:N} ; \widetilde{Y}_{i}^{1:N} | \bar{C}_1 ) \\
&  \stackrel{(d)}{=} I(\widetilde{Y}_{i}^{1:N} ; \bar{C}_1 )  \\
&  = I(\widetilde{Y}_{i}^{1:N} ; \widetilde{U}_{i}^{1:N}[ \mathcal{V}_{X|Y}])  \\
&  \stackrel{(e)}{=} |\mathcal{V}_{X|Y} | \log |\mathcal{X}|- H(\widetilde{U}_{i}^{1:N}[ \mathcal{V}_{X|Y}] | \widetilde{Y}_{i}^{1:N})  \\
& \stackrel{(f)}{\leq} |\mathcal{V}_{X|Y} |\log |\mathcal{X}| -  H({U}_{}^{1:N}[ \mathcal{V}_{X|Y}] | {Y}_{}^{1:N}) + \delta_N^{(UY)}\\
&  \stackrel{(g)}{\leq}  |\mathcal{V}_{X|Y} |\log |\mathcal{X}|  -  \sum_{j \in \mathcal{V}_{X|Y} }H(U_{}^j| U_{}^{1:j-1}Y_{}^{1:N}) + \delta_N^{(UY)} \\
& \leq  |\mathcal{V}_{X|Y} |\log |\mathcal{X}| -  |\mathcal{V}_{X|Y} | (\log |\mathcal{X}| - \delta_N)  + \delta_N^{(UY)} \\
& \leq N \delta_N  + \delta_N^{(UY)},
\end{align*}
where $(d)$ holds because $\widetilde{Y}_{i-1}^{1:N} - \bar{C}_1 - \widetilde{Y}_{i}^{1:N}  $, as seen in Figure~\ref{figFGD}, $(e)$ holds by uniformity of $\widetilde{U}_{i}^{1:N}[ \mathcal{V}_{X|Y}]$, $(f)$ holds by \eqref{eqaaaa}, $(g)$ holds because conditioning reduces entropy.
\end{proof}
\begin{lem} \label{lem5}
The outputs of all the blocks are asymptotically independent; specifically, 
$$
\mathbb{D} \left( \widetilde{p}_{Y_{1:k}^{1:N}} \left\lVert\ \prod_{i =1}^k  \widetilde{p}_{Y_{i}^{1:N}} \right. \right) \leq (k-1) \delta_N^{(2)} .
$$
\end{lem}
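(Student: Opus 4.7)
My plan is to proceed by induction on $k$, so that each inductive step reduces to the two-block statement already established in Lemma~\ref{lem3}. The base case $k=1$ is trivial, since both distributions coincide. For the inductive step, I would apply the chain rule for Kullback-Leibler divergence to the split $(\widetilde{Y}_{1:k-1}^{1:N},\widetilde{Y}_k^{1:N})$ under the product reference distribution $\prod_{i=1}^k\widetilde{p}_{Y_i^{1:N}}$; because the reference conditional $\widetilde{p}_{Y_k^{1:N}}$ does not depend on $\widetilde{Y}_{1:k-1}^{1:N}$, this yields
\begin{align*}
\mathbb{D}\!\left( \widetilde{p}_{Y_{1:k}^{1:N}} \left\lVert \prod_{i=1}^k \widetilde{p}_{Y_i^{1:N}} \right.\right)
&= \mathbb{D}\!\left( \widetilde{p}_{Y_{1:k-1}^{1:N}} \left\lVert \prod_{i=1}^{k-1} \widetilde{p}_{Y_i^{1:N}} \right.\right) \\
&\quad + I\big(\widetilde{Y}_{1:k-1}^{1:N};\widetilde{Y}_k^{1:N}\big).
\end{align*}
By the induction hypothesis the first term is at most $(k-2)\delta_N^{(2)}$, so it remains only to bound the mutual information by $\delta_N^{(2)}$.

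The key structural fact I would use to control $I(\widetilde{Y}_{1:k-1}^{1:N};\widetilde{Y}_k^{1:N})$ is read off Figure~\ref{figFGD}: each $\widetilde{Y}_i^{1:N}$ is a function of $(\bar{C}_1, C_i, N_i)$, with $\bar{C}_1$ and the pairs $(C_i,N_i)_{i\in\llbracket 1,k\rrbracket}$ mutually independent. Consequently, conditioned on $\bar{C}_1$, the block outputs $\widetilde{Y}_1^{1:N},\ldots,\widetilde{Y}_k^{1:N}$ are mutually independent, so in particular
\begin{align*}
I\big(\widetilde{Y}_{1:k-1}^{1:N};\widetilde{Y}_k^{1:N}\mid\bar{C}_1\big)=I\big(\widetilde{Y}_{k-1}^{1:N};\widetilde{Y}_k^{1:N}\mid\bar{C}_1\big)=0.
\end{align*}
Expanding $I(\widetilde{Y}_{1:k-1}^{1:N}\bar{C}_1;\widetilde{Y}_k^{1:N})$ and $I(\widetilde{Y}_{k-1}^{1:N}\bar{C}_1;\widetilde{Y}_k^{1:N})$ with the mutual information chain rule in two different orders, both expressions collapse to $I(\bar{C}_1;\widetilde{Y}_k^{1:N})$, which yields
\begin{align*}
I\big(\widetilde{Y}_{1:k-1}^{1:N};\widetilde{Y}_k^{1:N}\big)
&\leq I\big(\widetilde{Y}_{1:k-1}^{1:N}\bar{C}_1;\widetilde{Y}_k^{1:N}\big) \\
&= I\big(\widetilde{Y}_{k-1}^{1:N}\bar{C}_1;\widetilde{Y}_k^{1:N}\big),
\end{align*}
where the inequality is monotonicity of mutual information under enlargement of one argument. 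The right-hand side is exactly $\mathbb{D}(\widetilde{p}_{Y_{k-1:k}^{1:N}\bar{C}_1}\|\widetilde{p}_{Y_{k-1}^{1:N}\bar{C}_1}\widetilde{p}_{Y_k^{1:N}})$, which is at most $\delta_N^{(2)}$ by Lemma~\ref{lem3}. Closing the induction then gives the claimed bound $(k-1)\delta_N^{(2)}$.

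The only non-routine ingredient is recognizing that the shared randomness $\bar{C}_1$ carries all the statistical dependence between blocks, allowing the multi-block divergence to be built up one block at a time from Lemma~\ref{lem3}; this is immediate from the functional dependence graph, so I do not anticipate a genuine obstacle. Everything else is chain-rule bookkeeping and a single invocation of Lemma~\ref{lem3} at each step of the induction.
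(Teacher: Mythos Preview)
Your proof is correct and follows essentially the same route as the paper. The paper applies Lemma~\ref{lemnew} to write the divergence as $\sum_{i=2}^{k} I(\widetilde{Y}_i^{1:N};\widetilde{Y}_{1:i-1}^{1:N})$---which is just your inductive decomposition unrolled---and then bounds each summand via the Markov chain $\widetilde{Y}_{1:i-2}^{1:N}-\bar{C}_1\widetilde{Y}_{i-1}^{1:N}-\widetilde{Y}_i^{1:N}$ and Lemma~\ref{lem3}, whereas you use the (equivalent, slightly cleaner) observation that all blocks are conditionally independent given $\bar{C}_1$ alone before invoking Lemma~\ref{lem3}.
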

\begin{proof}
We have
\begin{align*}
&\mathbb{D} \left( \widetilde{p}_{Y_{1:k}^{1:N}} \left\lVert\ \prod_{i =1}^k  \widetilde{p}_{Y_{i}^{1:N}}  \right. \right) \\
& \stackrel{(a)}{=} \sum_{i=2}^{k} I(\widetilde{Y}_i^{1:N} ; \widetilde{Y}^{1:N}_{1:i-1})\\
& \leq \sum_{i=2}^{k} I(\widetilde{Y}_i^{1:N} ; \widetilde{Y}^{1:N}_{1:i-1} \bar{C}_1) \\
& = \sum_{i=2}^{k} \left( I(\widetilde{Y}_i^{1:N} ; \widetilde{Y}^{1:N}_{i-1} \bar{C}_1) + I(\widetilde{Y}_i^{1:N} ; \widetilde{Y}^{1:N}_{1:i-2} | \bar{C}_1 \widetilde{Y}^{1:N}_{i-1}) \right)\\
& \stackrel{(b)}{=} \sum_{i=2}^{k}  I(\widetilde{Y}_i^{1:N} ; \widetilde{Y}^{1:N}_{i-1} \bar{C}_1)  \\
& = \sum_{i=2}^{k} \mathbb{D} \left( \widetilde{p}_{Y_{i-1:i}^{1:N}\bar{C}_1} , \widetilde{p}_{Y_{i-1}^{1:N}\bar{C}_1} \widetilde{p}_{Y_{i}^{1:N}} \right)  \\
& \stackrel{(c)}{\leq} \sum_{i=2}^{k}  \delta_N^{(2)}  \\
& = (k-1)  \delta_N^{(2)} ,
& \end{align*}
where $(a)$ holds by Lemma \ref{lemnew} in Appendix \ref{appdiv}, $(b)$ holds because for any $i \in \llbracket 3,k \rrbracket$, the Markov chain  $\widetilde{Y}_{1:i-2}^{1:N} - \bar{C}_1 \widetilde{Y}_{i-1}^{1:N} - \widetilde{Y}_{i}^{1:N}$ holds as seen in Figure \ref{figFGD}, $(c)$ holds by Lemma~\ref{lem3}.
\end{proof}
We are now ready to show that the target output distribution is jointly approximated over all blocks.
\begin{lem} \label{lemAllb}
We have, 
$$
\mathbb{D} \left( \widetilde{p}_{Y_{1:k}^{1:N}} \left\lVert q_{Y^{1:kN}}  \right. \right) \leq  \delta_N^{(3)},
$$
where $\delta_N^{(3)} = O\left(k^{3/2}N^{2} \delta_N^{1/4} \right)$.
\end{lem}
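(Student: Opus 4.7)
The plan is to bound $\mathbb{D}(\widetilde{p}_{Y_{1:k}^{1:N}} \| q_{Y^{1:kN}})$ by routing through the variational distance and then converting back. A direct Kullback-Leibler chain-rule argument is unavailable because Lemma~\ref{lem5} controls $\mathbb{D}(\widetilde{p}_{Y_{1:k}^{1:N}} \| \prod_i \widetilde{p}_{Y_i^{1:N}})$ in the ``natural'' direction, whereas Lemma~\ref{lem0} supplies $\mathbb{D}(q_{Y^{1:N}} \| \widetilde{p}_{Y_i^{1:N}})$ in the reversed direction. The variational distance, being symmetric and satisfying the triangle inequality, provides the necessary bridge.

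First, I apply the triangle inequality,
$$\mathbb{V}\bigl(\widetilde{p}_{Y_{1:k}^{1:N}}, q_{Y^{1:kN}}\bigr) \leq \mathbb{V}\Bigl(\widetilde{p}_{Y_{1:k}^{1:N}}, \prod_{i=1}^k \widetilde{p}_{Y_i^{1:N}}\Bigr) + \mathbb{V}\Bigl(\prod_{i=1}^k \widetilde{p}_{Y_i^{1:N}}, \prod_{i=1}^k q_{Y^{1:N}}\Bigr),$$
and bound each term via Pinsker's inequality. Lemma~\ref{lem5} combined with Pinsker yields $O\bigl(\sqrt{k\,\delta_N^{(2)}}\bigr) = O\bigl(\sqrt{k}\, N\, \delta_N^{1/4}\bigr)$ for the first term. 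For the second, I use the standard product telescoping $\mathbb{V}(\prod_i p_i, \prod_i q_i) \leq \sum_i \mathbb{V}(p_i, q_i)$, together with Pinsker applied blockwise to the marginalized Lemma~\ref{lem0} bound $\mathbb{D}(q_{Y^{1:N}} \| \widetilde{p}_{Y_i^{1:N}}) \leq \delta_N^{(1)}$ (the marginalization follows from the data-processing inequality by dropping $X^{1:N}$); this gives $O\bigl(k\sqrt{N\delta_N}\bigr)$, which is asymptotically dominated by the first term.

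Finally, I convert back to divergence via a reverse-Pinsker-type inequality of the form $\mathbb{D}(P\|Q) \leq \mathbb{V}(P,Q)\,\log\bigl(|\mathrm{supp}(Q)|/\mathbb{V}(P,Q)\bigr)$, in the same spirit as Lemma~\ref{corent} in Appendix~\ref{appdiv}. Since $|\mathrm{supp}(q_{Y^{1:kN}})| = |\mathcal{Y}|^{kN}$, this conversion introduces an $O(kN)$ multiplicative factor, producing $O(kN) \cdot O\bigl(\sqrt{k}\, N\, \delta_N^{1/4}\bigr) = O\bigl(k^{3/2}\, N^2\, \delta_N^{1/4}\bigr)$, which matches the claimed $\delta_N^{(3)}$. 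The main obstacle is precisely this last reverse-Pinsker step: it is responsible both for the extra $\sqrt{k}$ factor and for the degradation of the $\delta_N$-exponent from $1/2$ (as in $\delta_N^{(2)}$) to $1/4$. One must also verify that the $-\mathbb{V}\log\mathbb{V}$ correction is absorbed into the $O(\cdot)$ for $N$ large enough, which holds because $-\log \delta_N = O(N^{\beta})$ with $\beta < 1/2$ is polynomial in $N$.
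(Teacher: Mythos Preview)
Your route is essentially the paper's: the paper packages exactly the same three moves --- reverse Pinsker, triangle inequality for $\mathbb{V}$, and Pinsker --- into Lemma~\ref{lemnew2}, and applies it with $r=\prod_{i}\widetilde{p}_{Y_i^{1:N}}$. So the strategy is right, but two points need correction.

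\textbf{The reverse-Pinsker step as stated is false.} The inequality $\mathbb{D}(P\|Q)\leq \mathbb{V}(P,Q)\log\bigl(|\mathrm{supp}(Q)|/\mathbb{V}(P,Q)\bigr)$ does \emph{not} hold; take $P=(\tfrac12,\tfrac12)$ and $Q=(1-\epsilon,\epsilon)$ to see $\mathbb{D}(P\|Q)\to\infty$ while the right-hand side stays bounded. Lemma~\ref{corent} bounds $|H(P)-H(Q)|$, not $\mathbb{D}(P\|Q)$. The correct bound is~\eqref{equationim}, namely $\mathbb{D}(P\|Q)\leq \log(1/\mu_Q)\,\mathbb{V}(P,Q)$ with $\mu_Q=\min_x Q(x)$. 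With $Q=q_{Y^{1:kN}}$ one has $\mu_Q=\mu_{q_Y}^{kN}$, so $\log(1/\mu_Q)=kN\log(1/\mu_{q_Y})$, and you still recover the $O(kN)$ factor --- but via the right inequality, and at the cost of assuming $q_Y$ has full support (which the paper does implicitly).

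\textbf{The product telescoping loses a factor of $\sqrt{k}$.} Your second term, $\sum_i \mathbb{V}(\widetilde{p}_{Y_i^{1:N}},q_{Y^{1:N}})=O(k\sqrt{N\delta_N})$, becomes $O(k^2N^{3/2}\delta_N^{1/2})$ after the $O(kN)$ conversion, and this is \emph{not} $O(k^{3/2}N^2\delta_N^{1/4})$ uniformly in $k$. The paper instead uses the factorization of divergence over product measures, $\mathbb{D}\bigl(q_{Y^{1:kN}}\bigm\|\prod_i\widetilde{p}_{Y_i^{1:N}}\bigr)=\sum_i\mathbb{D}(q_{Y^{1:N}}\|\widetilde{p}_{Y_i^{1:N}})\leq k\delta_N^{(1)}$, and then a single application of Pinsker gives $O(\sqrt{k\delta_N^{(1)}})$ rather than $k\cdot O(\sqrt{\delta_N^{(1)}})$. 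This yields the stated $k^{3/2}$ dependence without any side assumption on the growth of $k$.
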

\begin{proof}
First, observe that
	\begin{align}
		\mathbb{D}\left( \left.  q_{Y^{1:kN}} \right\lVert  \prod_{i =1}^k  \widetilde{p}_{Y_{i}^{1:N}}  \right) 
		& = 		\mathbb{D}\left( \prod_{i=1}^k q_{Y^{1:N}}  \left\lVert  \prod_{i =1}^k  \widetilde{p}_{Y_{i}^{1:N}}  \right. \right)  \nonumber \\ \nonumber
				& = 	\sum_{i =1}^k	\mathbb{D}\left( q_{Y^{1:N}}   \left\lVert    \widetilde{p}_{Y_{i}^{1:N}}  \right. \right) \\
				& \leq k \delta_N^{(1)}, \label{eqre}
	\end{align}
	where the inequality holds by Lemma \ref{lem0}. Then, we have, for $N$ large enough, 
\begin{align*}
	&\mathbb{D} \left( \widetilde{p}_{Y_{1:k}^{1:N}} \Vert q_{Y^{1:kN}} \right)	\\
	&  \stackrel{(a)}{\leq}  \log \left(\frac{ 1} { \mu_{q_{Y}}^{ kN}}  \right)\sqrt{2 \ln2} \left[\sqrt{\mathbb{D} \left( \widetilde{p}_{Y_{1:k}^{1:N}} \left\lVert \prod_{i =1}^k  \widetilde{p}_{Y_{i}^{1:N}} \right. \right)}  \right. \\
	& \phantom{-----------}+ \left.\sqrt{\mathbb{D}\left( \left.  q_{Y^{1:kN}} \right\lVert  \prod_{i =1}^k  \widetilde{p}_{Y_{i}^{1:N}}  \right)}  \right]\\
	& \stackrel{(b)}{\leq}  kN \log \left(\frac{ 1} { \mu_{q_{Y}}}  \right)\sqrt{2 \ln2} \left[ \sqrt{k \delta_N^{(1)}} +\sqrt{(k-1)\delta_N^{(2)}}\right], 
	\end{align*}
where $(a)$ holds by Lemma \ref{lemnew2} in Appendix \ref{appdiv} and because $\mu_{q_{Y^{1:kN}}} = \mu_{q_{Y}}^{kN}$, $(b)$ holds by Lemma \ref{lem5} and by \eqref{eqre}. 
\end{proof}
Our encoding scheme exploits randomness to draw symbols according to \eqref{true_distribcr}, whose rate is for any $i \in \llbracket 1 , k \rrbracket$,
  \begin{align*} 
  \lim_{N \to \infty} \frac{1}{N}\sum_{j \in \mathcal{V}_{X}^c} H(\widetilde{U}^{j}_i|\widetilde{U}^{1:j-1}_i).
 \end{align*}
 We quantify this rate in the following lemma. 
\begin{lem} \label{lemint}
	\begin{align} \label{eqrateres}
  \lim_{N \to \infty} \frac{1}{N}\sum_{j \in \mathcal{V}_{X}^c} H(\widetilde{U}^{j}_i|\widetilde{U}^{1:j-1}_i)=0.
 \end{align}
\end{lem}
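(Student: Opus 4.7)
My plan is to use the chain rule $H(\widetilde U_i^{1:N}) = \sum_{j=1}^N H(\widetilde U_i^j|\widetilde U_i^{1:j-1})$ and show that the sum restricted to $\mathcal V_X$ contributes exactly $|\mathcal V_X|\log|\mathcal X|$, so that
\[
\frac{1}{N}\sum_{j\in \mathcal V_X^c}H(\widetilde U_i^j|\widetilde U_i^{1:j-1}) = \frac{H(\widetilde U_i^{1:N})}{N} - \frac{|\mathcal V_X|\log|\mathcal X|}{N}.
\]
Both terms on the right will be shown to converge to $H(X)$, yielding the desired limit.

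First, for every $j\in\mathcal V_X$ I would verify that $\widetilde U_i^j$ is uniform on $\mathcal X$ and independent of $\widetilde U_i^{1:j-1}$, so that $H(\widetilde U_i^j|\widetilde U_i^{1:j-1})=\log|\mathcal X|$. Inspection of Algorithm~\ref{alg:encoding_cr} shows that $\widetilde U_i^j$ equals a single coordinate of the uniform and mutually independent vectors $(\bar C_i,C_i)$, while every symbol in $\widetilde U_i^{1:j-1}$ is either another coordinate of $(\bar C_i,C_i)$ or is obtained at Step~5 from earlier coordinates together with fresh, independent randomness; in particular, $\widetilde U_i^{1:j-1}$ is a function of inputs that do not involve the coordinate mapped to position $j$. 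The chain-rule identity above then follows.

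Next, I would control the two right-hand terms separately. For the polarization piece, the result of \cite{Chou14c} cited in the excerpt, combined with the identity $\sum_j H(U^j|U^{1:j-1})=H(U^{1:N})=NH(X)$ (invertibility of $G_n$) and the two-point polarization of the $H(U^j|U^{1:j-1})$ toward $\{0,\log|\mathcal X|\}$, yields $|\mathcal V_X|\log|\mathcal X|/N \to H(X)$. For the total entropy, Lemma~\ref{lem0} already supplies $\mathbb D(q_{U^{1:N}}\|\widetilde p_{U_i^{1:N}})\le N\delta_N$; Pinsker's inequality then gives $\mathbb V(\widetilde p_{U_i^{1:N}},q_{U^{1:N}})=O\bigl(\sqrt{N\delta_N}\bigr)$, and the standard continuity bound $|H(p)-H(q)|\le \mathbb V\log(|\mathcal X|^N/\mathbb V)$ on the alphabet $\mathcal X^N$ produces $|H(\widetilde U_i^{1:N})-H(U^{1:N})|=o(N)$. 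Since $H(U^{1:N})=NH(X)$, this gives $H(\widetilde U_i^{1:N})/N\to H(X)$, and combining with the polarization estimate delivers \eqref{eqrateres}.

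The main subtlety is the entropy-continuity bound on the exponentially large alphabet $\mathcal X^N$: a priori, a small total-variation distance need not yield a small entropy difference. What saves the estimate is the doubly exponential decay $\delta_N=2^{-N^\beta}$, which renders $\mathbb V\cdot N\log|\mathcal X|= O(N^{3/2}2^{-N^\beta/2})=o(N)$ comfortably. Note also that the cross-block reuse of $\bar C_1$ plays no role here, since $\widetilde p_{U_i^{1:N}}$ is a single-block marginal.
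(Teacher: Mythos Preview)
Your proof is correct and takes a genuinely different route from the paper's. The paper proceeds term by term: for each $j\in\mathcal V_X^c$ it establishes the comparison $H(\widetilde U_i^j\mid\widetilde U_i^{1:j-1})\le H(U^j\mid U^{1:j-1})+\delta_N^{(U)}$ via two applications of Lemma~\ref{corent} (to the prefixes $U^{1:j}$ and $U^{1:j-1}$), then introduces the auxiliary set $\mathcal H_X=\{j:H(U^j\mid U^{1:j-1})>\delta_N\}$ to split $\mathcal V_X^c=\mathcal H_X^c\cup(\mathcal H_X\setminus\mathcal V_X)$. The fully polarized part $\mathcal H_X^c$ contributes at most $N(\delta_N+\delta_N^{(U)})$, while the unpolarized gap $\mathcal H_X\setminus\mathcal V_X$ is bounded trivially by $(|\mathcal H_X|-|\mathcal V_X|)\log|\mathcal X|$, which vanishes after normalization since both $|\mathcal H_X|/N$ and $|\mathcal V_X|/N$ tend to the same limit.

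Your approach sidesteps the set $\mathcal H_X$ and the per-index comparison entirely: you observe that for $j\in\mathcal V_X$ the coordinate $\widetilde U_i^j$ is exactly uniform and independent of its prefix (this is implicit in step~$(e)$ of the proof of Lemma~\ref{lem0}), so the chain rule collapses the complementary sum to $H(\widetilde U_i^{1:N})-|\mathcal V_X|\log|\mathcal X|$, and one global application of entropy continuity to $\widetilde U_i^{1:N}$ suffices. This is cleaner and uses the same ingredients (Lemma~\ref{lem0}, Pinsker, Lemma~\ref{corent}, and the polarization limit for $|\mathcal V_X|$). What the paper's route buys is the reusable per-term estimate $H(\widetilde U_i^j\mid\widetilde U_i^{1:j-1})\le H(U^j\mid U^{1:j-1})+\delta_N^{(U)}$, which is not needed here but may be handy in related arguments; what your route buys is brevity and no auxiliary index set.
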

\begin{proof}
We have for $i \in \llbracket 1, k \rrbracket$, for $j \in \mathcal{V}_{X}^c$,
\begin{align}
&H(\widetilde{U}^{j}_i|\widetilde{U}^{1:j-1}_i) -  H({U}^{j}|{U}^{1:j-1})  \nonumber \\ \nonumber
& = H(\widetilde{U}^{1:j}_i) -  H({U}^{1:j}) + H({U}^{1:j-1}) - H(\widetilde{U}^{1:j-1}_i)   \\ \nonumber
& \stackrel{(a)}{\leq} 2 N D \log (|\mathcal{X}|/D)  \\ \nonumber
& \stackrel{(b)}{\leq}   2 N \sqrt{2 \ln 2} \sqrt{\delta_N^{(1)}} \log \left(|\mathcal{X}|/ \left[ \sqrt{2 \ln 2} \sqrt{\delta_N^{(1)}}\right]\right)      \\ 
& \triangleq \delta_N^{(U)}, \label{eq11a}
\end{align}
where $(a)$ holds by Lemma \ref{corent} in Appendix \ref{appdiv} for $N$ large enough and similar to the proof of Lemma \ref{lem3} with  $D \triangleq \sqrt{2 \ln 2} \sqrt{\mathbb{D}(q_{X^{1:N}} ||\widetilde{p}_{X_i^{1:N}})}$, $(b)$ holds by Lemma~\ref{lem0} for $N$ large enough. 

Defining  $\mathcal{H}_X \triangleq \left\{ i \in \llbracket 1, N \rrbracket : H( U^i | U^{1:i-1}) >   \delta_N \right\}$, we thus obtain 
\begin{align}
& \sum_{j \in \mathcal{V}_{X}^c} H(\widetilde{U}^{j}_i|\widetilde{U}^{1:j-1}_i) \nonumber \\
& \stackrel{}{=} \sum_{j \in \mathcal{H}_{X}^c \cup ( \mathcal{H}_{X} \backslash \mathcal{V}_{X})} H(\widetilde{U}^{j}_i|\widetilde{U}^{1:j-1}_i) \nonumber \\ \nonumber
& \leq |\mathcal{H}_{X} \backslash \mathcal{V}_{X}| \log |\mathcal{X}| + \sum_{j \in \mathcal{H}_{X}^c}  H(\widetilde{U}^{j}_i|\widetilde{U}^{1:j-1}_i)\\ \nonumber
& = (|\mathcal{H}_{X} | -| \mathcal{V}_{X}|)\log |\mathcal{X}| + \sum_{j \in \mathcal{H}_{X}^c}  H(\widetilde{U}^{j}_i|\widetilde{U}^{1:j-1}_i)\\ \nonumber
&  \stackrel{(a)}{\leq} (|\mathcal{H}_{X} | -| \mathcal{V}_{X}|)\log |\mathcal{X}| + \sum_{j \in \mathcal{H}_{X}^c}  (H({U}^{j}|{U}^{1:j-1}) + \delta_N^{(U)})\\ \nonumber
& \stackrel{(b)}{\leq} (|\mathcal{H}_{X} | -| \mathcal{V}_{X}|)\log |\mathcal{X}| + |\mathcal{H}_{X}^c| (\delta_N + \delta_N^{(U)})\\
& \leq (|\mathcal{H}_{X} | -| \mathcal{V}_{X}|)\log |\mathcal{X}| + N (\delta_N + \delta_N^{(U)}), \label{eq11}
\end{align}
where $(a)$ holds by \eqref{eq11a}, $(b)$ holds by definition of $\mathcal{H}_X$.
Hence,~\eqref{eq11} yields \eqref{eqrateres} by \cite[Lemmas 6,7]{Chou14c}.
\end{proof}

Finally, combining the previous lemmas we obtain the following theorem.
\textcolor{black} {
\begin{thm} \label{ThChr}
The coding scheme of Section \ref{sec:CS}, which operates over $k$ blocks of length $N$, achieves channel resolvability with respect to the Kullback-Leibler divergence over the discrete memoryless channel $(\mathcal{X}, q_{Y|X}, \mathcal{Y})$, where $|\mathcal{X}|$ is a prime number.
More specifically, the channel output distribution $\widetilde{p}_{Y_{1:k}^{1:N}}$ approaches the target distribution $q_{Y^{1:kN}}$ with convergence rate
$$
\mathbb{D} \left( \widetilde{p}_{Y_{1:k}^{1:N}} \Vert q_{Y^{1:kN}} \right) = O\left(k^{3/2}N^{2} 2^{-\frac{N^{-\beta}}{4}} \right), \beta \in ]0,1/2[,
$$
with a rate of input of randomness equal to
$
I(X;Y) +\frac{H(X|Y)}{k},
$
as $N \to \infty$, which approaches $I(X;Y)$ as $k \to \infty$. 
 It thus provides an explicit coding scheme with complexity in $O(kN\log N)$ for Theorem~\ref{th_Han}. 
\end{thm}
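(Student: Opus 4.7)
My plan is to assemble the technical lemmas already established in Section~\ref{sec:anresolv} and audit the randomness budget of Algorithm~\ref{alg:encoding_cr}; the theorem is essentially a packaging result once those lemmas are in hand. For the convergence rate, I would invoke Lemma~\ref{lemAllb} directly: substituting the definition $\delta_N = 2^{-N^{\beta}}$ into $\delta_N^{(3)} = O(k^{3/2}N^2 \delta_N^{1/4})$ yields the stated $O(k^{3/2}N^2 2^{-N^{\beta}/4})$ bound for $\mathbb{D}(\widetilde{p}_{Y_{1:k}^{1:N}} || q_{Y^{1:kN}})$, which vanishes as $N \to \infty$ for any fixed $k$ and $\beta \in (0,1/2)$; soft covering under variational distance then follows via Pinsker's inequality.

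For the randomness rate, I would audit Algorithm~\ref{alg:encoding_cr} per channel use. The encoder consumes three kinds of randomness: the shared $\bar{C}_1$, which carries $|\mathcal{V}_{X|Y}|$ uniform symbols but is reused across all $k$ blocks and therefore contributes an amortized $|\mathcal{V}_{X|Y}|\log|\mathcal{X}|/(kN)$ per channel use; the fresh $C_i$ consumed block by block, contributing $|\mathcal{V}_X \setminus \mathcal{V}_{X|Y}|\log|\mathcal{X}|/N$ per channel use; and the conditional randomization at positions $\mathcal{V}_X^c$, whose rate $\frac{1}{N}\sum_{j \in \mathcal{V}_X^c} H(\widetilde{U}_i^j|\widetilde{U}_i^{1:j-1})$ tends to $0$ by Lemma~\ref{lemint}. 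Invoking the polarization limits $|\mathcal{V}_X|/N \to H(X)$ and $|\mathcal{V}_{X|Y}|/N \to H(X|Y)$ from \cite[Lemma~7]{Chou14c} converts the first two terms into $H(X|Y)/k$ and $I(X;Y) = H(X) - H(X|Y)$, respectively, giving the announced $I(X;Y) + H(X|Y)/k$; letting $k \to \infty$ then recovers the channel-resolvability lower bound of Theorem~\ref{th_Han}.

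The complexity claim will follow because each block requires only one application of the source polar transform $G_n$ and one successive-cancellation-style pass to sample the components $\widetilde{U}_i^{1:N}[\mathcal{V}_X^c]$ from the conditional laws in \eqref{true_distribcr}; both operations run in $O(N\log N)$ using Arikan's recursive structure, and there are $k$ blocks. The main conceptual obstacle is not in this packaging step but was dispatched earlier, by Lemmas~\ref{lem3} and~\ref{lem5}: showing that recycling $\bar{C}_1$ across blocks does not corrupt the joint approximation of $q_{Y^{1:kN}}$. Once that inter-block independence is combined, via Lemma~\ref{lemnew2}, with the per-block bound of Lemma~\ref{lem0}, as done in Lemma~\ref{lemAllb}, the present theorem reduces to the arithmetic above.
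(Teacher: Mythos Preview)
Your proposal is correct and follows essentially the same approach as the paper's proof: invoke Lemma~\ref{lemAllb} for the divergence bound, compute the randomness rate as $(|\bar{C}_1|+|C_{1:k}|)/(kN)$ using Lemma~\ref{lemint} together with the polarization limits $|\mathcal{V}_X|/N\to H(X)$ and $|\mathcal{V}_{X|Y}|/N\to H(X|Y)$ from \cite[Lemma~7]{Chou14c}, and then take the limits first in $N$ and then in $k$. One minor slip to clean up: you insert a $\log|\mathcal{X}|$ factor when converting symbol counts to rates but then silently drop it when applying the stated polarization limits; the paper simply counts $|\bar{C}_1|$ and $|C_{1:k}|$ in symbols and applies those limits directly, so either remove that factor or carry it through consistently.
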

}
\begin{proof}
By Lemma \ref{lemint}, the overall rate of uniform symbols required is only
\begin{align*}
  \frac{|\bar{C}_1 | + |C_{1:k}|}{kN}
 & = \frac{ |\mathcal{V}_{X|Y}| + k |\mathcal{V}_X \backslash \mathcal{V}_{X|Y}| }{kN} \\
 & = \frac{ |\mathcal{V}_{X|Y}| }{kN} + \frac{|\mathcal{V}_X | - | \mathcal{V}_{X|Y}| }{N} \\
 & \xrightarrow{N \to \infty} I(X;Y) + \frac{H(X|Y)}{k} \\
 &  \xrightarrow{k \to \infty} I(X;Y) ,
 \end{align*}
 where we have used \cite[Lemma 7]{Chou14c}.  Finally, we conclude that the optimal rate $I(X;Y)$ is achieved with Lemma \ref{lemAllb}.  
\end{proof}
\section{Polar Coding for Empirical Coordination} \label{sec:EmpCord}

\textcolor{black}{We now develop an explicit and low-complexity coding scheme for empirical coordination that achieves the capacity region when the actions of Node $2$ are from an alphabet of prime cardinality. The idea is to perform (i) a \emph{conditional} resolvability-achieving random number generation and (ii) randomness recycling through block-Markov encoding. However, the coding scheme is simpler than in Section \ref{sec:resolvability} as the common randomness recycling can be performed and studied more directly. In particular, we will see that the encoding blocks may be treated independently of each other since the approximation of the target distribution is concerned with a one dimensional probability distribution, as opposed to a $kN$ dimensional probability distribution as in Section~\ref{sec:resolvability}.} 

More specifically, the coding scheme can be informally  summarized as follows. From $X^{1:N}$ and some common randomness  of rate close to $H(Y|X)$ shared with Node $2$, Node~$1$ constructs a random variable $\widetilde{Y}^{1:N}$ whose joint probability distribution with $X^{1:N}$ is close to the target distribution $q_{X^{1:N}Y^{1:N}}$. 
Moreover, Node $1$ constructs a message with rate close to $I(X;Y)$ such that Node~$2$ reconstructs $\widetilde{Y}^{1:N}$ with the message and the common randomness. Finally, encoding is performed over $k \in \mathbb{N}^*$ blocks by recycling the same common randomness, so that the overall rate of shared randomness vanishes as the number of blocks increases. We formally describe the coding scheme in Section \ref{sec:CSEC}, and present its analysis in Section~\ref{sec:anEC}.

\subsection{Coding Scheme} \label{sec:CSEC}
\textcolor{black}{In the following, we redefine the following notation to simplify discussion. Consider the random variables $X$, $Y$ distributed according to $q_{XY}$ over $\mathcal{X} \times \mathcal{Y}$, where $|\mathcal{Y}|$ is a prime number.} Let $N \triangleq 2^n$, $n \in \mathbb{N}^*$. Define $U^{1:N} \triangleq Y^{1:N} G_n$, where $G_n$ is defined in Section \ref{sec:notations}, and define for $\beta < 1/2$, $\delta_N \triangleq 2^{-N^{\beta}}$ as well as the sets
\begin{align*}
\mathcal{V}_Y &\triangleq \left\{ i \in \llbracket 1, N \rrbracket : H( U^i | U^{1:i-1}) >  \log |\mathcal{Y}|-\delta_N \right\},\\
\mathcal{H}_Y &\triangleq \left\{ i \in \llbracket 1, N \rrbracket : H( U^i | U^{1:i-1}) >  \delta_N \right\},\\
\mathcal{V}_{Y|X} &\triangleq \left\{ i \in \llbracket 1, N \rrbracket : H( U^i | U^{1:i-1} X^{1:N}) >  \log |\mathcal{Y}| -\delta_N \right\},\\
\mathcal{H}_{Y|X} &\triangleq \left\{ i \in \llbracket 1, N \rrbracket : H( U^i | U^{1:i-1} X^{1:N}) >  \delta_N \right\}.
\end{align*}
Note that the sets $\mathcal{V}_Y$, $\mathcal{H}_Y$, $\mathcal{V}_{Y|X}$, and $\mathcal{H}_{Y|X}$ are defined with respect to $q_{XY}$. Note also that 
\begin{align*}
\lim_{N\to \infty} |\mathcal{V}_{Y}| / N = H(Y) &= \lim_{N\to \infty} |\mathcal{V}_{Y}| / N, \\
\lim_{N\to \infty} |\mathcal{V}_{X|Y}| / N = H(X|Y)&= \lim_{N\to \infty} |\mathcal{H}_{X|Y}| / N,
\end{align*}
 by \cite[Lemmas 6,7]{Chou14c}, where \cite[Lemma 6]{Chou14c} follows from~\cite[Theorem 3.2]{Sasoglu11}. An interpretation of $\mathcal{H}_{Y|X}$ in terms of  source coding with side information is provided in \cite{Arikan10}, and an interpretation of $\mathcal{V}_{Y|X}$  in terms of  privacy amplification is provided in~\cite{Chou14rev,Chou14c}. Encoding is performed over $k \in \mathbb{N}^*$ blocks of length $N$. We use the subscript $i \in \llbracket 1,k \rrbracket$ to denote random variables associated with encoding Block $i$. The encoding and decoding procedures are described in Algorithms~\ref{alg:encoding_2} and \ref{alg:encoding_3}, respectively. 


\begin{rem}
The coding scheme for each block is similar to lossy source coding schemes \cite{Korada10,Honda13}, as suggested by the optimal communication rate described in Theorem~\ref{Th1}. However, the performance metric of interest is different.
\end{rem}

\begin{algorithm}[]
  \caption{Encoding algorithm at Node $1$ for empirical coordination}
  \label{alg:encoding_2}
  \begin{algorithmic}   [1] 
    \REQUIRE A vector $C_1$ of $|\mathcal{V}_{Y|X}|$ uniformly distributed symbols shared with Node $2$ and $X_{1:k}^{1:N}$.
    \FOR{Block $i=1$ to $k$}
    \STATE $C_i \leftarrow C_1$
    \STATE $\widetilde{U}_i^{1:N}[\mathcal{V}_{Y|X}]\leftarrow C_i$
    \STATE Given $X_i^{1:N}$, successively draw the remaining components of $\widetilde{U}_i^{1:N}$ according to $\widetilde{p}_{U_i^{1:N}X_i^{1:N}}$ defined by
  \begin{align}  
& \widetilde{p}_{U_i^j|U_i^{1:j-1}X_i^{1:N}} (u_i^j|\widetilde{U}_i^{1:j-1}X_i^{1:N})  \nonumber \\
&\!\! \triangleq \!\begin{cases}
    {q}_{U^j|U^{1:j-1}X^{1:N}} (u_i^j|\widetilde{U}_i^{1:j-1}X_i^{1:N})  &\!\! \text{if } j \in \mathcal{H}_Y \backslash {\mathcal{V}}_{Y|X}\\
        {q}_{U^j|U^{1:j-1}} (u_i^j|\widetilde{U}_i^{1:j-1}) &\!\! \text{if }  j \in  \mathcal{H}_{Y}^c \label{true_distrib}
 \end{cases} 
\end{align}
    \STATE Transmit $M_i \triangleq \widetilde{U}_i^{1:N}[ \mathcal{H}_Y \backslash {\mathcal{V}}_{Y|X}]$ and $\widetilde{C}_i$, the randomness necessary to draw $\widetilde{U}_i^{1:N}[\mathcal{H}_{Y}^c]$ 
    \ENDFOR
  \end{algorithmic}
\end{algorithm}

\begin{algorithm}[]
  \caption{Decoding algorithm at Node $2$ for empirical coordination}
  \label{alg:encoding_3}
  \begin{algorithmic} [1]   
    \REQUIRE The vectors $C_1$, $\widetilde{C}_{1:k}$, used in Algorithm \ref{alg:encoding_2} and $M_{1:k}$.
    \FOR{Block $i=1$ to $k$}
    \STATE $C_i \leftarrow C_1$
    \STATE $\widetilde{U}_i^{1:N}[\mathcal{V}_{Y|X}]\leftarrow C_i$  
    \STATE $ \widetilde{U}_i^{1:N}[ \mathcal{H}_Y \backslash {\mathcal{V}}_{Y|X}]
\leftarrow M_i$
    \STATE Using $\widetilde{C}_i$, successively draw the remaining components of $\widetilde{U}_i^{1:N}$ according to ${q}_{U^j|U^{1:j-1}} $ 
    \STATE $\widetilde{Y}_i^{1:N} \leftarrow  \widetilde{U}_i^{1:N} G_n$
        \ENDFOR
  \end{algorithmic}
\end{algorithm}

\subsection{Scheme Analysis} \label{sec:anEC}
The following lemma shows that $\widetilde{p}_{X^{1:N}Y^{1:N}}$, defined by $\widetilde{p}_{X^{1:N}} \triangleq q_{X^{1:N}}$ and Equation (\ref{true_distrib}), approximates $q_{X^{1:N}Y^{1:N}}$.
  
\begin{lem} \label{lemstrongcoord}
 For any $ i \in \llbracket 1,k\rrbracket$, 
 $$\mathbb{V}(q_{X^{1:N}Y^{1:N}} , \widetilde{p}_{X_i^{1:N}Y_i^{1:N}}) \leq \sqrt{2 \log 2} \sqrt{N \delta_N}.$$
\end{lem}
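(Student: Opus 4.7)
My plan is to bound the KL divergence $\mathbb{D}(q_{X^{1:N}Y^{1:N}}\,\|\,\widetilde{p}_{X_i^{1:N}Y_i^{1:N}})$ by $N\delta_N$ and then apply Pinsker's inequality to obtain the claimed variational-distance bound. The argument closely parallels Lemma~\ref{lem0} in the channel-resolvability section, but with an extra index-set split to handle the three clauses of~(\ref{true_distrib}).

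First I would pass to the $U$-domain: since $\widetilde{U}_i^{1:N}=\widetilde{Y}_i^{1:N}G_n^{-1}$ and $G_n$ is invertible, the divergence is preserved, so it suffices to bound $\mathbb{D}(q_{X^{1:N}U^{1:N}}\,\|\,\widetilde{p}_{X_i^{1:N}U_i^{1:N}})$. Using the chain rule for divergence together with the identity $\widetilde{p}_{X_i^{1:N}}=q_{X^{1:N}}$ (Node~1 feeds the true samples of $X^{1:N}$ into Algorithm~\ref{alg:encoding_2}), the $X$-marginal divergence vanishes, leaving
$$\sum_{j=1}^{N}\mathbb{E}_{q_{U^{1:j-1}X^{1:N}}}\!\!\left[\mathbb{D}\!\left(q_{U^j|U^{1:j-1}X^{1:N}}\,\|\,\widetilde{p}_{U_i^j|U_i^{1:j-1}X_i^{1:N}}\right)\right].$$

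Then I would partition $\llbracket 1,N\rrbracket$ into the three sets $\mathcal{V}_{Y|X}$, $\mathcal{H}_Y\setminus\mathcal{V}_{Y|X}$, and $\mathcal{H}_Y^c$ used in~(\ref{true_distrib}); this is a genuine partition because $\mathcal{V}_{Y|X}\subseteq\mathcal{H}_Y$ for $N$ large, since $H(U^j|U^{1:j-1}X^{1:N})\leq H(U^j|U^{1:j-1})$. For $j\in\mathcal{H}_Y\setminus\mathcal{V}_{Y|X}$ the two conditionals coincide by construction, so the contribution is zero. For $j\in\mathcal{V}_{Y|X}$ the approximating conditional is uniform on $\mathcal{Y}$, whence the inner divergence equals $\log|\mathcal{Y}|-H(U^j|U^{1:j-1}X^{1:N})<\delta_N$ by definition of $\mathcal{V}_{Y|X}$. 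For $j\in\mathcal{H}_Y^c$ the approximating conditional is $q_{U^j|U^{1:j-1}}$, which does not depend on $X^{1:N}$; after the outer expectation the inner expression telescopes into the conditional mutual information $I(U^j;X^{1:N}|U^{1:j-1})\leq H(U^j|U^{1:j-1})\leq \delta_N$ by definition of $\mathcal{H}_Y^c$. Summing the three contributions yields $\mathbb{D}\leq N\delta_N$, and Pinsker's inequality delivers the announced variational bound.

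The step that demands the most care is the $\mathcal{H}_Y^c$ case, because there---unlike in Lemma~\ref{lem0}---the approximating conditional has shed its dependence on $X^{1:N}$ while the target has not; the trick is to recognize that the expected divergence reduces to $I(U^j;X^{1:N}|U^{1:j-1})$, which is then controlled by the small-entropy defining inequality of $\mathcal{H}_Y^c$. Other than this bookkeeping, no inter-block coupling arises (the lemma concerns a single block) and the exactness $\widetilde{p}_{X_i^{1:N}}=q_{X^{1:N}}$ eliminates any further complication.
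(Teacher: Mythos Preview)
Your proposal is correct and follows essentially the same approach as the paper, which defers the proof to that of Lemma~\ref{lemstrongcoord2}: bound $\mathbb{D}(q_{X^{1:N}Y^{1:N}}\|\widetilde{p}_{X_i^{1:N}Y_i^{1:N}})\leq N\delta_N$ via the chain rule and the three-way split of~(\ref{true_distrib}), then apply Pinsker's inequality. Your handling of the $\mathcal{H}_Y^c$ case---recognizing the expected divergence as $I(U^j;X^{1:N}|U^{1:j-1})\leq H(U^j|U^{1:j-1})\leq\delta_N$---is exactly the step the paper uses in Lemma~\ref{lemstrongcoord2} for the analogous set $\mathcal{H}_V^c$.
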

\begin{rem}
The encoder in each block performs a conditional resolvability-achieving random number generation by Lemma~\ref{lemstrongcoord} and because $|C_1|/N=|\mathcal{V}_{Y|X}| / N 
 \xrightarrow{N \to \infty} {H(Y|X)}$. \end{rem}
The proof of Lemma \ref{lemstrongcoord} is similar to the proof of Lemma~\ref{lemstrongcoord2} in Section \ref{sec:anSC} and is thus omitted.
The following lemma shows that empirical coordination holds for each block.
\textcolor{black} {
\begin{lem} \label{lem_EC}
Let $\epsilon>0$. For $i\in \llbracket 1 , k \rrbracket$, we have 
$$ 
\mathbb{P} \left[ \mathbb{V} \left( q_{XY} , T_{X^{1:N}_{i}\widetilde{Y}^{1:N}_{i}} \right) > \epsilon \right] \leq \delta^{(A)}(N),
$$
where $ \delta^{(A)}(N ) = O \left( \sqrt{N \delta_N} \right) $.
\end{lem}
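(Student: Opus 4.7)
The plan is to exploit Lemma~\ref{lemstrongcoord}, which states that the induced joint distribution $\widetilde{p}_{X_i^{1:N}Y_i^{1:N}}$ is $O(\sqrt{N\delta_N})$-close in variational distance to the target i.i.d.\ distribution $q_{X^{1:N}Y^{1:N}}$. The strategy is therefore to prove the concentration of the joint type $T_{X^{1:N}Y^{1:N}}$ around $q_{XY}$ under the \emph{target} i.i.d.\ product distribution (where it is immediate), and then transfer the bound to the induced distribution using the standard inequality that, for any event $\mathcal{A}$,
\begin{equation*}
\left| \mathbb{P}_{\widetilde{p}}[\mathcal{A}] - \mathbb{P}_q[\mathcal{A}] \right| \leq \mathbb{V}(\widetilde{p},q).
\end{equation*}

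First I would handle the concentration step. Under $q_{X^{1:N}Y^{1:N}}=\prod_{i=1}^N q_{XY}$, each entry $T_{X^{1:N}Y^{1:N}}(x,y) = \frac{1}{N}\sum_{j=1}^N \mathds{1}\{(X^j,Y^j)=(x,y)\}$ is an average of i.i.d.\ Bernoulli random variables with mean $q_{XY}(x,y)$. By Hoeffding's inequality applied coordinate-wise and a union bound over $\mathcal{X}\times\mathcal{Y}$,
\begin{equation*}
\mathbb{P}_q\!\left[ \mathbb{V}\!\left(q_{XY},T_{X^{1:N}Y^{1:N}}\right) > \epsilon \right]
\leq 2|\mathcal{X}||\mathcal{Y}| \exp\!\left( -\frac{2N\epsilon^2}{|\mathcal{X}|^2|\mathcal{Y}|^2} \right),
\end{equation*}
which decays exponentially in $N$.

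Then, applying the variational-distance transfer inequality to the event $\mathcal{A}_\epsilon \triangleq \{ \mathbb{V}(q_{XY},T_{X^{1:N}_iY^{1:N}_i})>\epsilon \}$, using Lemma~\ref{lemstrongcoord}, gives
\begin{equation*}
\mathbb{P}_{\widetilde{p}}[\mathcal{A}_\epsilon]
\leq \mathbb{P}_q[\mathcal{A}_\epsilon] + \mathbb{V}\!\left(q_{X^{1:N}Y^{1:N}},\widetilde{p}_{X_i^{1:N}Y_i^{1:N}}\right)
\leq 2|\mathcal{X}||\mathcal{Y}| e^{-cN\epsilon^2} + \sqrt{2\log 2}\sqrt{N\delta_N},
\end{equation*}
with $c=2/(|\mathcal{X}||\mathcal{Y}|)^2$. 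Since $\delta_N=2^{-N^\beta}$ with $\beta<1/2$, the exponential term is absorbed into the $O(\sqrt{N\delta_N})$ term, establishing $\delta^{(A)}(N)=O(\sqrt{N\delta_N})$ as claimed.

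I do not expect a serious obstacle here: the result is really a corollary of Lemma~\ref{lemstrongcoord} combined with the classical concentration of empirical joint types under an i.i.d.\ law. The only point requiring a small amount of care is packaging the coordinate-wise concentration into a bound on the full variational distance $\mathbb{V}(q_{XY},T)$, which is why a union bound together with the $1/(|\mathcal{X}||\mathcal{Y}|)$ rescaling of $\epsilon$ is used above.
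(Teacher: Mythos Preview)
Your proposal is correct and follows essentially the same approach as the paper: Hoeffding's inequality with a coordinate-wise union bound to control the type concentration under the i.i.d.\ target law $q$, followed by the transfer $|\mathbb{P}_{\widetilde{p}}[\mathcal{A}]-\mathbb{P}_q[\mathcal{A}]|\leq \mathbb{V}(\widetilde{p},q)$ together with Lemma~\ref{lemstrongcoord}. The only cosmetic difference is that the paper centers the Bernoulli indicators to lie in $[-1,1]$ before applying Hoeffding, yielding the constant $N\epsilon^2/(2|\mathcal{X}|^2|\mathcal{Y}|^2)$ rather than your $2N\epsilon^2/(|\mathcal{X}||\mathcal{Y}|)^2$, which is immaterial for the $O(\sqrt{N\delta_N})$ conclusion.
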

\begin{proof}
For $\epsilon >0$, define $\mathcal{T}_{\epsilon}(q_{XY}) \triangleq \{ (x^{1:N},y^{1:N}) : \mathbb{V} \left( q_{XY} , T_{x^{1:N}y^{1:N}} \right) \leq \epsilon \}$. We define for a joint distribution $q$ over $(\mathcal{X}\times \mathcal{Y})$, 
\begin{align*}
&\mathbb{P}_q [(X^{1:N},Y^{1:N}) \in \mathcal{T}_{\epsilon}(q_{XY})] \\ 
&\triangleq  \!\!\!\sum_{x^{1:N},y^{1:N}} \!\!\! q_{X^{1:N}Y^{1:N}} (x^{1:N},y^{1:N}) \mathds{1} \{ (x^{1:N},y^{1:N}) \in \mathcal{T}_{\epsilon}(q_{XY}) \}.
\end{align*}
Note that $\lim_{N \to \infty} \mathbb{P}_q [(X^{1:N},Y^{1:N}) \notin \mathcal{T}_{\epsilon}(q_{XY})] = 0$ by the AEP \cite{Cover91}, and we can precise the convergence rate as follows.
\begin{align}
&	\mathbb{P}_q [(X^{1:N},Y^{1:N}) \notin \mathcal{T}_{\epsilon}(q_{XY})] \nonumber \\ \nonumber
& = 	\mathbb{P}_q \left[\mathbb{V} \left( q_{XY}, T_{X^{1:N} Y^{1:N}} \right) > \epsilon\right]\\ \nonumber
& = 	\mathbb{P}_q \left[\sum_{x,y} | q_{XY} (x,y) - T_{X^{1:N}{Y}^{1:N}}(x,y) | > \epsilon\right]\\ \nonumber
& \leq 	\mathbb{P}_q \left[\exists (x,y), | q_{XY} (x,y) - T_{X^{1:N}{Y}^{1:N}}(x,y) | \geq \frac{\epsilon} {|\mathcal{X}| |\mathcal{Y}|}\right]\\ \nonumber
& \leq 	\sum_{x,y} \mathbb{P}_q \left[| q_{XY} (x,y) - T_{X^{1:N}{Y}^{1:N}}(x,y) | \geq \frac{\epsilon} {|\mathcal{X}| |\mathcal{Y}|}\right]\\ \nonumber
& = 	{\sum_{x,y} \mathbb{P}_q \left[ \left|\frac{1}{N}\sum_{j=1}^N  \left[q_{XY} (x,y) - \mathds{1} \{ (X^{j},{Y}^{j}) = (x,y) \} \right] \right| \right. }\\ \nonumber 
& \left. \phantom{------------------} \geq \frac{\epsilon} {|\mathcal{X}| |\mathcal{Y}|}\right]\\ \nonumber
& \leq 	\sum_{x,y} 2 \exp \left(- \frac{N \epsilon^2}{2 |\mathcal{X}|^2 |\mathcal{Y}|^2}\right)\\
& = 	2 |\mathcal{X}| |\mathcal{Y}| \exp \left(- \frac{N \epsilon^2}{2 |\mathcal{X}|^2 |\mathcal{Y}|^2}\right), \label{eqhoef}
\end{align} 
where the second inequality holds by Hoeffding's inequality applied for each pair $(x,y)$ to the independent and zero-mean random variables $Z_j(x,y) \triangleq \left(q_{XY} (x,y) - \mathds{1} \{ (X^{j},{Y}^{j}) = (x,y)\} \right) \in [-1,1]$, $j \in \llbracket 1 , N \rrbracket$. Next, let $i\in \llbracket 1 , k \rrbracket$. we have 
\begin{align*}
& \mathbb{P}_{\widetilde{p}} \left[ \mathbb{V} \left( q_{XY} ,T_{X^{1:N}_{i}\widetilde{Y}^{1:N}_{i}} \right) > \epsilon \right]  \\
 & = \!\!\!\sum_{x^{1:N},y^{1:N}} \!\!\! \widetilde{p}_{X_i^{1:N}Y_i^{1:N}} (x^{1:N},y^{1:N}) \mathds{1} \{ (x^{1:N},y^{1:N}) \notin \mathcal{T}_{\epsilon}(q_{XY}) \}\\
  & = \!\!\! \sum_{x^{1:N},y^{1:N}} \!\!\! \left[ \widetilde{p}_{X_i^{1:N}Y_i^{1:N}}(x^{1:N},y^{1:N}) -{q}_{X^{1:N}Y^{1:N}} (x^{1:N},y^{1:N})  \right.  \\
 & \phantom{mm}+ \left.  {q}_{X^{1:N}Y^{1:N}} (x^{1:N},y^{1:N}) \right] \mathds{1} \{ (x^{1:N},y^{1:N}) \notin \mathcal{T}_{\epsilon}(q_{XY}) \}  \\
 & \leq  \mathbb{V}(\widetilde{p}_{X_i^{1:N}Y_i^{1:N}} ,{q}_{X^{1:N}Y^{1:N}}) + \mathbb{P}_q [(X^{1:N},Y^{1:N}) \notin \mathcal{T}_{\epsilon}(q_{XY})] \\
 & \leq  \sqrt{2 \log 2} \sqrt{N \delta_N}  + 2 |\mathcal{X}| |\mathcal{Y}| \exp \left(- \frac{N \epsilon^2}{2 |\mathcal{X}|^2 |\mathcal{Y}|^2}\right) ,
\end{align*}
where we have used Lemma \ref{lemstrongcoord} and \eqref{eqhoef}.
\end{proof}
}

\textcolor{black} {
We now show that empirical coordination holds for all blocks jointly.
\begin{lem} \label{lem_ECA}
Let $\epsilon>0$ and $\alpha \in ]0,1/2[$. We have 
$$ 
 \mathbb{P} \left[ \mathbb{V} \left( q_{XY} , T_{X^{1:N}_{1:k}\widetilde{Y}^{1:N}_{1:k}} \right) > \epsilon \right] \leq \delta^{(B)}(N),
$$
where $ \delta^{(B)}(N) =  O\left(k \sqrt{N \delta_N} \right)$.
\end{lem}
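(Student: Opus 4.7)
The plan is to reduce the joint statement over all $k$ blocks to the single-block statement of Lemma \ref{lem_EC} via two elementary observations. First, I would exploit the additive structure of the empirical type: for every $(x,y)\in\mathcal{X}\times\mathcal{Y}$,
\begin{align*}
T_{X_{1:k}^{1:N}\widetilde{Y}_{1:k}^{1:N}}(x,y)
&=\frac{1}{kN}\sum_{i=1}^{k}\sum_{j=1}^{N}\mathds{1}\{(X_i^j,\widetilde{Y}_i^j)=(x,y)\}\\
&=\frac{1}{k}\sum_{i=1}^{k}T_{X_i^{1:N}\widetilde{Y}_i^{1:N}}(x,y),
\end{align*}
so the joint type is the arithmetic mean of the per-block types. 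Applying the triangle inequality to the variational distance (which is a norm on the simplex) then yields
$$\mathbb{V}\!\left(q_{XY},T_{X_{1:k}^{1:N}\widetilde{Y}_{1:k}^{1:N}}\right)\leq \frac{1}{k}\sum_{i=1}^{k}\mathbb{V}\!\left(q_{XY},T_{X_i^{1:N}\widetilde{Y}_i^{1:N}}\right).$$

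Second, I would use a union-bound argument: if the average on the right exceeds $\epsilon$, then at least one summand must exceed $\epsilon$, so
$$\bigl\{\mathbb{V}(q_{XY},T_{X_{1:k}^{1:N}\widetilde{Y}_{1:k}^{1:N}})>\epsilon\bigr\}\subseteq\bigcup_{i=1}^{k}\bigl\{\mathbb{V}(q_{XY},T_{X_i^{1:N}\widetilde{Y}_i^{1:N}})>\epsilon\bigr\}.$$
Taking probabilities and invoking Lemma \ref{lem_EC} for each block,
$$\mathbb{P}\!\left[\mathbb{V}(q_{XY},T_{X_{1:k}^{1:N}\widetilde{Y}_{1:k}^{1:N}})>\epsilon\right]\leq \sum_{i=1}^{k}\mathbb{P}\!\left[\mathbb{V}(q_{XY},T_{X_i^{1:N}\widetilde{Y}_i^{1:N}})>\epsilon\right]\leq k\,\delta^{(A)}(N).$$
Since the $\sqrt{N\delta_N}$ term in $\delta^{(A)}(N)$ (with $\delta_N=2^{-N^{\beta}}$, $\beta<1/2$) dominates the Hoeffding exponential $\exp(-N\epsilon^{2}/(2|\mathcal{X}|^{2}|\mathcal{Y}|^{2}))$ in the regime of interest, this gives $\delta^{(B)}(N)=O(k\sqrt{N\delta_N})$ as claimed.

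I do not expect any serious obstacle here: the argument is essentially a convexity/triangle-inequality observation combined with a union bound, and all the probabilistic heavy lifting has already been carried out in Lemma \ref{lem_EC}. The only subtle point is recognizing that the joint type $T_{X_{1:k}^{1:N}\widetilde{Y}_{1:k}^{1:N}}$ is \emph{exactly} the per-block average of types, which is specific to the fact that the empirical distribution is a one-dimensional marginal summary and hence blends linearly across blocks; this is precisely the simplification highlighted in the introduction of Section \ref{sec:EmpCord}, where it is noted that, unlike the strong coordination setting, block dependencies are irrelevant for the empirical metric.
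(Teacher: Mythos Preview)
Your proposal is correct and follows essentially the same approach as the paper: both establish the inequality $\mathbb{V}(q_{XY},T_{X_{1:k}^{1:N}\widetilde{Y}_{1:k}^{1:N}})\leq \frac{1}{k}\sum_{i=1}^{k}\mathbb{V}(q_{XY},T_{X_i^{1:N}\widetilde{Y}_i^{1:N}})$ via the averaging structure of the joint type and the triangle inequality, then pass to a union bound and apply Lemma~\ref{lem_EC} to obtain $k\delta^{(A)}(N)$.
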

\begin{proof}
We have 
\begin{align}
&\mathbb{V} \left( q_{XY} , T_{x^{1:N}_{1:k}\widetilde{y}^{1:N}_{1:k}} \right) \nonumber \\
& = \sum_{x,y} \left| q_{XY}(x,y) - \frac{1}{kN} \sum_{j=1}^{k} \sum_{i=1}^{N} \mathds{1} \{ (x_j^i,\widetilde{y}_j^i)= (x,y) \}  \right|  \nonumber \\ \nonumber
& = \sum_{x,y} \left| \sum_{j=1}^k \left(\frac{1}{k}q_{XY}(x,y) - \frac{1}{kN} \sum_{i=1}^{N} \mathds{1} \{ (x_j^i,\widetilde{y}_j^i)= (x,y) \} \right)  \right|  \\ \nonumber
& \leq \frac{1}{k} \sum_{j=1}^k \sum_{x,y} \left|  q_{XY}(x,y) - \frac{1}{N} \sum_{i=1}^{N} \mathds{1} \{ (x_j^i,\widetilde{y}_j^i)= (x,y) \}   \right| \\
& \leq \frac{1}{k} \sum_{j=1}^k \mathbb{V} \left( q_{XY}  , T_{x_j^{1:N}\widetilde{y}_j^{1:N}}  \right), \label{eqineq}
\end{align}
hence,
\begin{align*}
&\mathbb{P} \left[ \mathbb{V} \left( q_{XY} , T_{X^{1:N}_{1:k}\widetilde{Y}^{1:N}_{1:k}}  \right) > \epsilon \right] \\
& \stackrel{(a)}{\leq} \mathbb{P} \left[ \frac{1}{k} \sum_{j=1}^k \mathbb{V} \left( q_{XY}  , T_{x_j^{1:N}\widetilde{y}_j^{1:N}}  \right) > \epsilon \right] \\
& \leq  \mathbb{P} \left[ \exists j \in \llbracket 1 ,k \rrbracket ,  \mathbb{V} \left( q_{XY}  , T_{x_j^{1:N}\widetilde{y}_j^{1:N}}  \right) > \epsilon \right] \\
& \leq \sum_{j=1}^k \mathbb{P} \left[ \mathbb{V} \left( q_{XY}  , T_{x_j^{1:N}\widetilde{y}_j^{1:N}}  \right) > \epsilon \right] \\
& \stackrel{(b)}{\leq} k \delta_N^{(A)} ,
\end{align*}
where $(a)$ holds by \eqref{eqineq}, $(b)$ holds by Lemma \ref{lem_EC}.
\end{proof}
}%
\begin{thm} \label{ThEC}
\textcolor{black}{The coding scheme described in Algorithms \ref{alg:encoding_2} and \ref{alg:encoding_3}, which operates over $k$ blocks of length $N$, achieves the two-node network empirical coordination capacity region of Theorem \ref{Th1} for an arbitrary target distribution $q_{XY}$ over $\mathcal{X} \times \mathcal{Y}$, where $|\mathcal{Y}|$ is a prime number. The coding scheme is explicit with complexity in $O(kN \log N)$.}
\end{thm}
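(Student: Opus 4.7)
The plan is to establish four separate claims: empirical coordination holds with high probability, the per-symbol communication rate approaches $I(X;Y)$, the per-symbol common randomness rate approaches $0$ as the number of blocks grows, and the encoder/decoder operate in $O(kN\log N)$ time.

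First I would observe that empirical coordination is essentially free from the block analysis already in hand: Lemma \ref{lem_ECA} directly gives
\[
\mathbb{P}\bigl[\mathbb{V}(q_{XY},T_{X_{1:k}^{1:N}\widetilde{Y}_{1:k}^{1:N}})>\epsilon\bigr]\leq O\bigl(k\sqrt{N\delta_N}\bigr)\xrightarrow[N\to\infty]{}0
\]
for any fixed $k$ and any $\epsilon>0$. The common randomness rate is then $R_0=|\mathcal{V}_{Y|X}|/(kN)$, since $C_1$ is drawn once and reused across all $k$ blocks; by \cite[Lemma~7]{Chou14c} this converges to $H(Y|X)/k$, which can be driven to zero by taking $k$ large.

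The main substantive step is the communication-rate bound. Per block, Node~$1$ transmits the pair $(M_i,\widetilde{C}_i)$, where $|M_i|=|\mathcal{H}_Y\backslash\mathcal{V}_{Y|X}|$ and $\widetilde{C}_i$ is the sampling randomness used to draw $\widetilde{U}_i^{1:N}[\mathcal{H}_Y^c]$. I would first verify the inclusion $\mathcal{V}_{Y|X}\subseteq\mathcal{H}_Y$ for $N$ sufficiently large (a direct consequence of $H(U^i|U^{1:i-1})\geq H(U^i|U^{1:i-1}X^{1:N})>\log|\mathcal{Y}|-\delta_N>\delta_N$), so that $|M_i|/N=|\mathcal{H}_Y|/N-|\mathcal{V}_{Y|X}|/N$ tends to $H(Y)-H(Y|X)=I(X;Y)$ by \cite[Lemmas~6,7]{Chou14c}, which matches the required payload of Theorem \ref{Th1}. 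The hard part will be showing $|\widetilde{C}_i|/N\to 0$: by the definition of $\mathcal{H}_Y^c$, each true conditional entropy $H_q(U^j|U^{1:j-1})$ is bounded by $\delta_N$; transferring the $q$-entropy to the induced $\widetilde{p}$-entropy via Lemma \ref{corent}, together with the KL bound of order $N\delta_N$ underlying Lemma \ref{lemstrongcoord} (in direct analogy with the calculation in Lemma \ref{lemint}), yields $\sum_{j\in\mathcal{H}_Y^c}H_{\widetilde{p}}(\widetilde{U}_i^j|\widetilde{U}_i^{1:j-1})=o(N)$, and implementing the coordinate-wise sampling with a near-optimal converter from uniform bits to the target conditional distribution then bounds the expected length of $\widetilde{C}_i$ by this vanishing sum up to lower-order terms.

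For complexity, the polar transform $G_n$ is computed in $O(N\log N)$, and the standard polar recursion for evaluating the conditional laws $q_{U^j|U^{1:j-1}X^{1:N}}$ and $q_{U^j|U^{1:j-1}}$ as $j$ ranges over $\llbracket 1,N\rrbracket$ is also $O(N\log N)$ per block, yielding $O(kN\log N)$ in total. The decoder's reconstruction is exact because $C_1$, $M_{1:k}$, and $\widetilde{C}_{1:k}$ together determine $\widetilde{U}_i^{1:N}$ and hence $\widetilde{Y}_i^{1:N}=\widetilde{U}_i^{1:N}G_n$ block by block, so the induced joint distribution at Nodes~$1$ and~$2$ is exactly the $\widetilde{p}_{X_i^{1:N}Y_i^{1:N}}$ analyzed above. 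The chief obstacle is the rate accounting of $\widetilde{C}_i$: without a tight bound the transmitted rate would exceed $I(X;Y)$ and Theorem \ref{Th1} would not be met.
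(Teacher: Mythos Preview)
Your proposal is correct and follows essentially the same approach as the paper: empirical coordination via Lemma~\ref{lem_ECA}, common randomness rate $|\mathcal{V}_{Y|X}|/(kN)\to H(Y|X)/k\to 0$, and the vanishing rate of $\widetilde{C}_i$ by transferring the $q$-entropy bound on $\mathcal{H}_Y^c$ to $\widetilde{p}$ via Lemma~\ref{corent} exactly as in Lemma~\ref{lemint}. The only cosmetic difference is that the paper bounds $|M_i|/N$ by decomposing $\mathcal{H}_Y\backslash\mathcal{V}_{Y|X}$ into $(\mathcal{V}_Y\backslash\mathcal{V}_{Y|X})\cup((\mathcal{H}_Y\backslash\mathcal{V}_Y)\backslash\mathcal{V}_{Y|X})$ and using $|\mathcal{H}_Y|-|\mathcal{V}_Y|\to 0$, whereas you more directly invoke $\mathcal{V}_{Y|X}\subseteq\mathcal{H}_Y$ to write $|M_i|/N=|\mathcal{H}_Y|/N-|\mathcal{V}_{Y|X}|/N\to I(X;Y)$; both are equivalent.
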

\begin{proof}
The communication rate is
\begin{align*}
&\frac{k|\mathcal{H}_Y \backslash {\mathcal{V}}_{Y|X}|}{kN} \\
& = \frac{ |{\mathcal{V}}_Y \backslash {\mathcal{V}}_{Y|X}| + |(\mathcal{H}_Y \backslash {\mathcal{V}}_Y  ) \backslash {\mathcal{V}}_{Y|X}|}{N}  \\
& \leq \frac{|{\mathcal{V}}_Y \backslash {\mathcal{V}}_{Y|X}| + |\mathcal{H}_Y \backslash {\mathcal{V}}_Y  |}{N}  \\
& = \frac{|{\mathcal{V}}_Y |-| {\mathcal{V}}_{Y|X}| + |\mathcal{H}_Y |-| {\mathcal{V}}_Y  |}{N}   \xrightarrow{N \to \infty} I(X;Y),
\end{align*}
where we have used \cite[Lemmas 6,7]{Chou14c} for the limit. 
Node 1 also communicates the randomness $\widetilde{C}_{1: k}$ to allow Node 2 to form $\widetilde{U}^{1:N}_{1:k}[\mathcal{H}^c_Y]$, but its rate is $o(N)$ since $$\lim_{N \to \infty} \frac{1}{N}\sum_{j \in \mathcal{H}_{Y}^c} H(\widetilde{U}^{j}|\widetilde{U}^{1:j-1}) = 0,$$
which can be shown using Lemma \ref{lemstrongcoord} similar to the proof of Theorem \ref{ThChr}.
Then, the common randomness rate is
\begin{align*}
\frac{|\mathcal{V}_{Y|X}| }{kN} 
 \xrightarrow{N \to \infty} \frac{H(Y|X)}{k} \xrightarrow{k \to \infty} 0,
\end{align*}
where the limit holds by \cite[Lemma 7]{Chou14c}. 
Finally, we conclude that the region described in Theorem \ref{Th1} is achieved with Lemma \ref{lem_ECA}.
\end{proof}
\section{Polar Coding for Strong Coordination} \label{sec:StrongCord}

\textcolor{black}{We finally design an explicit and low-complexity coding scheme for strong coordination that achieves the capacity region when the actions of Node $2$ are from an alphabet of prime cardinality. The idea is again to perform (i) conditional resolvability-achieving random number generation and (ii) common randomness recycling with block-Markov encoding as in Section \ref{sec:resolvability}. In addition, we also simulate discrete memoryless channels with polar codes as opposed to assuming that this operation can be perfectly realized. 
}

\textcolor{black}{An informal description of the strong coordination coding scheme is as follows. From $X^{1:N}$ and some common randomness of rate close to $H(V|X)$ shared with Node $2$, Node $1$ constructs a random variable $\widetilde{V}^{1:N}$ whose joint probability distribution with $X^{1:N}$ is close to the target distribution $q_{X^{1:N}V^{1:N}}$. Moreover, Node~$1$ constructs a message with rate close to $I(X;V)$ such that Node~$2$ reconstructs $\widetilde{V}^{1:N}$ with the message and the common randomness. Then, Node $2$ simulates a discrete memoryless channel with input $\widetilde{V}^{1:N}$ to form $\widetilde{Y}^{1:N}$ whose joint distribution with $X^{1:N}$ is close to $q_{X^{1:N}Y^{1:N}}$. Finally, \emph{part} of the common randomness is recycled by encoding over $k \in \mathbb{N}^*$ blocks, so that the overall rate of shared randomness becomes on the order of $I(V;Y|X)$.
 As in Section~\ref{sec:resolvability} for channel resolvability, the main difficulty is to ensure that the joint probability distributions of the actions approach the target distribution over all blocks jointly, despite reusing part of the common randomness over all blocks and despite an imperfect simulation of discrete memoryless channels.
}
The coding scheme is formally described in Section~\ref{sec:CSSC}, and its analysis is presented in Section~\ref{sec:anSC}.

\subsection{Coding Scheme} \label{sec:CSSC}

\textcolor{black}{We first redefine the following notation}.
\textcolor{black}{Consider the random variables $X$, $Y$, $V$ distributed according to  $q_{XYV}$ over $\mathcal{X} \times \mathcal{Y} \times\mathcal{V} $ such that $X \to V \to Y$. Moreover, assume that $|\mathcal{Y}|$ and $|\mathcal{V}|$ are prime numbers. By Theorem \ref{th_c}, one can choose $|\mathcal{V}|$ as the smallest prime number greater than or equal to $(|\mathcal{X}||\mathcal{Y}|+1)$.}
 Let $N \triangleq 2^n$, $n \in \mathbb{N}^*$. Define $U^{1:N} \triangleq V^{1:N} G_n$, $T^{1:N} \triangleq Y^{1:N} G_n$, where $G_n$ is defined in Section \ref{sec:notations}, and define for $\beta < 1/2$ and  $\delta_N \triangleq 2^{-N^{\beta}} $ the sets
\begin{align*}
\mathcal{H}_V &\triangleq \left\{ i \in \llbracket 1, N \rrbracket : H( U^i | U^{1:i-1}) > \delta_N \right\},\\
\mathcal{V}_{V|X} &\triangleq \left\{ i \in \llbracket 1, N \rrbracket : H( U^i | U^{1:i-1} X^{1:N}) > \log |\mathcal{V}| - \delta_N \right\},\\
\mathcal{V}_{Y|V} &\triangleq \left\{ i \in \llbracket 1, N \rrbracket : H( T^i | T^{1:i-1} V^{1:N}) > \log |\mathcal{Y}| - \delta_N \right\},\\
\mathcal{V}_{V|XY} &\triangleq \left\{ i \in \llbracket 1, N \rrbracket : H( U^i | U^{1:i-1} X^{1:N} Y^{1:N}) \right. \\
& \left. \phantom{---------------}> \log |\mathcal{V}| - \delta_N \right\}.
\end{align*}
Note that the sets $\mathcal{H}_V$, $\mathcal{V}_{V|X}$, $\mathcal{V}_{V|XY}$, and $\mathcal{V}_{Y|V}$ are defined with respect to $q_{XYV}$. Similar to the previous sections, we have 
\begin{align*}
\lim_{N\to \infty} |\mathcal{H}_{V}| / N &= H(V), \\ \lim_{N\to \infty} |\mathcal{V}_{V|X}| / N &= H(V|X), \\ \lim_{N\to \infty} |\mathcal{V}_{V|XY}| / N &= H(V|XY), \\ \lim_{N\to \infty} |\mathcal{V}_{Y|V}| / N &= H(Y|V).
\end{align*} 
Note also that $ \mathcal{V}_{V|XY} \subset \mathcal{V}_{V|X} \subset \mathcal{V}_V$. We define  ${\mathcal{F}_1} \triangleq \mathcal{H}_V^c$, ${\mathcal{F}}_2 \triangleq \mathcal{V}_{V|XY}$, $\mathcal{F}_3 \triangleq \mathcal{V}_{V|X}  \backslash \mathcal{V}_{V|XY}$, and $\mathcal{F}_4 \triangleq \mathcal{H}_{V}  \backslash \mathcal{V}_{V|X}$ such that $(\mathcal{F}_1,\mathcal{F}_2,\mathcal{F}_3,\mathcal{F}_4)$ forms a partition of $\llbracket 1 , N \rrbracket$. 
Encoding is performed over $k \in \mathbb{N}^*$ blocks of length $N$. We use the subscript $i \in \llbracket 1,k \rrbracket$ to denote random variables associated to encoding Block $i$. The encoding and decoding procedures are described in Algorithms~\ref{alg:encoding_4} and \ref{alg:encoding_5}, respectively. The functional dependence graph of the coding scheme is depicted in Figure~\ref{figFGD2}.


\begin{figure*}
\centering
  \includegraphics[width=14cm]{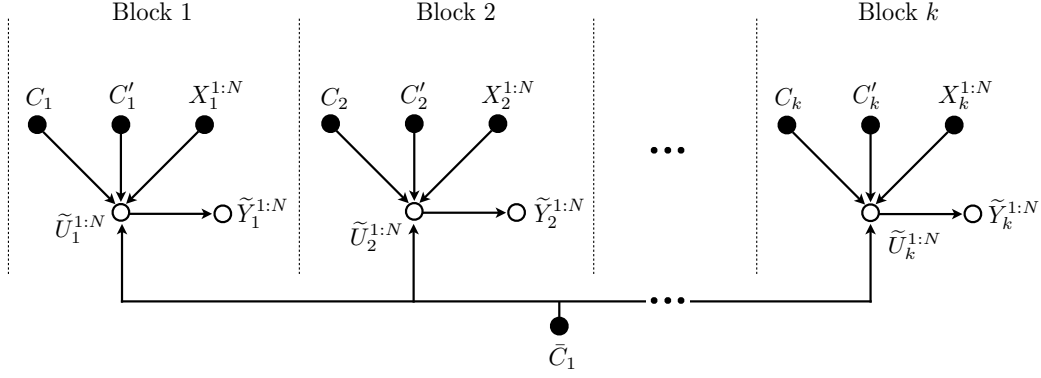}
  \caption{Functional dependence graph of the block encoding scheme for strong coordination. For Block $i$, $(C_i, C_i', \bar{C}_{i}$) is the common randomness shared by Node $1$ and $2$ used at the encoder to form $\widetilde{U}_i$, where $\bar{C}_i = \bar{C}_1$ is reused over all blocks.}
  \label{figFGD2}
\end{figure*}

\begin{algorithm}[]
  \caption{Encoding algorithm at Node $1$ for strong coordination}
  \label{alg:encoding_4}
  \begin{algorithmic} [1]   
    \REQUIRE A vector $C_{1:k}$ of $k|\mathcal{F}_3|$ uniformly distributed symbols over $\llbracket 1 , |\mathcal{V}| \rrbracket$ shared with Node $2$. A vector $\bar{C}_{1}$ of $|\mathcal{F}_2|$ uniformly distributed symbols over $\llbracket 1 , |\mathcal{V}| \rrbracket$ shared with Node $2$ and $X_{1:k}^{1:N}$.
    \FOR{Block $i=1$ to $k$}
    \STATE $\bar{C}_i \leftarrow \bar{C}_1$
    \STATE $\widetilde{U}_i^{1:N}[\mathcal{F}_{2}]\leftarrow \bar{C}_i$
    \STATE $\widetilde{U}_i^{1:N}[\mathcal{F}_{3}]\leftarrow C_i$
    \STATE Given $X_i^{1:N}$, successively draw the remaining components of $\widetilde{U}_i^{1:N}$ according to $\widetilde{p}_{U_i^{1:N}X_i^{1:N}}$ defined by
 \begin{align} \label{true_distrib2}
& \widetilde{p}_{U_i^{j}| U^{1:j-1}_i X_i^{1:N}}(u_i^{j}| \widetilde{U}^{1:j-1}_i X_i^{1:N}) \nonumber \\
& \triangleq 
\begin{cases}
 q_{U^{j}|U^{1:j-1}}(u_i^j|\widetilde{U}_i^{1:j-1}) &\text{ if } j \in  \mathcal{F}_1  , \\
 q_{U^{j}|U^{1:j-1}X^{1:N}}(u_i^j|\widetilde{U}_i^{1:j-1}X_i^{1:N}) &\text{ if } j \in  \mathcal{F}_4   .
\end{cases}
\end{align}    \STATE Transmit $M_i \triangleq \widetilde{U}_i^{1:N}[ \mathcal{F}_4]$ and $C'_i$, the randomness necessary to draw $\widetilde{U}_i^{1:N}[\mathcal{F}_1]$, to Node $2$.
    \ENDFOR
  \end{algorithmic}
\end{algorithm}

\begin{algorithm}[h]
  \caption{Decoding algorithm at Node $2$ for strong coordination}
  \label{alg:encoding_5}
  \begin{algorithmic} [1]   
    \REQUIRE The vectors $C_{1:k}$, $C'_{1:k}$, and $\bar{C}_1$ used in Algorithm~\ref{alg:encoding_4} and $M_{1:k}$.
    \FOR{Block $i=1$ to $k$}
    \STATE $\bar{C}_i \leftarrow \bar{C}_1$
    \STATE $\widetilde{U}_i^{1:N}[\mathcal{F}_{2}]\leftarrow \bar{C}_i$
    \STATE $\widetilde{U}_i^{1:N}[\mathcal{F}_{3}]\leftarrow C_i$  
    \STATE $ \widetilde{U}_i^{1:N}[\mathcal{F}_{4}] \leftarrow M_i$
    \STATE Using ${C}'_i$, successively draw the remaining components of $\widetilde{U}_i^{1:N}$ according to ${q}_{U^j|U^{1:j-1}} $ 
    \STATE $\widetilde{V}_i^{1:N} \leftarrow \widetilde{U}_i^{1:N}G_n$   
    \STATE Channel simulation: Given $\widetilde{V}_i^{1:N}$, successively draw the components of $\widetilde{T}_i^{1:N}$ according to $\widetilde{p}_{T_i^{1:N}V_i^{1:N}}$ defined by
 \begin{align}
& \widetilde{p}_{T_i^{j}| T^{1:j-1}_i V_i^{1:N}}(t_i^{j}| \widetilde{T}^{1:j-1}_i \widetilde{V}_i^{1:N})  \nonumber \\ &  \triangleq 
\begin{cases}
 1/|\mathcal{Y}| &\text{ if } j \in  \mathcal{V}_{Y|V} ,  \\
q_{T^{j}|T^{1:j-1}V^{1:N}}(t_i^j|\widetilde{T}_i^{1:j-1}\widetilde{V}_i^{1:N}) &\text{ if } j \in  \mathcal{V}_{Y|V}^c. 
\end{cases} \label{eqalg5}
\end{align}
    \STATE $\widetilde{Y}_i^{1:N} \leftarrow  \widetilde{T}_i^{1:N} G_n$
    \ENDFOR
  \end{algorithmic}
\end{algorithm}

\subsection{Scheme Analysis} \label{sec:anSC}
Although the coordination metric defined in Section \ref{secdefcoord} is the variational distance, our analysis will be performed with the Kullback-Leibler divergence to highlight similarities with the analysis of channel resolvability in Section \ref{sec:anresolv}. Reverting back to the variational distance is directly obtained with Pinsker's inequality.

The following lemma shows that $\widetilde{p}_{V_i^{1:N}X_i^{1:N}}$, defined by $\widetilde{p}_{X_i^{1:N}} \triangleq q_{X^{1:N}}$ and Equation (\ref{true_distrib2}), approximates $q_{V^{1:N}X^{1:N}}$.
\begin{lem} \label{lemstrongcoord2}
 For any $ i \in \llbracket 1,k\rrbracket$, 
 $$\mathbb{D}(q_{V^{1:N}X^{1:N}}|| \widetilde{p}_{V_i^{1:N} X_i^{1:N}}) \leq \delta_N^{(A)},
$$
 where $\delta_N^{(A)} \triangleq  N \delta_N $.
\end{lem}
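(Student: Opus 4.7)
The plan is to mimic the proof of Lemma~\ref{lem0}, adapted to the more elaborate four-region partition $(\mathcal{F}_1,\mathcal{F}_2,\mathcal{F}_3,\mathcal{F}_4)$ of $\llbracket 1,N\rrbracket$ used by Algorithm~\ref{alg:encoding_4}. First, I would apply the chain rule for Kullback--Leibler divergence to write
\begin{align*}
&\mathbb{D}(q_{V^{1:N}X^{1:N}} \| \widetilde{p}_{V_i^{1:N}X_i^{1:N}}) \\
&\quad = \mathbb{D}(q_{X^{1:N}} \| \widetilde{p}_{X_i^{1:N}}) + \mathbb{E}_{q_{X^{1:N}}}[\mathbb{D}(q_{V^{1:N}|X^{1:N}} \| \widetilde{p}_{V_i^{1:N}|X_i^{1:N}})].
\end{align*}
The first term vanishes because by construction $\widetilde{p}_{X_i^{1:N}} = q_{X^{1:N}}$, and the second term equals the analogous expression with $V^{1:N}$ replaced by $U^{1:N}$, by invertibility of $G_n$ over $\mathbb{F}_{|\mathcal{V}|}$ (which is well-defined since $|\mathcal{V}|$ is prime).

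Next, I would expand the conditional divergence via the chain rule on the polar transform coordinates and split the resulting sum over $j\in\llbracket 1,N\rrbracket$ according to the partition $(\mathcal{F}_1,\mathcal{F}_2,\mathcal{F}_3,\mathcal{F}_4)$. The four pieces are then handled as follows:
\begin{itemize}
\item For $j\in\mathcal{F}_4$, the definition \eqref{true_distrib2} gives $\widetilde{p}_{U_i^j|U_i^{1:j-1}X_i^{1:N}} = q_{U^j|U^{1:j-1}X^{1:N}}$, so the contribution is zero.
\item For $j\in\mathcal{F}_2\cup\mathcal{F}_3 = \mathcal{V}_{V|X}$, the coordinate $\widetilde{U}_i^j$ is a symbol of $\bar{C}_i$ or $C_i$, hence uniform and independent of $(\widetilde{U}_i^{1:j-1},X_i^{1:N})$; the expected divergence therefore equals $\log|\mathcal{V}| - H(U^j|U^{1:j-1}X^{1:N})$, which is at most $\delta_N$ by definition of $\mathcal{V}_{V|X}$.
\item For $j\in\mathcal{F}_1 = \mathcal{H}_V^c$, the definition \eqref{true_distrib2} uses $q_{U^j|U^{1:j-1}}$ in place of the true $q_{U^j|U^{1:j-1}X^{1:N}}$; the standard identity yields
\begin{align*}
&\mathbb{E}_{q_{U^{1:j-1}X^{1:N}}}[\mathbb{D}(q_{U^j|U^{1:j-1}X^{1:N}} \| q_{U^j|U^{1:j-1}})] \\
&\quad = I(U^j;X^{1:N}|U^{1:j-1}) \leq H(U^j|U^{1:j-1}) \leq \delta_N,
\end{align*}
where the last inequality uses $j\notin\mathcal{H}_V$.
\end{itemize}

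Summing the three nonzero contributions over $j$ gives at most $(|\mathcal{F}_1|+|\mathcal{F}_2|+|\mathcal{F}_3|)\delta_N \leq N\delta_N$, which is the desired bound $\delta_N^{(A)}$. The main subtlety, and really the only non-mechanical step, is recognizing that on $\mathcal{F}_1$ the divergence between the true conditional and the $X^{1:N}$-independent surrogate is a mutual information that is controllable by the smallness of $H(U^j|U^{1:j-1})$ on $\mathcal{H}_V^c$; all other pieces are direct consequences of the definitions of $\widetilde{p}$ and of $\mathcal{V}_{V|X}$.
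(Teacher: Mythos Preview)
Your proposal is correct and follows essentially the same approach as the paper's proof: chain rule to strip off $X^{1:N}$ (which contributes zero), invertibility of $G_n$ to pass from $V$ to $U$, chain rule over the coordinates $j$, and then a case split over the partition $(\mathcal{F}_1,\mathcal{F}_2,\mathcal{F}_3,\mathcal{F}_4)$ with the same bounds on each piece. The only cosmetic difference is that on $\mathcal{F}_1$ the paper writes the contribution directly as the entropy difference $H(U^j|U^{1:j-1}) - H(U^j|U^{1:j-1}X^{1:N})$ and bounds it by the first term, whereas you name this quantity as the conditional mutual information $I(U^j;X^{1:N}|U^{1:j-1})$ before applying the same bound; the arguments are identical in substance.
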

\begin{proof}
We have
\begin{align*}
& \mathbb{D}(q_{V^{1:N}X^{1:N}}|| \widetilde{p}_{V_i^{1:N} X_i^{1:N}})  \\
& \stackrel{(a)}{=} \mathbb{D}(q_{U^{1:N}X^{1:N}}|| \widetilde{p}_{U_i^{1:N} X_i^{1:N}})  \\
& \stackrel{(b)}{=} \mathbb{E}_{q_{X^{1:N}}} \left[ \mathbb{D}(q_{U^{1:N}|X^{1:N}}|| \widetilde{p}_{U_i^{1:N}|X_i^{1:N}} )\right]  \\
& \stackrel{(c)}{=} \sum_{j=1}^{N} \mathbb{E}_{q_{U^{1:j-1}X^{1:N}}} \left[ \mathbb{D}(q_{U^{j}|U^{1:j-1}X^{1:N}}|| \widetilde{p}_{U_i^j|U_i^{1:j-1}X_i^{1:N}}) \right] \\
& \stackrel{(d)}{=} \!\!\!\!\! \sum_{j \in \mathcal{F}_{1} \cup \mathcal{F}_{2} \cup \mathcal{F}_{3} } \!\!\!\!\!\!\!   \mathbb{E}_{q_{U^{1:j-1}X^{1:N}}} \left[  \mathbb{D}(q_{U^{j}|U^{1:j-1}X^{1:N}}|| \widetilde{p}_{U^j_i|U_i^{1:j-1}X_i^{1:N}}  ) \right]  \\
& \stackrel{(e)}{=} \sum_{j\in \mathcal{V}_{V|X}} ( \log |\mathcal{V}|  -H(U^{j}|U^{1:j-1}X^{1:N}) ) \\
&  \phantom{mm}+  \sum_{j\in \mathcal{H}_{V}^c}    (H(U^{j}|U^{1:j-1}) - H(U^j|U^{1:j-1}X^{1:N}) )  \\
& \leq |\mathcal{V}_{V|X}| \delta_N +  | \mathcal{H}_{V}^c|   \delta_N   \leq N \delta_N,
\end{align*}
where $(a)$ holds by invertibility of $G_n$, $(b)$ and $(c)$ hold by the chain rule for divergence \cite{Cover91}, $(d)$ holds by \eqref{true_distrib2}, $(e)$ holds by uniformity of $\widetilde{U}_i^{1:N}[\mathcal{F}_2 \cup \mathcal{F}_3]=\widetilde{U}_i^{1:N}[\mathcal{V}_{V|X}]$, and by definition of   $\widetilde{p}_{U^j_i|U_i^{1:j-1}X_i^{1:N}}$ in~(\ref{true_distrib2}).
\end{proof}
\begin{rem}
By Lemma \ref{lemstrongcoord2} and because $|\mathcal{F}_2 \cup \mathcal{F}_3|/N=|\mathcal{V}_{V|X}| / N 
 \xrightarrow{N \to \infty} {H(V|X)}$, note that the encoding algorithm of Section \ref{sec:CSSC} performs a conditional resolvability-achieving random number generation in each block. 
\end{rem}
We now show that strong coordination holds for each block in the following lemma.
\begin{lem} \label{lem_indiv}
For $i \in \llbracket 1 , k \rrbracket $, we have
\begin{align*} 
& \mathbb{D} (\widetilde{p}_{X_{i}^{1:N}Y_{i}^{1:N}}|| q_{X^{1:N}Y^{1:N}} )   \\
& \leq \mathbb{D} (\widetilde{p}_{V_{i}^{1:N}X_{i}^{1:N}Y_{i}^{1:N}}|| q_{V^{1:N}X^{1:N}Y^{1:N}} )\\
&  \leq \delta_N^{(B)},
\end{align*}
 where $\delta_N^{(B)} = O\left(N^{3/2} \delta_N^{1/2} \right)$.
 \end{lem}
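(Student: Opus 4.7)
The first inequality is immediate from the chain rule for Kullback--Leibler divergence: writing
$$\mathbb{D}(\widetilde{p}_{V_i^{1:N}X_i^{1:N}Y_i^{1:N}}\|q_{V^{1:N}X^{1:N}Y^{1:N}}) = \mathbb{D}(\widetilde{p}_{X_i^{1:N}Y_i^{1:N}}\|q_{X^{1:N}Y^{1:N}}) + \mathbb{E}_{\widetilde{p}_{X_i^{1:N}Y_i^{1:N}}}[\mathbb{D}(\widetilde{p}_{V_i^{1:N}|X_i^{1:N}Y_i^{1:N}}\|q_{V^{1:N}|X^{1:N}Y^{1:N}})],$$
and dropping the nonnegative second term. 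So the substance of the lemma is the second inequality.

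The plan is to first bound the \emph{reversed} divergence $\mathbb{D}(q_{V^{1:N}X^{1:N}Y^{1:N}}\|\widetilde{p}_{V_i^{1:N}X_i^{1:N}Y_i^{1:N}})$ by $O(N\delta_N)$ using the same techniques as in Lemma~\ref{lemstrongcoord2}, and then flip the order using a reverse Pinsker step (exactly in the spirit of Lemma~\ref{lemnew2}, noted in Remark~\ref{remB} and used in Lemma~\ref{lemAllb}). Applying the chain rule again gives
$$\mathbb{D}(q_{V^{1:N}X^{1:N}Y^{1:N}}\|\widetilde{p}_{V_i^{1:N}X_i^{1:N}Y_i^{1:N}}) = \mathbb{D}(q_{V^{1:N}X^{1:N}}\|\widetilde{p}_{V_i^{1:N}X_i^{1:N}}) + \mathbb{E}_{q_{V^{1:N}X^{1:N}}}[\mathbb{D}(q_{Y^{1:N}|V^{1:N}X^{1:N}}\|\widetilde{p}_{Y_i^{1:N}|V_i^{1:N}X_i^{1:N}})].$$
The first term is at most $N\delta_N$ by Lemma~\ref{lemstrongcoord2}. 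For the second term, the Markov chain $X\to V\to Y$ under $q_{XYV}$ gives $q_{Y^{1:N}|V^{1:N}X^{1:N}} = q_{Y^{1:N}|V^{1:N}}$, while Algorithm~\ref{alg:encoding_5} generates $\widetilde{Y}_i^{1:N}$ from $\widetilde{V}_i^{1:N}$ alone (as encoded in Figure~\ref{figFGD2}), so $\widetilde{p}_{Y_i^{1:N}|V_i^{1:N}X_i^{1:N}} = \widetilde{p}_{Y_i^{1:N}|V_i^{1:N}}$. The expectation therefore collapses to $\mathbb{E}_{q_{V^{1:N}}}[\mathbb{D}(q_{Y^{1:N}|V^{1:N}}\|\widetilde{p}_{Y_i^{1:N}|V_i^{1:N}})]$. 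Invertibility of $G_n$, the chain rule applied to $T^{1:N}\triangleq Y^{1:N}G_n$, and the definition of $\widetilde{p}_{T_i^j|T_i^{1:j-1}V_i^{1:N}}$ in~\eqref{eqalg5} reduce this to $\sum_{j\in\mathcal{V}_{Y|V}}(\log|\mathcal{Y}|-H(T^j|T^{1:j-1}V^{1:N}))\leq|\mathcal{V}_{Y|V}|\delta_N\leq N\delta_N$, by the same telescoping that gave Lemma~\ref{lemstrongcoord2}. Thus the reversed divergence is at most $2N\delta_N$.

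To conclude, I apply Pinsker followed by reverse Pinsker: letting $\mu \triangleq \min\{q_{XYV}(x,y,v): q_{XYV}(x,y,v)>0\}$, one has $\mu_{q_{V^{1:N}X^{1:N}Y^{1:N}}} = \mu^N$, and
$$\mathbb{D}(\widetilde{p}_{V_i^{1:N}X_i^{1:N}Y_i^{1:N}}\|q_{V^{1:N}X^{1:N}Y^{1:N}}) \leq N\log(1/\mu)\,\mathbb{V}(\widetilde{p},q) \leq N\log(1/\mu)\sqrt{2\ln 2}\sqrt{2N\delta_N},$$
which is $O(N^{3/2}\delta_N^{1/2}) = \delta_N^{(B)}$, as desired. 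The main obstacle is the careful use of the two Markov structures, since we must simultaneously exploit $X\to V\to Y$ under the target $q$ \emph{and} its counterpart under the induced $\widetilde{p}$ (guaranteed by the channel simulation step of Algorithm~\ref{alg:encoding_5}); everything else is a clean reuse of the polarization bound of Lemma~\ref{lemstrongcoord2} applied to the channel simulation, plus a single application of the reverse Pinsker argument from Appendix~\ref{appdiv}.
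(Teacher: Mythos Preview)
Your proof is correct and, in fact, somewhat cleaner than the paper's own argument. Both proofs share the same two ingredients---the Markov structures $X\to V\to Y$ under $q$ and under $\widetilde{p}$, Lemma~\ref{lemstrongcoord2} for the $(V,X)$ part, and the channel-simulation bound $\sum_{j\in\mathcal{V}_{Y|V}}(\log|\mathcal{Y}|-H(T^j|T^{1:j-1}V^{1:N}))\leq N\delta_N$---but they assemble them differently. You first bound the \emph{full} reversed divergence $\mathbb{D}(q_{V^{1:N}X^{1:N}Y^{1:N}}\|\widetilde{p}_{V_i^{1:N}X_i^{1:N}Y_i^{1:N}})\leq 2N\delta_N$ in one chain-rule step (crucially using that both $q_{Y|VX}$ and $\widetilde{p}_{Y|VX}$ collapse to conditionals on $V$ alone), and then flip the arguments with a single application of Lemma~\ref{lemnew1}. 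The paper instead applies a three-term triangle-type inequality in the style of Lemma~\ref{lemnew2}, introducing the hybrid distributions $\widetilde{p}_{Y|V}q_{VX}$ and $\widetilde{p}_{YV}q_{X|V}$ and bounding each resulting piece separately by $\sqrt{\delta_N^{(A)}}$ or $\sqrt{2N\delta_N}$; this yields the same $O(N^{3/2}\delta_N^{1/2})$ but with an extra factor of roughly~$3$ in the constant. Your route avoids the intermediate distributions entirely and makes the role of the two Markov chains more transparent; the paper's route has the minor advantage of reusing Lemma~\ref{lemnew2} verbatim rather than invoking Lemma~\ref{lemnew1} directly.
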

\begin{proof}
We have 
\begin{align}
& \mathbb{D} (\widetilde{p}_{V_{i}^{1:N}X_{i}^{1:N}Y_{i}^{1:N}} || q_{V^{1:N}X^{1:N}Y^{1:N}} ) \nonumber   \\ \nonumber
& = \mathbb{D} (\widetilde{p}_{Y_{i}^{1:N}| V_{i}^{1:N}X_{i}^{1:N}} \widetilde{p}_{V_{i}^{1:N}X_{i}^{1:N}} || q_{Y^{1:N}|V^{1:N}X^{1:N}}  q_{V^{1:N}X^{1:N}}) \nonumber  \\ \nonumber
& = \mathbb{D} (\widetilde{p}_{Y_{i}^{1:N}| V_{i}^{1:N}} \widetilde{p}_{V_{i}^{1:N}X_{i}^{1:N}} || q_{Y^{1:N}|V^{1:N}}  q_{V^{1:N}X^{1:N}}) \\ \nonumber
& \stackrel{(a)}{\leq} N\log \left(\frac{1}{\mu_{VXY}} \right) \sqrt{2 \ln 2} \\ \nonumber
& \times \left[ \sqrt{  \mathbb{D} (\widetilde{p}_{Y_i^{1:N}|V_{i}^{1:N}}  q_{V^{1:N}X^{1:N}} || \widetilde{p}_{Y_{i}^{1:N}| V_{i}^{1:N}} \widetilde{p}_{V_{i}^{1:N}X_{i}^{1:N}} ) } \right.\\ \nonumber
&\phantom{-l}+  \sqrt{\mathbb{D} ( \widetilde{p}_{Y_i^{1:N}|V_{i}^{1:N}}  q_{V^{1:N}X^{1:N}}||\widetilde{p}_{Y_i^{1:N}V_{i}^{1:N}} q_{X^{1:N}|V^{1:N}})}  \nonumber \\  \nonumber
&\phantom{-l}+ \left. \sqrt{\mathbb{D} ( q_{Y^{1:N}|V^{1:N}}  q_{V^{1:N}X^{1:N}} ||\widetilde{p}_{Y_i^{1:N}V_{i}^{1:N}} q_{X^{1:N}|V^{1:N}})} \right] \nonumber \\  \nonumber
& = N\log \left(\frac{1}{\mu_{VXY}} \right) \sqrt{2 \ln 2}   \left[ \sqrt{  \mathbb{D} (    q_{V^{1:N}X^{1:N}} ||\widetilde{p}_{V_{i}^{1:N}X_{i}^{1:N}} )} \right. \\
& \phantom{-l} \left. +  \sqrt{\mathbb{D} ( q_{V^{1:N}}|| \widetilde{p}_{V_{i}^{1:N}}  ) } +  \sqrt{\mathbb{D} ( q_{Y^{1:N}V^{1:N}}|| \widetilde{p}_{Y_i^{1:N}V_{i}^{1:N}} )} \right] \nonumber \\ \nonumber
& \stackrel{(b)}{\leq}  N\log \left(\frac{1}{\mu_{VXY}} \right) \sqrt{2 \ln 2} \left[ 2 \sqrt{\delta_N^{(A)} } \right. \\ 
& \phantom{---------} \left. + \sqrt{\mathbb{D} ( q_{Y^{1:N}V^{1:N}}|| \widetilde{p}_{Y_i^{1:N}V_{i}^{1:N}} )} \right], \label{eq:2}
\end{align}
where $(a)$ holds similar to the proof of Lemma \ref{lemnew2} in Appendix~\ref{appdiv}, $(b)$ holds by Lemma \ref{lemstrongcoord2}.  
We bound the right-hand side of~(\ref{eq:2}) by analyzing Step 8 of Algorithm \ref{alg:encoding_5} as follows:
\begin{align}
& \mathbb{D} ( q_{Y^{1:N}V^{1:N}}|| \widetilde{p}_{Y_i^{1:N}V_i^{1:N}} )  \nonumber \\ \nonumber
& \stackrel{(a)}{=} \mathbb{D} ( q_{T^{1:N}V^{1:N}}|| \widetilde{p}_{T_i^{1:N}V_i^{1:N}} )  \nonumber \\ \nonumber
& \stackrel{(b)}{=}  \mathbb{E}_{ q_{V^{1:N}}} \left[ \mathbb{D} ( q_{T^{1:N}|V^{1:N}} || \widetilde{p}_{T_i^{1:N}|V_i^{1:N}} ) \right]  +   \mathbb{D} ( q_{V^{1:N}} || \widetilde{p}_{V_i^{1:N}} )  \\ \nonumber
& \stackrel{(c)}{=}  \mathbb{E}_{ q_{V^{1:N}}} \left[ \mathbb{D} ( q_{T^{1:N}|V^{1:N}} || \widetilde{p}_{T_i^{1:N}|V_i^{1:N}} ) \right]  +   \mathbb{D} ( q_{U^{1:N}} || \widetilde{p}_{U_i^{1:N}} )  \\ \nonumber
& \stackrel{(d)}{\leq}    N \delta_N +  \mathbb{E}_{ q_{V^{1:N}}} \left[ \mathbb{D} ( q_{T^{1:N}|V^{1:N}} ||\widetilde{p}_{T_i^{1:N}|V_i^{1:N}} )  \right] \\ \nonumber
& \stackrel{(e)}{=} \! N \delta_N \!+\! \sum_{j =1}^N   \!\mathbb{E}_{ q_{T^{1:j-1}V^{1:N}}} \!\!\!\left[ \mathbb{D} ( q_{T^{j}|T^{1:j-1}V^{1:N}} ||\widetilde{p}_{T_i^{j}|T_i^{1:j-1}V_i^{1:N}}\!)\! \right]    \\ \nonumber
& \stackrel{(f)}{=} \! \! N \delta_N + \!\!\!\!\!\!\sum_{j \in \mathcal{V}_{Y|V}}  \!\!\!\! \! \mathbb{E}_{ q_{T^{1:j-1}V^{1:N}}} \!\!\!\left[ \mathbb{D} ( q_{T^{j}|T^{1:j-1}V^{1:N}} ||\widetilde{p}_{T_i^{j}|T_i^{1:j-1}V_i^{1:N}} \!)\! \right]    \\ \nonumber
& \stackrel{(g)}{=}  N \delta_N + \sum_{j \in \mathcal{V}_{Y|V}}    (\log |\mathcal{Y}|- H(T^{j}|T^{1:j-1}V^{1:N}))    \\ 
& \stackrel{(h)}{\leq} N \delta_N + |\mathcal{V}_{Y|V}|   \delta_N  \nonumber \\
& \leq 2 N \delta_N, \label{eq:3}
\end{align}
where $(a)$ and $(c)$ hold by invertibility of $G_n$ and the data processing inequality, $(b)$ holds by the chain rule for divergence \cite[Th. 2.5.3]{Cover91}, $(d)$ holds because $\mathbb{D} ( q_{U^{1:N}} || \widetilde{p}_{U_i^{1:N}} ) \leq \mathbb{D} ( q_{U^{1:N}X^{1:N}} ||\widetilde{p}_{U_i^{1:N}X_i^{1:N}} )$ (by the chain rule for divergence and positivity of the divergence) and because $\mathbb{D} ( q_{U^{1:N}X^{1:N}} ||\widetilde{p}_{U_i^{1:N}X_i^{1:N}} ) \leq  N \delta_N$ by the proof of Lemma~\ref{lemstrongcoord2}, $(e)$ holds by the chain rule for divergence, $(f)$ and $(g)$ hold by \eqref{eqalg5}, $(h)$ holds by definition of  $\mathcal{V}_{Y|V}$. Finally, combining (\ref{eq:2}), (\ref{eq:3}) yields the claim.
\end{proof}
Using Lemma \ref{lem_indiv}, we now establish the asymptotic independence of consecutive blocks. 
\begin{lem} \label{lem_adj}
For $i\in \llbracket 2, k \rrbracket$, we have,
$$
\mathbb{D} \left( \widetilde{p}_{X_{i-1:i}^{1:N}Y_{i-1:i}^{1:N}\bar{C}_1} || \widetilde{p}_{Y_{i-1}^{1:N}X_{i-1}^{1:N}} \widetilde{p}_{X_{i}^{1:N}Y_{i}^{1:N} \bar{C}_1} \right)  \leq \delta_N^{(C)},
$$
where $\delta_N^{(C)} = O\left(N^{9/4} \delta_N^{1/4} \right)$.
\end{lem}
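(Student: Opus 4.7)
The plan mirrors Lemma \ref{lem3}, adapted to the joint variables $(X_i^{1:N}, Y_i^{1:N})$ required by strong coordination. Setting $A \triangleq (X_{i-1}^{1:N}, Y_{i-1}^{1:N})$, $B \triangleq (X_i^{1:N}, Y_i^{1:N})$, and $C \triangleq \bar{C}_1$, a direct expansion of the definition gives $\mathbb{D}(\widetilde{p}_{ABC} \| \widetilde{p}_A \widetilde{p}_{BC}) = I_{\widetilde{p}}(A; BC)$, which by the chain rule for mutual information equals $I_{\widetilde{p}}(A; C) + I_{\widetilde{p}}(A; B \mid C)$. The conditional term vanishes: inspection of Figure \ref{figFGD2} shows that, given $\bar{C}_1$, block $i-1$ is driven only by the private randomizers $(X_{i-1}^{1:N}, C_{i-1}, C'_{i-1})$ together with the channel-simulation randomness used in Step 8 of Algorithm \ref{alg:encoding_5}, while block $i$ is driven by the analogous mutually independent variables, so $A$ and $B$ are conditionally independent given $C$.

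It then remains to bound $I_{\widetilde{p}}(A; C)$. Since $\bar{C}_1 = \widetilde{U}_{i-1}^{1:N}[\mathcal{V}_{V|XY}]$ is uniform by construction, $H_{\widetilde{p}}(\bar{C}_1) = |\mathcal{V}_{V|XY}| \log |\mathcal{V}|$, so
\[I_{\widetilde{p}}(A; C) = |\mathcal{V}_{V|XY}| \log |\mathcal{V}| - H_{\widetilde{p}}\bigl(\widetilde{U}_{i-1}^{1:N}[\mathcal{V}_{V|XY}] \,\big|\, X_{i-1}^{1:N} Y_{i-1}^{1:N}\bigr).\]
I will transfer this conditional entropy to the target distribution via Lemma \ref{corent}. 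Lemma \ref{lem_indiv} yields $\mathbb{D}\bigl(\widetilde{p}_{V_{i-1}^{1:N} X_{i-1}^{1:N} Y_{i-1}^{1:N}} \,\big\|\, q_{V^{1:N} X^{1:N} Y^{1:N}}\bigr) \leq \delta_N^{(B)} = O(N^{3/2}\delta_N^{1/2})$; combined with invertibility of $G_n$ and Pinsker's inequality, this gives a variational distance of order $D \triangleq \sqrt{2\ln 2}\sqrt{\delta_N^{(B)}} = O(N^{3/4}\delta_N^{1/4})$ between the law of $(U_{i-1}^{1:N}[\mathcal{V}_{V|XY}], X_{i-1}^{1:N}, Y_{i-1}^{1:N})$ under $\widetilde{p}$ and under $q$. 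Lemma \ref{corent} bounds the resulting entropy gap by $N D \log(|\mathcal{V}||\mathcal{X}||\mathcal{Y}|/D)$, which, after absorbing $\log(1/D) = O(N^{\beta})$, is $O(N^{9/4}\delta_N^{1/4})$ for $\beta < 1/2$. Finally, under the target distribution, the chain rule for entropy and the definition of $\mathcal{V}_{V|XY}$ give
\[H_q\bigl(U^{1:N}[\mathcal{V}_{V|XY}] \,\big|\, X^{1:N} Y^{1:N}\bigr) \geq \sum_{j \in \mathcal{V}_{V|XY}} H_q(U^j \mid U^{1:j-1} X^{1:N} Y^{1:N}) \geq |\mathcal{V}_{V|XY}|(\log|\mathcal{V}| - \delta_N),\]
contributing an additional $O(N\delta_N)$ term that is absorbed in the dominant $O(N^{9/4}\delta_N^{1/4})$ bound.

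The main obstacle will be the book-keeping of polynomial factors in $N$ in the application of Lemma \ref{corent}: the product $N \cdot D \cdot \log(1/D)$ in fact only produces the exponent $7/4 + \beta$ in $N$, comfortably below $9/4$, but each step (from Lemma \ref{lem_indiv} through Pinsker, through entropy continuity, and finally against the target lower bound) has to be tracked carefully to avoid inflating the exponent. A secondary point is that the conditional-independence step presumes that the randomness implementing the channel-simulation step of Algorithm \ref{alg:encoding_5} is drawn independently across blocks, which is implicit in Figure \ref{figFGD2} but should be stated explicitly when formalizing the argument.
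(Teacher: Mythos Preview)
Your proposal is correct and follows essentially the same route as the paper, which simply instructs the reader to rerun the proof of Lemma~\ref{lem3} with the substitutions $q_{Y^{1:N}}\leftarrow q_{X^{1:N}Y^{1:N}}$, $q_{U^{1:N}}\leftarrow q_{V^{1:N}}$ (and their tilde counterparts), invoking the Markov chain $X_{i-1}^{1:N}\widetilde{Y}_{i-1}^{1:N}-\bar C_1-X_i^{1:N}\widetilde{Y}_i^{1:N}$ from Figure~\ref{figFGD2}. The only cosmetic difference is that, because the factorization in Lemma~\ref{lem_adj} groups $\bar C_1$ with block~$i$ (rather than block~$i-1$ as in Lemma~\ref{lem3}), your chain rule lands on $I(\bar C_1;X_{i-1}^{1:N}Y_{i-1}^{1:N})$ whereas a literal substitution in Lemma~\ref{lem3} would land on $I(\bar C_1;X_i^{1:N}Y_i^{1:N})$; since the per-block construction is identical, both are bounded the same way, and your decomposition is the one that actually matches the statement of Lemma~\ref{lem_adj}.
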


\begin{proof}
We reuse the proof of Lemma \ref{lem3} with the substitutions $q_{U^{1:N}} \leftarrow q_{V^{1:N}}$, $q_{Y^{1:N}} \leftarrow q_{X^{1:N}Y^{1:N}}$, $\widetilde{p}_{U_i^{1:N}} \leftarrow \widetilde{p}_{V_i^{1:N}}$, $\widetilde{p}_{Y_i^{1:N}} \leftarrow \widetilde{p}_{X_i^{1:N}Y_i^{1:N}}$. Note that the Markov condition ${X}_{i-1}^{1:N}\widetilde{Y}_{i-1}^{1:N} - \bar{C}_1 -  {X}_{i}^{1:N}\widetilde{Y}_{i}^{1:N}$ holds as seen in Figure \ref{figFGD2}. 
\end{proof}
Using Lemma \ref{lem_adj} we next show the asymptotic independence across all blocks.
\begin{lem} \label{lemAll}
We have
$$
\mathbb{D} \left( \widetilde{p}_{X_{1:k}^{1:N}Y_{1:k}^{1:N}} || \prod_{i =1}^k  \widetilde{p}_{X_{i}^{1:N}Y_{i}^{1:N}} \right) \leq (k-1) \delta_N^{(C)}.
$$
where $\delta_N^{(C)}$ is defined in Lemma \ref{lem_adj}.
\end{lem}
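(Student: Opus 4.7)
The plan is to mirror the argument used to prove Lemma \ref{lem5}, substituting the pair $(\widetilde{X}_i^{1:N},\widetilde{Y}_i^{1:N})$ for $\widetilde{Y}_i^{1:N}$ throughout and invoking Lemma \ref{lem_adj} in place of Lemma \ref{lem3}. First I would apply Lemma \ref{lemnew} of Appendix \ref{appdiv} to telescope the divergence of interest into a sum of mutual informations:
\begin{align*}
\mathbb{D}\left( \widetilde{p}_{X_{1:k}^{1:N}Y_{1:k}^{1:N}} \left\lVert \prod_{i=1}^{k} \widetilde{p}_{X_i^{1:N}Y_i^{1:N}} \right. \right) = \sum_{i=2}^{k} I\!\left(\widetilde{X}_i^{1:N}\widetilde{Y}_i^{1:N}\,;\,\widetilde{X}_{1:i-1}^{1:N}\widetilde{Y}_{1:i-1}^{1:N}\right).
\end{align*}

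Each summand is then enlarged by injecting the reused common randomness $\bar{C}_1$ into the ``past,'' giving $I(\widetilde{X}_i^{1:N}\widetilde{Y}_i^{1:N};\widetilde{X}_{1:i-1}^{1:N}\widetilde{Y}_{1:i-1}^{1:N}) \leq I(\widetilde{X}_i^{1:N}\widetilde{Y}_i^{1:N};\widetilde{X}_{1:i-1}^{1:N}\widetilde{Y}_{1:i-1}^{1:N}\bar{C}_1)$, and split via the chain rule for mutual information as the sum of an adjacent-block term $I(\widetilde{X}_i^{1:N}\widetilde{Y}_i^{1:N};\widetilde{X}_{i-1}^{1:N}\widetilde{Y}_{i-1}^{1:N}\bar{C}_1)$ and a conditional term $I(\widetilde{X}_i^{1:N}\widetilde{Y}_i^{1:N};\widetilde{X}_{1:i-2}^{1:N}\widetilde{Y}_{1:i-2}^{1:N} \mid \widetilde{X}_{i-1}^{1:N}\widetilde{Y}_{i-1}^{1:N}\bar{C}_1)$.

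The conditional term vanishes by the Markov chain $\widetilde{X}_{1:i-2}^{1:N}\widetilde{Y}_{1:i-2}^{1:N} - \bar{C}_1\widetilde{X}_{i-1}^{1:N}\widetilde{Y}_{i-1}^{1:N} - \widetilde{X}_i^{1:N}\widetilde{Y}_i^{1:N}$, which is read off the functional dependence graph of Figure \ref{figFGD2}: once the reused seed $\bar{C}_1$ is fixed, block $i$ is built from the i.i.d.\ source symbols $X_i^{1:N}$, the block-specific fresh randomness $(C_i,C_i')$, and independent channel-simulation randomness, so distinct blocks become mutually independent given $\bar{C}_1$. The surviving adjacent-block mutual information coincides (up to a harmless repositioning of $\bar{C}_1$ between the two marginals on the right of the divergence) with the quantity bounded by Lemma \ref{lem_adj}, and is therefore at most $\delta_N^{(C)}$. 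Summing $(k-1)$ such contributions yields the claimed bound.

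The only genuinely non-routine step in this plan is the verification of the Markov chain, which is the main obstacle in spirit but is essentially immediate from Figure \ref{figFGD2}; the remaining manipulations are the same chain-rule bookkeeping already carried out in Lemma \ref{lem5}, so no new inequalities are required and the bound inherits its rate $O(k N^{9/4}\delta_N^{1/4})$ directly from $(k-1)\delta_N^{(C)}$.
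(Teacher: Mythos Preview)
Your proposal is correct and is precisely the paper's own argument: the paper's proof of Lemma \ref{lemAll} is literally ``reuse the proof of Lemma \ref{lem5} with the substitution $\widetilde{p}_{Y_i^{1:N}} \leftarrow \widetilde{p}_{X_i^{1:N}Y_i^{1:N}}$, noting that the Markov chain $\widetilde{X}_{1:i-2}^{1:N}\widetilde{Y}_{1:i-2}^{1:N} - \bar{C}_1 \widetilde{X}_{i-1}^{1:N}\widetilde{Y}_{i-1}^{1:N} - \widetilde{X}_i^{1:N}\widetilde{Y}_i^{1:N}$ holds from Figure \ref{figFGD2},'' which is exactly what you wrote out. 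Your remark about the harmless repositioning of $\bar{C}_1$ is also apt: Lemma \ref{lem_adj} as stated groups $\bar{C}_1$ with block $i$ rather than block $i-1$, but by the Markov chain $\widetilde{X}_{i-1}^{1:N}\widetilde{Y}_{i-1}^{1:N} - \bar{C}_1 - \widetilde{X}_i^{1:N}\widetilde{Y}_i^{1:N}$ both versions reduce to the same bound on $I(\bar{C}_1;\widetilde{X}_j^{1:N}\widetilde{Y}_j^{1:N})$.
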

\begin{proof}
We reuse the proof of Lemma \ref{lem5} with the substitutions $\widetilde{p}_{Y_i^{1:N}} \leftarrow \widetilde{p}_{X_i^{1:N}Y_i^{1:N}}$. Note that the Markov condition ${X}_{1:i-2}^{1:N}\widetilde{Y}_{1:i-2}^{1:N} - \bar{C}_1 \widetilde{X}_{i-1}^{1:N} \widetilde{Y}_{i-1}^{1:N} - {X}_{i}^{1:N}\widetilde{Y}_{i}^{1:N}$ holds, as seen in Figure~\ref{figFGD2}. 
\end{proof}
Using Lemmas~\ref{lem_indiv} and \ref{lemAll}, we now show that strong coordination holds over all blocks.
\begin{lem} \label{lemAllC}
We have
$$
\mathbb{D} \left( \widetilde{p}_{X_{1:k}^{1:N}Y_{1:k}^{1:N}} || q_{X^{1:kN}Y^{1:kN}} \right) \leq  \delta_N^{(D)}.
$$
where $\delta_N^{(D)} = O\left(k^{3/2}N^{17/8} \delta_N^{1/8} \right)$.
\end{lem}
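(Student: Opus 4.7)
The plan is to mirror the strategy of Lemma \ref{lemAllb} used for channel resolvability. Since the target is the memoryless product $q_{X^{1:kN}Y^{1:kN}}=\prod_{i=1}^{k} q_{X^{1:N}Y^{1:N}}$, the natural pivot distribution is the product of per-block induced distributions $\prod_{i=1}^{k}\widetilde{p}_{X_i^{1:N}Y_i^{1:N}}$. I would invoke the triangle-like inequality for Kullback-Leibler divergence provided by Lemma \ref{lemnew2} in Appendix \ref{appdiv}, using $\mu_{q_{X^{1:kN}Y^{1:kN}}}=\mu_{q_{XY}}^{kN}$ to generate a prefactor of order $kN\log(1/\mu_{q_{XY}})\sqrt{2\ln 2}$. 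This reduces the task to controlling the two divergences $\mathbb{D}(\widetilde{p}_{X_{1:k}^{1:N}Y_{1:k}^{1:N}}\|\prod_{i=1}^{k}\widetilde{p}_{X_i^{1:N}Y_i^{1:N}})$ and $\mathbb{D}(q_{X^{1:kN}Y^{1:kN}}\|\prod_{i=1}^{k}\widetilde{p}_{X_i^{1:N}Y_i^{1:N}})$.

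The first divergence is exactly what Lemma \ref{lemAll} bounds, yielding $(k-1)\delta_N^{(C)}$; this is the term that encodes the inter-block dependence coming from recycling $\bar{C}_1$ across all blocks, as visible in Figure \ref{figFGD2}. For the second divergence, the product structure of $q$ combined with the chain rule for divergence gives
\begin{equation*}
\mathbb{D}(q_{X^{1:kN}Y^{1:kN}}\|\textstyle\prod_{i=1}^{k}\widetilde{p}_{X_i^{1:N}Y_i^{1:N}})=\sum_{i=1}^{k}\mathbb{D}(q_{X^{1:N}Y^{1:N}}\|\widetilde{p}_{X_i^{1:N}Y_i^{1:N}}).
\end{equation*}
Each summand is in the reverse direction to the bound supplied by Lemma \ref{lem_indiv}, so I would flip the direction using the asymmetric-divergence estimate from Appendix \ref{appdiv} (the same kind of tool underlying the remark on asymptotic symmetry after Lemma \ref{lem0}), producing a per-block control of order $\delta_N^{(B)}$ and hence a total of $O(k\delta_N^{(B)})$.

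Assembling the two pieces yields
\begin{equation*}
\mathbb{D}(\widetilde{p}_{X_{1:k}^{1:N}Y_{1:k}^{1:N}}\|q_{X^{1:kN}Y^{1:kN}})\leq kN\log(1/\mu_{q_{XY}})\sqrt{2\ln 2}\,\big[\sqrt{(k-1)\delta_N^{(C)}}+\sqrt{k\,\delta_N^{(B)}}\big].
\end{equation*}
Since $\sqrt{\delta_N^{(C)}}=O(N^{9/8}\,2^{-N^{\beta}/8})$ decays strictly more slowly than $\sqrt{\delta_N^{(B)}}=O(N^{3/4}\,2^{-N^{\beta}/4})$, the first term dominates and produces the announced order $\delta_N^{(D)}=O(k^{3/2}N^{17/8}\delta_N^{1/8})$.

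The main obstacle is the direction of the Kullback-Leibler divergence in the second summand: Lemma \ref{lem_indiv} supplies $\mathbb{D}(\widetilde{p}_i\|q)$, whereas the pivot-based inequality of Lemma \ref{lemnew2} asks for $\mathbb{D}(q\|\widetilde{p}_i)$. Handling this asymmetry cleanly while preserving the polynomial-in-$N$ scaling is the delicate point; the remaining ingredients---the chain rule, Lemma \ref{lemAll}, and the triangle-like inequality---plug in mechanically, as in the channel resolvability analysis.
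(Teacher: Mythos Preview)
Your approach is exactly the paper's: it says to reuse the proof of Lemma~\ref{lemAllb} with the substitutions $q_{Y^{1:N}} \leftarrow q_{X^{1:N}Y^{1:N}}$ and $\widetilde{p}_{Y_i^{1:N}} \leftarrow \widetilde{p}_{X_i^{1:N}Y_i^{1:N}}$, pivoting through $r=\prod_{i=1}^k\widetilde{p}_{X_i^{1:N}Y_i^{1:N}}$ via Lemma~\ref{lemnew2}, and then plugging in Lemma~\ref{lemAll} and Lemma~\ref{lem_indiv}. Your final estimate and the identification of $\sqrt{(k-1)\delta_N^{(C)}}$ as the dominant term are correct.

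The ``obstacle'' you flag is not actually there. Lemma~\ref{lemnew2} bounds $\mathbb{D}(p\|q)$ using $\sqrt{\min(\mathbb{D}(q\|r),\mathbb{D}(r\|q))}$, so you are free to take the $\mathbb{D}(r\|q)$ side. Since both $r=\prod_i\widetilde{p}_{X_i^{1:N}Y_i^{1:N}}$ and $q=\prod_i q_{X^{1:N}Y^{1:N}}$ are product measures, the divergence factors:
\[
\mathbb{D}\!\left(\textstyle\prod_{i=1}^{k}\widetilde{p}_{X_i^{1:N}Y_i^{1:N}}\,\middle\|\,q_{X^{1:kN}Y^{1:kN}}\right)=\sum_{i=1}^{k}\mathbb{D}\!\left(\widetilde{p}_{X_i^{1:N}Y_i^{1:N}}\,\middle\|\,q_{X^{1:N}Y^{1:N}}\right)\leq k\,\delta_N^{(B)},
\]
directly by Lemma~\ref{lem_indiv}. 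No flip via Lemma~\ref{lemnew1} is needed, and in particular you never have to control $\mu_{\widetilde{p}_i}$. With this observation your displayed bound holds verbatim and yields $\delta_N^{(D)}=O(k^{3/2}N^{17/8}\delta_N^{1/8})$.
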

\begin{proof}
We reuse the proof of Lemma \ref{lemAllb} with the substitutions $q_{Y^{1:N}} \leftarrow q_{X^{1:N}Y^{1:N}}$, $\widetilde{p}_{Y_i^{1:N}} \leftarrow \widetilde{p}_{X_i^{1:N}Y_i^{1:N}}$. 
\end{proof}
We can now state our final result as follows.
\begin{thm} \label{ThSC}
\textcolor{black}{The coding scheme described in Algorithms~\ref{alg:encoding_4} and~\ref{alg:encoding_5}, which operate over $k$ blocks of length $N$, achieves the two-node network strong coordination capacity region of Theorem \ref{th_c} for an arbitrary target distribution $q_{XY}$ over $\mathcal{X} \times \mathcal{Y}$, where $|\mathcal{Y}|$ is a prime number.  The coding scheme is explicit with complexity in $O(kN \log N)$.}
\end{thm}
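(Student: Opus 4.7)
The plan is to stitch together Lemma~\ref{lemAllC} with rate computations analogous to those used for channel resolvability in Theorem~\ref{ThChr} and for empirical coordination in Theorem~\ref{ThEC}. Applying Pinsker's inequality to Lemma~\ref{lemAllC} gives
$$\mathbb{V}(\widetilde{p}_{X_{1:k}^{1:N}Y_{1:k}^{1:N}}, q_{X^{1:kN}Y^{1:kN}}) \leq \sqrt{2\ln 2 \cdot \delta_N^{(D)}},$$
which vanishes as $N\to\infty$ provided $k$ does not grow too fast, and this yields the strong coordination metric of Definition~6.

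Next, I would evaluate the two rates. The per-block message consists of $M_i = \widetilde{U}_i^{1:N}[\mathcal{F}_4]$ together with the auxiliary randomness $C'_i$ needed to draw $\widetilde{U}_i^{1:N}[\mathcal{F}_1]$. The contribution of $M_i$ gives rate $|\mathcal{H}_V \backslash \mathcal{V}_{V|X}|/N \to H(V) - H(V|X) = I(X;V)$ by the asymptotic sizes of $\mathcal{H}_V$ and $\mathcal{V}_{V|X}$ recorded after their definitions, while the contribution of $C'_i$ is $\frac{1}{N}\sum_{j\in\mathcal{F}_1} H(\widetilde{U}_i^j|\widetilde{U}_i^{1:j-1}) \to 0$ by an argument essentially identical to Lemma~\ref{lemint}, with $\mathcal{F}_1 = \mathcal{H}_V^c$ playing the role of $\mathcal{V}_X^c$ and Lemma~\ref{lemstrongcoord2} playing the role of Lemma~\ref{lem0}. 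The total common randomness shared over $k$ blocks is $|\bar{C}_1| + k|\mathcal{F}_3| = |\mathcal{V}_{V|XY}| + k(|\mathcal{V}_{V|X}| - |\mathcal{V}_{V|XY}|)$, giving rate
$$\frac{|\bar{C}_1| + k|\mathcal{F}_3|}{kN} \xrightarrow{N\to\infty} \frac{H(V|XY)}{k} + I(V;Y|X),$$
which approaches $I(V;Y|X)$ as $k\to\infty$. Because $X\to V\to Y$ forces $I(X;V) + I(V;Y|X) = I(XY;V)$, the achieved corner point $(R,R_0) = (I(X;V), I(V;Y|X))$ lies on the boundary of the region in Theorem~\ref{th_c}, and the full region is recovered by varying $V$ and by padding with unused bits.

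Complexity is immediate: each block of Algorithms~\ref{alg:encoding_4} and~\ref{alg:encoding_5} requires one polar transform at each node in $O(N\log N)$ time together with $O(N)$ sequential conditional samplings, giving $O(kN\log N)$ overall. The scheme is explicit since $\mathcal{H}_V$, $\mathcal{V}_{V|X}$, $\mathcal{V}_{V|XY}$, and $\mathcal{V}_{Y|V}$ are fully determined by $q_{XYV}$.

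The main obstacle I anticipate is the vanishing of the $C'_i$ contribution, because the conditional entropies $H(\widetilde{U}_i^j|\widetilde{U}_i^{1:j-1})$ are computed under the induced distribution $\widetilde{p}$ rather than under $q$, so the defining property of $\mathcal{H}_V^c$ cannot be invoked directly. The bridge is the entropy-continuity estimate of Lemma~\ref{corent} combined with the divergence bound of Lemma~\ref{lemstrongcoord2}, exactly as in the passage from~\eqref{eq11a} to~\eqref{eq11} in the proof of Theorem~\ref{ThChr}; this is mechanical but must be carried out carefully since Step~5 of Algorithm~\ref{alg:encoding_4} only mimics $q$ approximately on the $\mathcal{F}_1$-positions.
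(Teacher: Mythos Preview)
Your proposal is correct and mirrors the paper's own proof: the paper likewise combines Lemma~\ref{lemAllC} (with Pinsker's inequality mentioned at the start of Section~\ref{sec:anSC}) with the rate computations $|\mathcal{F}_4|/N \to I(X;V)$ and $(|\bar{C}_1| + k|\mathcal{F}_3|)/(kN) \to I(V;Y|X)$, and handles the vanishing of the $C'_i$ contribution by the same entropy-continuity bridge from Lemma~\ref{lemint} that you describe. One small remark: the identity $I(X;V) + I(V;Y|X) = I(XY;V)$ is just the chain rule for mutual information and does not require the Markov chain $X\to V\to Y$.
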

\begin{proof}
We prove that the communication rate and the common randomness rate are optimal. 
The common randomness rate is
\begin{align}
  \frac{|\bar{C}_1 | + |C_{1:k}|}{kN}
 & = \frac{ |\mathcal{V}_{V|XY}| + k |\mathcal{V}_{V|X} \backslash \mathcal{V}_{V|XY}| }{kN}  \nonumber \\ \nonumber
 & = \frac{ |\mathcal{V}_{V|XY}| }{kN} + \frac{|\mathcal{V}_{V|X} | - | \mathcal{V}_{V|XY}| }{N} \\ \nonumber
 & \xrightarrow{N \to \infty} I(V;Y|X) + \frac{H(V|XY)}{k} \\
 &  \xrightarrow{k \to \infty} I(V;Y|X) , \label{eqrate1}
 \end{align}
 where we have used \cite[Lemma 7]{Chou14c}. 
 
Next, we determine the communication rate. Observe first that for any $i \in \llbracket 1 , k \rrbracket$,
  $$
  \lim_{N \to \infty} \frac{|C_i'|}{N}  = \lim_{N \to \infty} \frac{1}{N}\sum_{j \in \mathcal{H}_{V}^c} H(\widetilde{V}^{j}_i|\widetilde{V}^{1:j-1}_i) = 0,
  $$
which can be proved using Lemma \ref{lem_indiv} similar to the proof of Theorem \ref{ThChr}. 
Hence, the communication rate is
\begin{align}
  \frac{ |\widetilde{U}_{1:k}^{1:N} [\mathcal{F}_4]|}{kN}
 & = \frac{  k |\mathcal{F}_4| }{kN}  \nonumber  \\ \nonumber 
 & =  \frac{|\mathcal{V}_{V} | - | \mathcal{V}_{V|X}| }{N} \\  
 & \xrightarrow{N \to \infty} I(V;X)  , \label{eqrate2}
 \end{align}
 where the limit holds by  \cite[Lemma 7]{Chou14c}. 

We also have that the communication rate and the common randomness rate sum to
\begin{align*}
I(V;X) + I(V;Y|X) = I(V;XY),
\end{align*}
which together with \eqref{eqrate1} and \eqref{eqrate2} recovers the bounds of Theorem \ref{th_c}. This result along with Lemma \ref{lemAllC} allow us to conclude.
\end{proof}

\begin{rem}
The rate of local randomness $H(Y|V)$ used at Node 2 is optimal. It is possible to extend the converse proof of \cite{Cuff10} and include the local randomness rate $R_L$ at Node 2 to show that the strong coordination capacity region is 
\begin{align*}
& \mathcal{R}_{\textup{SC}}(q_{XY}) \\ 
& \triangleq \!\!\!\!\!\!\!\!  \smash{ \bigcup_{ \substack{X \to \ V \to Y \\ |\mathcal{V}|\leq |\mathcal{X}||\mathcal{Y}| +1}} \!\!\!  \!\! \!\! \{ (R,R_0,R_L) :  R + R_L \geq I(X;V) + H(Y|V), }\\
& \phantom{--------} \smash{ R +R_0 +R_L \geq I(XY;V) + H(Y|V) \}.}
\end{align*}
\end{rem}
\section{Concluding remarks} \label{secc}
\textcolor{black}{
We have demonstrated the ability of polar codes to provide solutions to problems related to soft covering. Specifically, we have proposed an explicit and low-complexity coding scheme for channel resolvability by relying on (i) conditional resolvability-achieving random number generation and (ii) randomness recycling through block-Markov encoding. As discussed in the introduction, our coding scheme generalizes previous  explicit coding schemes that achieve channel resolvability but were restricted to uniform distributions or symmetric channels.
}

\textcolor{black}{
Furthermore, by leveraging the coding scheme for channel resolvability, we have proposed explicit and low-complexity polar coding schemes that achieve the capacity regions of empirical coordination and strong coordination in two-node networks. 
}

\textcolor{black}{
Note that all our coding schemes require that the cardinality of the alphabet of actions at Node 2 (for Sections \ref{sec:EmpCord}, \ref{sec:StrongCord}) or of the channel input alphabet (for Section \ref{sec:resolvability}) be a prime number. This assumption could be removed if \cite[Lemma 6,7]{Chou14c} used throughout our proofs to establish various rates could be extended to  alphabet with arbitrary cardinalities. Such an extension of \cite[Lemma 6]{Chou14c} is provided in~\cite{Sasoglu11}, however, the problem of obtaining a similar extension for \cite[Lemma~7]{Chou14c} remains open.} 


\appendices
\section{Simple relations for the Kullback-Leibler divergence} \label{appdiv}
	
\textcolor{black}{We provide in this appendix simple relations satisfied by the Kullback-Leibler divergence. More specifically, Lemma \ref{lemnew1} allows us to characterize the symmetry of the Kullback-Leibler divergence around zero. Lemma~\ref{lemnew} allows us to express independence in terms of mutual informations. Lemma \ref{lemnew2} describes a relation similar to the triangle inequality for small values of the Kullback-Leibler divergence.  Finally, Lemma \ref{corent} provides an upper-bound on the difference of the entropy of two random variables defined over the same alphabet. In the proofs of the following lemmas, we only need to consider the case where all the Kullback-Leibler divergences are finite, since the lemmas trivially hold when any of the Kullback-Leibler divergences appearing in the lemmas is infinite (see Section \ref{sec:notations} for the convention used in the definition of the Kullback-Leibler divergence).}

\begin{lem} \label{lemnew1}
Let $p$, $q$ be two distributions over the finite alphabet $\mathcal{X}$. We have
$$
\mathbb{D}\left(p \Vert q \right) \leq  \log \left(\frac {1 }{\mu_{q}} \right) \sqrt{2 \ln 2} \sqrt{\mathbb{D}\left(q || p \right)},
$$	
where $\mu_q \triangleq \displaystyle\min_{x \in \mathcal{X}} q(x)$.
\end{lem}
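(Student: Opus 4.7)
The plan is to chain two bounds: first, control $\mathbb{D}(p\|q)$ by the total variation distance $\mathbb{V}(p,q)$ with a constant depending only on $\mu_q$; then invoke Pinsker's inequality to convert the $\mathbb{V}(p,q)$ factor into $\sqrt{\mathbb{D}(q\|p)}$.

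For the first step, I would establish the intermediate ``reverse Pinsker''-type inequality $\mathbb{D}(p\|q) \leq \log(1/\mu_q) \cdot \mathbb{V}(p,q)$. A natural route is to use the decomposition
\[
\mathbb{D}(p\|q) = \bigl[H(q) - H(p)\bigr] + \sum_{x \in \mathcal{X}} (p(x) - q(x)) \log(1/q(x)),
\]
which follows from $\mathbb{D}(p\|q) = -H(p) + \sum_x p(x) \log(1/q(x))$ together with $\sum_x p(x) \log(1/q(x)) = H(q) + \sum_x (p(x) - q(x)) \log(1/q(x))$. The second term is bounded in absolute value by $\log(1/\mu_q) \cdot \mathbb{V}(p,q)/2$ since $\log(1/q(x)) \leq \log(1/\mu_q)$ uniformly in $x$. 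The first term is controlled through standard entropy continuity, aided by the elementary observation that $H(q) \leq \log |\mathcal{X}| \leq \log(1/\mu_q)$ (which holds because $\mu_q \leq 1/|\mathcal{X}|$ always).

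For the second step, I would invoke Pinsker's inequality in the form
\[
\mathbb{V}(p,q) \leq \sqrt{2 \ln 2} \cdot \sqrt{\mathbb{D}(q\|p)},
\]
which holds with either ordering of the arguments in the KL divergence (the standard Pinsker bound $\|p-q\|_{TV} \leq \sqrt{D_{\text{nats}}/2}$ is symmetric in this sense, and $\mathbb{V}(p,q) = 2\|p-q\|_{TV}$). Substituting into the previous inequality immediately yields the claim.

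The main obstacle is the intermediate step. A naive pointwise bound $\log(p(x)/q(x)) \leq \log(1/q(x)) \leq \log(1/\mu_q)$ combined with $\sum_x p(x) = 1$ only yields the trivial estimate $\mathbb{D}(p\|q) \leq \log(1/\mu_q)$, which does not vanish as $p \to q$ and is therefore too weak to produce the $\sqrt{\mathbb{D}(q\|p)}$ scaling after Pinsker. Extracting the desired $\mathbb{V}(p,q)$ factor requires exploiting cancellations between the positive and negative contributions in the KL sum, which is exactly what the entropy decomposition above achieves.
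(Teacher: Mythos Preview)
Your two-step strategy is exactly the paper's: bound $\mathbb{D}(p\|q)$ by $\log(1/\mu_q)\,\mathbb{V}(p,q)$, then apply Pinsker's inequality $\mathbb{V}(p,q)\leq\sqrt{2\ln 2}\sqrt{\mathbb{D}(q\|p)}$. The paper, however, does not prove the intermediate reverse-Pinsker inequality; it simply cites it as \cite[Eq.~(323)]{sason2015bounds}.

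Your attempt to derive that intermediate inequality via the entropy decomposition has a genuine gap. The cross term $\sum_x (p(x)-q(x))\log(1/q(x))$ is indeed bounded by $\tfrac{1}{2}\log(1/\mu_q)\,\mathbb{V}(p,q)$ after centering. But ``standard entropy continuity'' does \emph{not} give $|H(q)-H(p)|\leq C\,\mathbb{V}(p,q)$ for any constant $C$ independent of $\mathbb{V}$: the Csisz\'ar--K\"orner/Fannes bounds all carry an additional $-\mathbb{V}\log\mathbb{V}$ or $h(\mathbb{V})$ term. Concretely, for $p=(1,0)$ and $q=(1-\epsilon,\epsilon)$ one has $|H(q)-H(p)|\approx -\epsilon\log\epsilon$ while $\mathbb{V}(p,q)\log|\mathcal{X}|=2\epsilon$, so no bound of the form $|H(q)-H(p)|\leq C\,\mathbb{V}(p,q)$ can hold. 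Thus your decomposition route yields at best $\mathbb{D}(p\|q)\lesssim \log(1/\mu_q)\,\mathbb{V}+\mathbb{V}\log(1/\mathbb{V})$, and after Pinsker the extra term becomes $\sqrt{\mathbb{D}(q\|p)}\,\log(1/\mathbb{D}(q\|p))$, which is strictly larger than the claimed right-hand side for small $\mathbb{D}(q\|p)$.

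A correct direct proof of $\mathbb{D}(p\|q)\leq\log(1/\mu_q)\,\mathbb{V}(p,q)$ avoids the entropy split altogether: with $r_x=p(x)/q(x)\in[0,1/\mu_q]$ and $\phi(r)=r\log r$ convex with $\phi(1)=0$, the chord through $(1,0)$ and $(R,R\log R)$ with $R=1/\mu_q$ gives $\phi(r)\leq (r-1)_+\,\tfrac{R\log R}{R-1}$ on $[0,R]$; summing against $q(x)$ and using $\sum_x q(x)(r_x-1)_+=\mathbb{V}/2$ together with $R\geq 2$ (since $\mu_q\leq 1/|\mathcal{X}|\leq 1/2$) yields the bound. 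This is closer in spirit to the Sason argument the paper cites.
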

\begin{proof}
The result follows from Pinsker's inequality and the following inequality~\cite[Eq. (323)]{sason2015bounds}.
\begin{align}
		\mathbb{D}\left(p \Vert q \right)  \label{equationim}
		 \leq \log \left(\frac {1 }{\mu_{q}} \right)\mathbb{V}\left(p , q \right).
	\end{align}
\end{proof}
\begin{lem} \label{lemnew}
Let $\left(X_i \right)_{i\in \llbracket 1 ,k \rrbracket}$ be arbitrary discrete random variables with joint probability $p_{X_{1:k}}$. We have
$$
\mathbb{D}\left(p_{X_{1:k}} \Vert \prod_{i=1}^k p_{X_{i}} \right) = \sum_{i=2}^{k} I(X_i ; X_{1:i-1}).
$$	
\end{lem}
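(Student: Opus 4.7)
My plan is to prove the identity by applying the chain rule for probability distributions to the first argument of the divergence, and then recognizing each resulting term as a mutual information. Concretely, I will factor $p_{X_{1:k}} = \prod_{i=1}^{k} p_{X_i \mid X_{1:i-1}}$ (using the convention that $p_{X_1 \mid X_{1:0}} = p_{X_1}$), substitute into the definition of the Kullback-Leibler divergence, and exploit the logarithm of a product to split into $k$ separate sums. After marginalizing out the unused coordinates in each sum, each of these $k$ terms takes the form
\[
\sum_{x_{1:i}} p_{X_{1:i}}(x_{1:i}) \log \frac{p_{X_i \mid X_{1:i-1}}(x_i \mid x_{1:i-1})}{p_{X_i}(x_i)},
\]
which is precisely $I(X_i; X_{1:i-1})$.

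The only remaining step is to observe that the $i=1$ term is identically zero, since $p_{X_1 \mid X_{1:0}} = p_{X_1}$ makes the logarithm vanish (equivalently, $I(X_1; X_{1:0}) = 0$ by the convention that conditioning on an empty tuple is trivial). Summing the remaining $k-1$ terms yields exactly $\sum_{i=2}^{k} I(X_i; X_{1:i-1})$, which matches the claim.

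I do not anticipate any real obstacle; the argument is a one-line manipulation once the chain rule is applied, and no convergence or positivity issues arise because the identity is a combinatorial rearrangement of terms that is valid whenever both sides are defined (with the convention from Section~\ref{sec:notations} handling the degenerate case in which some $p_{X_i}(x_i) = 0$ while $p_{X_{1:k}}$ assigns positive mass to a compatible tuple, in which case both sides are $+\infty$). The proof therefore reduces to bookkeeping: write out the chain rule, distribute the logarithm, identify the mutual information terms, and drop the vanishing $i=1$ summand.
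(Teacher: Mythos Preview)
Your proposal is correct and follows essentially the same approach as the paper: the paper invokes the chain rule for Kullback-Leibler divergence to write $\mathbb{D}(p_{X_{1:k}} \Vert \prod_i p_{X_i}) = \sum_{i=1}^k \mathbb{E}_{X_{1:i-1}}[\mathbb{D}(p_{X_i|X_{1:i-1}} \Vert p_{X_i})] = \sum_{i=2}^k I(X_i;X_{1:i-1})$, which is exactly the manipulation you spell out by factoring the joint, distributing the logarithm, and identifying each term as a mutual information.
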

\begin{proof}
We have by the chain rule for relative Kullback-Leibler divergence \cite{Cover91}
	\begin{align*}
		\mathbb{D}\left(p_{X_{1:k}} \Vert \prod_{i=1}^k p_{X_{i}} \right) 
   	    & = \sum_{i=1}^k \mathbb{E}_{X_{1:i-1}} \left[ \mathbb{D}\left(p_{X_{i}|X_{1:{i-1}}} \Vert p_{X_{i}} \right) \right]\\
   	       	    & = \sum_{i=1}^k   \mathbb{D}\left(p_{X_{1:i}} \Vert p_{X_{1:i-1}} p_{X_{i}} \right)\\
   	    & = \sum_{i=2}^{k} I(X_i ; X_{1:i-1}). \tag*{\qedhere}
     	\end{align*}
\end{proof}

\begin{lem} \label{lemnew2}
Let $p$, $q$, and $r$ be distributions over the finite alphabet $\mathcal{X}$. We have
\begin{align*}
	\mathbb{D}\left(p \Vert q \right)  & \leq \log \left(\frac{1}{\mu_q} \right) \sqrt{2 \ln 2} \left[ \sqrt{ \min(\mathbb{D}\left(p \Vert r\right), \mathbb{D}\left(r \Vert p \right))} \right. \\
	& \phantom{---------} + \left. \sqrt{ \min(\mathbb{D}\left(q \Vert r\right), \mathbb{D}\left(r \Vert q \right))} \right],
	\end{align*}
	where $\mu_q \triangleq \displaystyle\min_{x \in \mathcal{X}} q(x)$.
\end{lem}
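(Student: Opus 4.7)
The plan is to combine the reverse Pinsker-type bound from \eqref{equationim} with the triangle inequality for variational distance and standard Pinsker's inequality. Concretely, the starting point is the observation, already recorded in the proof of Lemma \ref{lemnew1}, that
$$\mathbb{D}(p \Vert q) \leq \log\!\left(\tfrac{1}{\mu_q}\right) \mathbb{V}(p,q),$$
which converts a Kullback-Leibler divergence on the left into a variational distance on the right, at the price of the factor $\log(1/\mu_q)$ that depends only on $q$. This is what will eventually produce the $\log(1/\mu_q)$ prefactor in the statement.

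Next, I would invoke the triangle inequality for the variational distance (which is a bona fide metric on the simplex), introducing the auxiliary distribution $r$:
$$\mathbb{V}(p,q) \leq \mathbb{V}(p,r) + \mathbb{V}(r,q).$$
This is where the arbitrary reference distribution $r$ enters, and it is the only nontrivial structural step; the remainder is just bounding each of the two variational-distance terms individually.

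Finally, I would upper-bound each variational distance by a Kullback-Leibler divergence via Pinsker's inequality, $\mathbb{V}(p,r) \leq \sqrt{2 \ln 2 \cdot \mathbb{D}(p \Vert r)}$. Since $\mathbb{V}$ is symmetric in its two arguments while $\mathbb{D}$ is not, I may apply Pinsker in whichever direction is tighter, which yields
$$\mathbb{V}(p,r) \leq \sqrt{2 \ln 2}\,\sqrt{\min(\mathbb{D}(p \Vert r),\mathbb{D}(r \Vert p))},$$
and analogously for $\mathbb{V}(q,r)$. Chaining the three inequalities and factoring out $\log(1/\mu_q)\sqrt{2 \ln 2}$ yields the claimed bound.

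There is no real obstacle here: each ingredient (inequality \eqref{equationim}, the metric property of $\mathbb{V}$, Pinsker's inequality, and the symmetry of $\mathbb{V}$) is standard, and the only thing to be careful about is the finite/infinite case, which is handled by the convention stated in Section \ref{sec:notations} that allows the right-hand side to be $+\infty$ whenever either KL divergence in each $\min$ is infinite; if $\mathbb{D}(p\Vert q)=+\infty$ one may simply take $r$ with $\mathbb{D}(p\Vert r)=\mathbb{D}(q\Vert r)=+\infty$ so the bound holds trivially, and otherwise all terms are finite and the chain of inequalities goes through as above.
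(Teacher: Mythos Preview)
Your proposal is correct and follows exactly the paper's approach: the paper's proof consists of the single sentence ``The result follows from \eqref{equationim}, the triangle inequality for the variational distance, and Pinsker's inequality,'' which is precisely your chain of three inequalities. One small remark: in your discussion of the infinite case you speak of ``taking $r$'' with certain properties, but $r$ is given in the statement, not chosen; the correct observation (which the paper handles by a blanket remark at the start of the appendix) is that if any KL divergence on the right is infinite the bound is trivial, and if $\mathbb{D}(p\Vert q)=+\infty$ then necessarily $\mu_q=0$, so the right-hand side is $+\infty$ as well.
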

\begin{proof}
The result follows from \eqref{equationim}, the triangle inequality for the variational distance, and Pinsker's inequality.
\end{proof}
\begin{lem} \label{corent}
Let $p$, $q$ be distributions over the finite alphabet $\mathcal{X}$. Let $H(p)$ and $H(q)$ denote the Shannon entropy associated with $p$ and $q$, respectively. Provided that $\mathbb{D}(p||q)$ or $\mathbb{D}(q||p)$ is small enough, we have 
\begin{align*}
& |H(q) - H(p)| \leq  D \log \left( \frac{|\mathcal{X}|}{D}\right),
\end{align*}
where $D \triangleq \sqrt{2 \ln2} \sqrt{ \min ( \mathbb{D}(p||q), \mathbb{D}(q||p)) }$.
\end{lem}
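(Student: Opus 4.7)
The plan is to reduce the entropy difference bound to a continuity-of-entropy statement in variational distance via Pinsker's inequality, and then invoke a Fannes-type inequality. First, by Pinsker's inequality we have $\mathbb{V}(p,q) \leq \sqrt{2 \ln 2}\sqrt{\mathbb{D}(p\|q)}$ and, symmetrically, $\mathbb{V}(p,q) \leq \sqrt{2 \ln 2}\sqrt{\mathbb{D}(q\|p)}$, so that $\mathbb{V}(p,q) \leq D$ with $D$ as defined in the statement.

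Next, I would invoke a standard continuity bound for the Shannon entropy: whenever $\mathbb{V}(p,q) \leq 1/e$ (or any similar smallness condition under which the function $x \mapsto x \log(|\mathcal{X}|/x)$ is nondecreasing), one has
\[
|H(p) - H(q)| \leq \mathbb{V}(p,q) \, \log\!\left(\frac{|\mathcal{X}|}{\mathbb{V}(p,q)}\right),
\]
which is the Fannes/Csisz\'ar-type inequality (see, e.g., the classical entropy continuity lemma). Since the map $x\mapsto x\log(|\mathcal{X}|/x)$ is monotonically increasing on $(0,|\mathcal{X}|/e]$, combining this with the Pinsker step yields $|H(p)-H(q)| \leq D\log(|\mathcal{X}|/D)$, which is exactly the claim. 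The ``$\mathbb{D}(p\|q)$ or $\mathbb{D}(q\|p)$ small enough'' hypothesis precisely guarantees $D \leq |\mathcal{X}|/e$ so that monotonicity can be applied.

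The main obstacle is the choice of which form of the Fannes inequality to cite and matching it to the precise constant in the statement. The subtlety is that the naive Fannes inequality has the form $|H(p)-H(q)| \leq \mathbb{V}(p,q)\log|\mathcal{X}| + h(\mathbb{V}(p,q))$, with a binary entropy term; one must group these two terms and bound them by the single expression $\mathbb{V}(p,q)\log(|\mathcal{X}|/\mathbb{V}(p,q))$, valid for small enough variational distance. Once that clean form is established, the rest is a monotonicity argument, and no further calculation is needed.
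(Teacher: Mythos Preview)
Your proposal is correct and matches the paper's proof essentially verbatim: the paper invokes the Csisz\'ar--K\"orner entropy continuity lemma (their Lemma~2.7), Pinsker's inequality, and the monotonicity of $x\mapsto -x\log x$ on $(0,e^{-1}]$, which together yield exactly the argument you outline. Your identification of the ``small enough'' hypothesis as the condition needed for the monotonicity step is also on point.
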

\begin{proof}
The result follows by \cite[Lemma 2.7]{bookCsizar}, Pinsker's inequality, and because $x \mapsto -x \log x$ is increasing over $]0,e^{-1}]$.
\end{proof}
\section{Proof of Remark \ref{remB}} \label{AppA}
We first observe that $|\mathcal{V}_{X}| / N 
 \xrightarrow{N \to \infty} {H(X)}$. Then, for $i \in \llbracket 1 , k \rrbracket$, \begin{align*}
&\mathbb{D} (\widetilde{p}_{Y_i^{1:N}} || q_{Y^{1:N}} ) \\
& \stackrel{(a)}{\leq} N \log \left( \frac {1 }{\mu_{q_Y}} \right) \sqrt{2 \ln2} \sqrt{\mathbb{D} \left( q_{Y^{1:N}}|| \widetilde{p}_{Y_i^{1:N}}    \right)}\\
& \stackrel{(b)}{\leq} N\log \left(\frac {1 }{\mu_{q_Y}} \right)\sqrt{2 \ln2} \sqrt{\delta_N^{(1)}},
\end{align*}
where $(a)$ holds by Lemma \ref{lemnew1} in Appendix \ref{appdiv} with $\mu_{q_{Y^{1:N}}} =  \mu_{q_Y}^N$, $(b)$ holds by the chain rule  and Lemma~\ref{lem0}.

\bibliographystyle{IEEEtran}
\bibliography{polarwiretap}
\end{document}